\xpatchcmd{\algorithmic}{\setcounter}{\algorithmicfont\setcounter}{}{}
\providecommand{\algorithmicfont}{}
\providecommand{\setalgorithmicfont}[1]{\renewcommand{\algorithmicfont}{#1}}
\newcommand{\bb}[1]{\mathbb{#1}}
\renewcommand{\phi}{\varphi}
\newcommand{\ev}[1]{\mathcal{F}_{#1}}
\newcommand{\glob}[1]{\mathcal{G}_{#1}}
\newcommand{\until}[1]{\mathcal{U}_{#1}}
\newcommand{\somewhere}[1]{\diamonddiamond_{#1}}
\newcommand{\everywhere}[1]{\boxbox_{#1}}
\newcommand{\surround}[1]{\mathcal{S}_{#1}}
\renewcommand{\vec}[1]{\mathbf{#1}}
\keywords{Spatio-Temporal Logic, Signal Temporal Logic, Quantitative Semantics, Monitoring Algorithms, Real-Time Verification, Turing Pattern, Bike Sharing, Statistical Model Checking}
\begin{document}

\setalgorithmicfont{\footnotesize}

\title[Monitoring Spatio-Temporal Properties with SSTL]{Qualitative and Quantitative Monitoring of Spatio-Temporal Properties with SSTL\rsuper*}
\titlecomment{{\lsuper*}Work partially funded by the EU-FET project QUANTICOL (nr. 600708) and the  FRA-UNITS.
 We thank Diego Latella and Ezio Bartocci for the discussions, EB for sharing the code to generate traces of the Turing Pattern model and Cheng Feng for the data of the Bike Sharing system.}

\author[L.~Nenzi]{Laura Nenzi\rsuper{a}}
\author[L.~Bortolussi]{Luca Bortolussi\rsuper{b}}
\author[V.~Ciancia]{Vincenzo Ciancia\rsuper{c}}	
\author[M.~Loreti]{Michele Loreti\rsuper{d}}	
\author[M.~Massink]{Mieke Massink\rsuper{e}}

\address{{\lsuper{a}}Institute of Computer Engineering, University of Technology, Vienna, Austria}
\address{{\lsuper{b}}DMG, University of Trieste, Trieste, Italy}
\address{\lsuper{c,e}Istituto di Scienza e Tecnologie dell'Informazione ``A. Faedo'' - CNR, Pisa, Italy}	
\address{\lsuper{d}Scuola di Scienze e Tecnologie, University of Camerino, Camerino, Italy}	


%

\begin{abstract}


In spatially located, large scale systems, time and space dynamics interact and drives the behaviour. Examples of such systems can be found in many smart city applications and Cyber-Physical Systems.
In this paper we present the {\it Signal Spatio-Temporal Logic} (SSTL), a modal logic that can be used to specify spatio-temporal properties of {\it linear time} and {\it discrete space} models. The logic is equipped with a {\it Boolean} and a {\it quantitative semantics} for which efficient monitoring algorithms have been developed. As such, it is suitable for real-time verification of both white box and black box complex systems. These algorithms can also be combined with stochastic model checking routines. SSTL combines the until temporal modality with two spatial modalities, one expressing that something is true {\it somewhere} nearby and  the other capturing the notion of being {\it surrounded} by a region that satisfies a given spatio-temporal property. The monitoring algorithms are implemented in an open source Java tool. We illustrate the use of SSTL analysing  the formation of patterns in a Turing Reaction-Diffusion system and spatio-temporal aspects of a large bike-sharing system. 
\end{abstract}
\maketitle




\section{Introduction}
\label{sec:intro}

The ongoing digital revolution is concretising itself in scenarios where a large number of computational devices, located in space, interact among each other and with humans in an open and mutating environment. Designing and controlling such systems, and guaranteeing some sort of correctness at run-time, are incredibly challenging problems.

In this paper, we consider spatially located systems, where time and space  dynamics interact and drive the system behaviour. Examples are  Cyber-Physical Systems, like pacemaker devices controlling the rhythm of heart beat, and  Collective Adaptive Systems, such as  bike sharing systems in smart cities or the guidance of crowd movement in emergency situations. 
%
%

Controlling and designing  spatio-temporal behaviours requires appropriate formal tools to describe such properties, and to monitor and verify whether, and how robustly, they are satisfied by a system.  Formal methods play a central role in these tasks, providing both formal languages to specify spatio-temporal models and properties, and algorithms to verify such properties on models and on traces of monitored systems. In this paper we consider large and complex systems for which standard model checking procedures, i.e. those that check in an exhaustive way which states of a system model satisfy a given property, are not feasible.  For these kinds of systems simulation and testing are currently the preferred verification methods. This is the area of monitoring (run-time verification)~\cite{Maler2004, Donze2013}, where an individual simulation trace $\vec x$ of a system is checked against a modal logic formula, using an automatic verification procedure.



\subsection*{Related work.} 
Logical specification and monitoring of {\em temporal} properties of systems is a well-developed area. Here we mention Signal Temporal Logic (STL)~\cite{Donze2013, Maler2004}, an extension of Metric Interval Temporal Logic (MITL)~\cite{Alur1996}, describing linear-time properties of real-valued signals. STL has monitoring routines both for its  Boolean and quantitative semantics. The latter provides a measure of the satisfaction degree of a formula~\cite{Maler2004,Donze2010,Donze2013}. In monitoring algorithms verification is performed over single trajectories.

Much work can be found also in the area of spatial logic, mostly to reason about continuous space (so-called \emph{topological} spatial logics, see~\cite{handbookSP}) and focussing on theoretical issues such as expressivity and decidability. Instead, model checking and monitoring procedures have a more recent history. Some logics have been proposed to specify properties of networks of processes   \cite{Reif1985},  for (graph) rewriting theories \cite{rew1}, \cite{rew2}, bigraphs \cite{bigraph} and data structures such as graphs \cite{que} and heaps \cite{heaps1}.  However these logics have often high computational cost, and sometimes they are even undecidable. 

In the present work, we will focus on a notion of {\em discrete} space. The reason is that  many applications, such as bike sharing systems or  meta-population epidemic models~\cite{mari_modelling_2012}, are naturally framed in a discrete spatial structure. Moreover, in many circumstances continuous space is abstracted as a grid or as a mesh. This is the case, for instance, in many numerical methods to simulate the spatio-temporal dynamics of Partial Differential Equations (PDE). Hence, this class of models is naturally dealt with by checking properties on such a discretisation. 

One of the recent spatial logics that has inspired the work in the current paper is the {\it Spatial Logic for Closure Spaces} (SLCS)~\cite{ciancia2014,Ci+16}.  SLCS is a logic proposed for both a discrete and topological notion of space and is based on the theory of (quasi discrete) closure spaces~\cite{Gal99,Gal03}. In SLCS some interesting spatial operators have been introduced such as the (unbounded) surround operator and the near-operator inspired by the closure operator of closure spaces and many derived operators among which well-known operators such as somewhere and everywhere. 

Even more challenging is the combination of spatial and temporal operators~\cite{handbookSP} and few works exist with a practical perspective. Among those, SLCS has been extended with a branching time temporal logic in STLCS~\cite{CGLLM15,BertinoroTutorial16} leading to an implementation of a spatio-temporal model checker. STLCS has been applied in the context of smart public transportation, such as public buses~\cite{CianciaGGLLM15} and bike sharing systems~\cite{CLMP15,CLMPV16}. 
The temporal aspects are introduced as ``snapshot'' models, where each snapshot represents the situation of the discrete space at a particular point in time. 
Differently from the logic considered in this paper, STLCS is only equipped with a Boolean semantics. 
%


SpaTeL~\cite{bartocci2015} is a {\em linear-time} spatio-temporal logic that combines the temporal logic STL with the {\em Tree Spatial Superposition Logic} (TSSL)~\cite{bartocci2014}.  It has been provided with both a Boolean semantics as well as with a quantitative semantics. 
The spatial logic TSSL is suited to specify properties of quad trees. These are spatial data structures that are constructed by recursively partitioning a finite space into uniform quadrants.  
TSSL provides a statistical way to describe the distribution of discrete states in a particular partition of the space. Its spatial logic operators can be used to zoom in and zoom out of particular areas of interest. In this way, very complex spatial structures may be captured, but at the price of a complex formulation of spatial properties.
 The properties can in practice only be learned from some template image using supervised learning algorithms and are not human-readable, even too long to be displayed.
In contrast, the sub-language for spatial properties that we adopt in this paper
is descriptive and topological in nature.

\subsection*{Contributions.} 
In this work we present a formal definition of the {\it Signal Spatio-Temporal Logic} (SSTL). 
SSTL integrates the temporal modalities of STL with two spatial modalities; the {\it somewhere} operator and a novel {\em bounded} version of the topological {\em surrounded} operator. The surrounded operator is inspired by the topological spatial until operator of SLCS~\cite{ciancia2014}, adding useful metric bounds on spatial distances. A key point of our logic is that the spatial properties are not only considered as atomic propositions of a temporal logic as in SpaTeL. This means that the temporal and spatial operators can be arbitrarly nested, permitting the specification of complex spatio-temporal behaviours.
 The logic is provided with a Boolean and a quantitative semantics, both equipped with efficient monitoring algorithms.  The major challenge is to monitor the bounded surrounded operator for the quantitative semantics, for which we propose a novel fixed point algorithm, discussing in detail its correctness and computational cost. 
A prototype tool has been developed in \textsf{Java} and can be downloaded from \url{https://github.com/Quanticol/jsstl}. 

This article is an extension of the conference papers \cite{VALUETOOLS14, RV15}. Specifically, the description of the monitoring algorithms has been enhanced, describing the procedure in detail, including proofs of all statements (in particular, correctness of the monitoring algorithm for the \emph{surrounded} operator).  The Turing Reaction-Diffusion case study of \cite{RV15} has been used as a running example throughout the paper, to illustrate the intuition and formal semantics of the new spatial operators. 
We show how SSTL can deal with stochastic scenarios, considering the effects of external perturbations on the Turing system, adding a white Gaussian noise to the set of equations. To this purpose SSTL monitoring algorithms are combined with statistical model checking techniques, following up on earlier works on such combinations with STL~\cite{TCS2015}.
Furthermore, we illustrate the application of SSTL, and the verification methods we have developed, on a larger case study concerning issues of the spatio-temporal distribution of bikes in a bike-sharing system, modelled as a Continuous Time Markov Chain (CTMC).

\subsection*{Paper structure.} 
The paper is organised as follows: Section~\ref{sec:basics} introduces some background concepts on STL and on discrete topologies. 
Section~\ref{sec:sstl} presents the syntax and the semantics of SSTL and the Turing system running example.
Section~\ref{sec:alg} introduces the monitoring algorithms 
and the implementation details of the model checker. 
Section~\ref{sec:stochAnalysis} describes the extension of the monitoring procedure to stochastic systems and its applications on the running example.
In Section~\ref{sec:bikeSharing}, the logic SSTL is applied to a Bike-Sharing system, while conclusions are drawn in Section~\ref{sec:disc}. Proofs of the main results are provided in Appendix~\ref{sec:appendix}.

%
%
%
%
%



\section{Background material}
\label{sec:basics}

\newcommand{\clop}{\mathcal{C}}

In this section we provide some general background material and notation directly relevant to the results developed in subsequent sections. 
\subsection{Weighted undirected graphs.}
We consider discrete models of space that can be represented as finite undirected graphs. Edges of such graphs are equipped with a positive weight, providing a metric structure to the space in terms of shortest path distances. The weight will often represent the distance between two nodes. This is the case, for instance, when the graph is a discretisation of continuous space. However, the notion of weight is more general, and may be used to encode different kinds of  information. As an example, in a model where nodes represent locations in a city and edges represent streets, the weight could represent the average travelling time, which can be different between two paths with the same physical length but different levels of congestion or different number of traffic lights.

\begin{defi}
A  (positive) {\bf weighted undirected graph} is   a tuple  $G=(L,E,w)$, where:
\begin{itemize}
\item  $L$ is a finite set of locations (nodes), $L \not= \emptyset$
\item $E \subseteq L \times L$ is a symmetric relation, namely the set of connections between nodes (edges),
\item $w :E \rightarrow \mathbb{R}_{> 0}$ is a function that returns the positive cost/weight of each edge.
\end{itemize}
\end{defi}


We will use both $(\ell,\ell')\in E$ and $\ell\;E\;\ell'$ as equivalent notations for an edge in the relation $E$. The space is equipped with a distance metric.
\begin{defi}\label{wdist}
For $\ell, \ell' \in L$, the {\bf weighted distance} is defined as \[d(\ell,
  \ell'):= \min \{  \sum_{e \in \sigma} w(e)\; |\; \sigma \mbox{ is a path
    between } \ell \mbox{ and }  \ell' \}. \]
\end{defi}
This means that the weighted distance is a metric  that returns  the cost of the shortest path, for each pair of nodes of the graph; where the shortest path is the path that minimises the sum of the weights of the edges of the path. 

\begin{rem}
Let $E^{*}$ be the set containing all the pairs of connected locations, i.e. the transitive closure of $E$.
If $L$ is finite, and if we define an order on the locations, $L= \{ \ell_{1}, ..., \ell_{i},... \}$, then the weighted distance can be seen as a matrix $(d)_{(\ell_{i},\ell_{j}) \in E^{*}}$, where $d[i,j]$ is the distance between $\ell_{i}$ and $\ell_{j}$. 
\end{rem}

Furthermore,  we denote by $L^{\ell}_{[ d_{1}, d_{2} ]}$  the set of locations $\ell'$ at a distance between $d_1$ and  $d_2$ from $\ell$, formally
$$L^{\ell}_{[ d_{1}, d_{2} ]}
:= \{\ell' \in L  \:|\:  d_1 \leq d(\ell,\ell') \leq d_2, \mbox{ with } d_1, d_2 \geq 0 \}.$$

\subsection{Closure spaces.}
In this work we focus on finite graphs as an algorithmically tractable representation of space. However, \emph{spatial} logics traditionally use more abstract structures, very often of a topological nature (see \cite{handbookSP} for an exhaustive reference). Indeed, the logic we propose can also be defined in a more abstract setting. We can frame a generalised notion of topology on graphs within the so-called \emph{Cech closure spaces}, a superclass of topological spaces allowing a clear formalisation of the semantics of the spatial surrounded operator on both topological and graph-like structures (see \cite{ciancia2014,BertinoroTutorial16,Ci+16} and the references therein). As an example, we mention the topological notion of  \emph{external boundary} of a set of nodes $A$, instantiated on weighted graphs as the set of nodes directly connected to an element of $A$ but not part of it.

\begin{defi}
Given a subset of locations $A \subseteq L$,  we define the {\it external boundary of A} as:
\[B^{+}(A) := \{ \ell \in L \mid \ell \notin A  \wedge\exists \ell' \in A \mbox{ s.t. }  (\ell',\ell) \in E \}. \]
\end{defi}

\subsection{Signal Temporal Logic. }
{\it Signal Temporal Logic} (STL)~\cite{Donze2013, Maler2004} is a linear dense time-bounded temporal logic that extends {\it Metric Interval Temporal Logic} (MITL)~\cite{Alur1996} with a set of atomic propositions $\{\mu_{1}, ..., \mu_{m}\}$ that specify properties of real valued traces, therefore mapping real valued traces into Boolean values.

Let $\vec x : \mathbb{T} \rightarrow \mathbb{D} $ be a trace that describes an evolution of our system, where  $\mathbb{T}= \mathbb{R}_{\geq 0}$ represents continuous time and $\mathbb{D}= \mathbb{D}_{1} \times \cdots \times \mathbb{D}_{n} \subseteq \mathbb{R}^{n}$ is the domain of evaluation; then
each $\mu_{j}:  \mathbb{D}  \rightarrow \mathbb{B}$ is of the form $\mu_{j}(x_1,\ldots,x_n) \equiv (f_{j}(x_1,\ldots,x_n) \geqslant 0)$, where $f_{j}:  \mathbb{D}  \rightarrow \mathbb{R}$ is a (possibly non-linear) real-valued function and $\mathbb{B}=\{\mathit{true} ,\mathit{false}\}$ is the set of Boolean values. 
The projections $x_{i}: \mathbb{T} \rightarrow \mathbb{D}_{i}$ on the $i^{th}$ coordinate/variable are called the {\it primary signals}
and, for all $j$, the function $s_{j}: \mathbb{T} \rightarrow \mathbb{R}$, defined by point-wise application of $f_{j}$ to the image of $\vec x$, namely  $s_{j}(t):= f_{j}(x_{1}(t), ..., x_{n}(t))$, is called the {\it  secondary signal} \cite{Donze2010}.

The syntax of STL is given by

{\hfill {
$
\varphi :=  \mu  \mid   \neg \varphi  \mid  \varphi_{1} \wedge \varphi_{2} \mid  \varphi_{1} \: \mathcal{U}_{[t_{1},t_{2}]} \: \varphi_{2} 
$,
\hfill
}
}

\noindent
where $\mu$ is an atomic proposition, conjunction and negation are the standard Boolean
connectives, $[t_{1},t_{2}]$ is a real positive dense interval with $t_{1} < t_{2}$ and $ \mathcal{U}_{[t_{1},t_{2}]}$ is the {\it bounded until} operator. 
The latter specifies that formula $\varphi_1$ holds until, at time $t\in [t_{1},t_{2}]$, formula $\varphi_2$ is satisfied.
This operator can be used to define the operators \emph{eventually} and \emph{always}.
The {\it eventually} operator  $\mathcal{F}_{[t_{1},t_{2}]}$ and  the {\it always } operator $\mathcal{G}_{[t_{1},t_{2}]}$ can be defined as usual with $\top$ denoting {\em true}:
 $\mathcal{F}_{[t_{1},t_{2}]}  \varphi := \top \mathcal{U}_{[t_{1},t_{2}]} \varphi$, $\mathcal{G}_{[t_{1},t_{2}]} \varphi := \neg \mathcal{F}_{[t_{1},t_{2}]} \neg \varphi.$
 Formula $\mathcal{F}_{[t_{1},t_{2}]}\varphi$ states that $\varphi$ is eventually satisfied at a time $t\in [t_1,t_2]$, while $\mathcal{G}_{[t_{1},t_{2}]} \varphi$ indicates that at each time $t\in [t_1,t_2]$, $\varphi$ is satisfied. 
 
%



\section{SSTL: Signal Spatio-Temporal Logic}
\label{sec:sstl}

{\it Signal Spatio-Temporal Logic}  (SSTL) is a spatial extension of  STL~\cite{Donze2013, Maler2004} with two spatial modalities: the \emph{bounded somewhere} operator $\somewhere{[d_{1},d_{2}]}$ and the {\it bounded surrounded } operator $\surround{[d_1, d_2]}$. SSTL is interpreted on spatio-temporal, real-valued signals.  In this section, we first introduce the signals and then present the syntax and the Boolean and quantitative semantics of SSTL, using a pattern formation model as running example.

\subsection{Spatio-Temporal Signals.} We define signals with continuous time and discrete space. In particular, the space is represented by a weighted undirected graph $G= (L,E,w)$, defined in Section \ref{sec:basics}, while the time domain $\mathbb{T}$  will usually be the real-valued interval $[0,T]$, for some $T>0$. Formally,
a spatio-temporal signal, is a function
$\vec s: \mathbb{T} \times L \rightarrow \mathbb{D}$, where $L$ is the set of locations and $\mathbb{D}$ is the domain of evaluation. $\mathbb{D}$  is a subset of $\mathbb{R}^{*}= \mathbb{R} \bigcup \{ +\infty, -\infty \}$. Signals with  $\mathbb{D}= \mathbb{B} = \{ 0, 1\}$  are called Boolean signals, whereas those where $\mathbb{D} = \mathbb{R}^{*}$ are called real-valued or quantitative signals.

A spatio-temporal trace is a function
$\vec x: \mathbb{T} \times L \rightarrow  \mathbb{R}^n$ s.t. ${ \vec x (t, \ell)}=(x_{1}(t, \ell), \cdots, x_{n}(t, \ell)) \in \mathbb{D}$, where each $x_{i}:  \mathbb{T} \times L \rightarrow \mathbb{R}$, for $i= 1,...,n$, is the projection on the $i^{th}$ coordinate/variable.
 Note that these projections have the form of quantitative signals. They are called the {\it primary signals} of the trace. We can thus see the trace as a set of primary signals. 
 This means that SSTL can be used to specify spatio-temporal properties of such traces.
Spatio-temporal traces can be obtained  by simulating a stochastic model or by computing the solution of a deterministic system. In previous work~\cite{VALUETOOLS14} the framework of patch-based population models is discussed, which generalise population models and are a natural setting from which both stochastic and deterministic spatio-temporal traces of the considered type emerge. An alternative source of traces are measurements of real systems. For the purpose of this work, it is irrelevant which is the source of traces, as we are interested in their off-line monitoring. 

Spatio-temporal traces are then converted into spatio-temporal Boolean or quantitative signals in the following way: 
similarly to the case of STL, each \emph{atomic predicate} $\mu_j$ is of the form $\mu_j(x_1,\ldots,x_n) \equiv (f_j(x_1,\ldots,x_n) \geq 0)$, for some function $f_j : \mathbb D \to \mathbb R$, where $(x_1,\ldots,x_n)$ are the primary signals. Each atomic proposition gives rise to a spatio-temporal signal. In the Boolean case, one may define function $s_j: \mathbb{T} \times L \rightarrow \mathbb{B}$; given a trace $\vec{x}$, this gives rise to the Boolean signal $s_j(t,\ell) = \mu_j(\vec{x}(t,\ell))$ by point-wise lifting. Similarly, a quantitative signal is obtained as the real-valued function $s_j: \mathbb{T} \times L \rightarrow \mathbb{R}$, with $s_j(t,\ell) = f_j(\vec{x}(t,\ell))$. These signals, derived from the atomic proposition, are called the {\it secondary signals}.

When the space $L$ is finite, as in our case, we can represent a spatio-temporal signal as a finite collection of temporal signals. More specifically, the signal $s(t,\ell)$ can be equivalently represented by the collection $\{s_\ell(t)~|~\ell\in L\}$. We will stick mostly to this second notation in the following, as it simplifies the presentation. 

%
%
%
%
%
\subsection{SSTL Syntax. }
 The syntax of SSTL is given by
\[
\varphi :=  \mu \mid  \neg \varphi \mid  \varphi_{1} \wedge \varphi_{2} \mid  \varphi_{1} \: \until{[t_{1},t_{2}]} \: \varphi_{2}\mid \ \somewhere{[d_{1},d_{2}]} \varphi \mid  \varphi_{1} \: \surround{[ d_{1}, d_{2}]}  \varphi_{2}.
\]
Atomic predicates, Boolean operators, and the time bounded until operator $\until{[t_{1},t_{2}]} $ are those of STL. 
The new spatial operators are the {\it somewhere} operator,  $\somewhere{[d_{1},d_{2}]} $,  and  the {\it bounded surrounded} operator $\surround{[ d_{1}, d_{2}]}$, where $[d_{1},d_{2}]$ is a closed real interval with $d_{1} < d_{2}$. We can derive also the {\it everywhere} operator  as $\everywhere{[d_{1},d_{2}]} \varphi := \neg  \somewhere{[d_{1},d_{2}]} \neg \varphi$.
%

The somewhere and the everywhere operators were inspired by the modal operators of the {\it Multiprocess Network Logic} \cite{Reif1985}; the idea originated from the necessity to describe behaviours at a certain distance from a specific point, e.g., ``from a bike sharing station, in a radius of 100 meters, there are more than 30 bikes'' or ``in all the positions around my location, at a distance less than 1 km, there are no infected individuals''.  
Formally, the spatial somewhere operator  $\somewhere{[d_{1}, d_{2}]} \varphi$ requires $\varphi$ to hold in a location reachable from the current one with a cost 
greater than or equal to $d_{1}$ and less than or equal to $d_{2}$. 
The cost is given by the shortest weighted distance between the locations, i.e. the sum of the weights of the edges of the shortest path (Def.~\ref{wdist}).
We use the word ``cost''  to distinguish it from the classical spatial notion of distance. Indeed, we can have two streets with the same distance but different travel time due to the presence of traffic lights or congestion. 
In Figure~\ref{fig:graph}, representing a regular grid where the distance between adjacent nodes is 1, we provide some examples of spatial properties. In the graph of the figure, the orange point satisfies the property $\boldsymbol{ \: \diamonddiamond_{[3,5]} \: {\color{red!50} pink}}$. Indeed, there exists a point at a distance 4 from the orange point that satisfies the pink property.
The {\it everywhere} operator  $\everywhere{[d_{1},d_{2}]} \varphi := \neg  \somewhere{[d_{1},d_{2}]}  \neg \varphi$ requires $\varphi$ to hold in {\em all} the locations reachable from the current one with a total cost between $d_{1}$ and $d_{2}$. 
In Figure~\ref{fig:graph},  the orange point of the graph satisfies the property $\boldsymbol{ \: \boxbox_{[2,3]} \:
{\color{yellow!90!black}yellow}}$. Indeed, all the points at a distance between 2 and 3 from the orange point satisfy the yellow property.  

%
%
%

The (bounded) surrounded operator $\varphi_{1} \: \surround{[ d_{1}, d_{2}]}  \varphi_{2}$ is inspired by the \emph{surrounded} operator of the {\it Spatial Logic for Closure Spaces} SLCS (see ~\cite{ciancia2014,Ci+16}), and it is a spatial interpretation of the temporal \emph{until} connective.
It expresses the topological notion of being surrounded by a $\varphi_2$-region, while being in a $\varphi_{1}$-region,  with additional metric constraints.
Informally speaking, the intended meaning of $\varphi_{1} \: \surround{[ d_{1}, d_{2}]}  \varphi_{2}$ is that one cannot escape from a $\varphi_{1}$-region without passing from a node that satisfies $\varphi_2$ and, in any case, one has to reach a $\varphi_2$-node at a distance between $d_{1}$ and $d_{2}$. An example drawn from analysis of bike sharing facilities in \emph{smart cities} is the property ``there are no bikes in the station where I am, but all the bike stations directly connected with this one have at least one bike available, and are located at a distance less than 100 meters from here''. 

The surrounded operator makes the logic strictly more expressive. Whenever its distance bounds are trivial (i.e., equal to $[0,\infty)$), it allows us to express \emph{global} properties, depending upon the state of the whole system and not only on the  local neighbourhood of each point.
This version of the surrounded operator with non-trivial metric bounds adds an additional level of expressivity, allowing us to restrict the attention to subregions of the space defined by the distance constraints. In any case, the topological properties captured by the surrounded operator cannot be captured by the simpler somewhere and everywhere operators, which in turn cannot be defined in terms of the surrounded operator, due to the effect of metric bounds.

Formally, the bounded variant of the surrounded formula, i.e. $\varphi_{1} \: \surround{[ d_{1}, d_{2}]}  \varphi_{2}$, is true in a location $\ell$, when $\ell$ belongs to a set of locations $A$ satisfying $\varphi_1$ and at a distance less than $d_{2}$ from $\ell$, the external boundary $B^{+}(A)$ of $A$ must contain only locations satisfying $\varphi_2$. Furthermore, locations in $B^{+}(A)$ must be reached from $\ell$ with a cost between $d_{1}$ and $d_2$. 
 $B^{+}(A)$ is the set of all the locations that do not belong to A but that are directly connected with a location in $A$.
In Figure~\ref{fig:graph},   the green points satisfy  $\boldsymbol{ {\color{green!80!black}green} \: \mathcal{S}_{[0,100]} \:{\color{blue!70!}blue} }$. Indeed, for each green point we can find a region that contains the point, such that all its points are green and all the points connected with an element that belongs to the region are blue and satisfy the metric constraint. Instead, the property $\boldsymbol{ {\color{green!80!black}green} \: \mathcal{S}_{[2,3]} \:{\color{blue!70!}blue} }$  is satisfied only by the dark green point. The reason is that such a dark green point is the only point for which there exists a region (the green region) such that all its elements are at a distance less than 3 from it  and are green; and all the elements of  the external boundary (the blue region) are  at a distance between 2 and 3 from it.

\begin{figure}[H]
\center
\begin{tikzpicture}{}

\foreach \i /\j / \fc / \bc in { 
	1/1/orange!80/orange!50, 
	2/1/black/white,
	3/1/yellow!90!black/yellow!50, 
	4/1/yellow!90!black/yellow!50, 
	5/1/black/white,
	6/1/blue!50/blue!20,
	7/1/blue!50/blue!20,
	8/1/blue!50/blue!20,
	9/1/black/white,
	1/2/black/white,
	2/2/yellow!90!black/yellow!50, 
	3/2/yellow!90!black/yellow!50, 
	4/2/black/white,	
	5/2/blue!50/blue!20,
	6/2/green!60/green!30,
	7/2/green!60/green!30,
	8/2/green!60/green!30,
	9/2/blue!50/blue!20,
	1/3/yellow!90!black/yellow!50, 
	2/3/yellow!90!black/yellow!50, 
	3/3/black/white,
	4/3/black/white,
	5/3/blue!50/blue!20,
	6/3/green!60/green!30,
	7/3/green!60!black/green!80!black,
	8/3/green!60/green!30,
	9/3/blue!50/blue!20,
	1/4/yellow!90!black/yellow!50, 
	2/4/red!50/red!20,
	3/4/black/white,
	4/4/black/white,
	5/4/blue!50/blue!20,
	6/4/green!60/green!30,
	7/4/green!60/green!30,
	8/4/green!60/green!30,
	9/4/blue!50/blue!20,
	1/5/black/white, 
	2/5/black/white, 
	3/5/black/white,
	4/5/black/white,
	5/5/black/white,
	6/5/blue!50/blue!20,
	7/5/blue!50/blue!20,
	8/5/blue!50/blue!20,
	9/5/black/white
	} {
		\node [circle,draw=\fc,fill=\bc,thick,inner sep=0pt,minimum size=4mm] at (\i*0.8,\j*0.8) (node\i\j) {};
}

\foreach \i [evaluate = \j as \ipp using int(\i+1)] in {1,2,3,4,5,6,7,8,9} {
	\foreach \j [evaluate = \j as \jpp using int(\j+1)] in {1,2,3,4,5} {
		\ifnum\j<5
		\draw[<->] (node\i\j) -- (node\i\jpp);
		\fi
		\ifnum\i<9
		\draw[<->] (node\i\j) -- (node\ipp\j);
		\fi
	}
}
\end{tikzpicture}
\caption[Example of spatial properties]{Example of spatial properties. The orange point satisfies $\boldsymbol{ \: \diamonddiamond_{[3,5]} \: {\color{red!50} pink}}$. The orange point satisfies $\boldsymbol{ \: \boxbox_{[2,3]} \:
{\color{yellow!90!black}yellow}}$. All green points satisfy $\boldsymbol{ {\color{green!80!black}green} \: \mathcal{S}_{[0,100]} \:{\color{blue!70!}blue} }$. The dark green point satisfies also $\boldsymbol{ {\color{green!80!black}green}  \: \mathcal{S}_{[2,3]} \:{\color{blue!70!}blue} }$. }
\label{fig:graph}
\end{figure}

\subsection{Running Example: Turing Patterns}
\label{rex:turing}
To illustrate more complex examples of spatial and spatio-temporal properties in SSTL we introduce a model of pattern formation in a reaction-diffusion system which will be used as a running example in the following sections.

Alan Turing conjectured in~\cite{turing_chemical_1952} that pattern formation is a consequence of the coupling of reaction and diffusion phenomena involving different chemical species, and that these can be described by a set of PDE reaction-diffusion equations, one for each species.The natural analogue, systems of agents interacting and moving in continuous space, is however prohibitively expensive to analyse computationally; an approach that is more amenable to analysis is to discretise space into a number of cells which are assumed to be spatially homogeneous, and to replace spatial diffusion with transitions between different cells.

From the point of view of formal verification, the formation of patterns is an inherently spatio-temporal phenomenon, in that the relevant issue is how the spatial organisation of the system changes over time.
 
\subsection*{Pattern formation model}
Our model, similar to that in~\cite{bartocci2014, bartocci2015},  describes the production of skin pigments that generate spots in animal furs.
The reaction-diffusion system is discretised, according to a Finite Difference scheme, as a system of ODEs whose variables are organised in a $K \times K$ rectangular grid.
More precisely, we treat the grid as a weighted undirected graph, where each cell $(i,j) \in L = \{1,\ldots,K\} \times \{1,\ldots,K\}$ 
is a location (node), edges connect each pairs of neighbouring nodes along four directions (so that each node has at most $4$ adjacent nodes), and the weight of each edge is always equal to the spatial length-scale $\delta$ of the system\footnote{For simplicity, here we fix $\delta=1$. Note that using a non-uniform mesh
the weights of the edges of the resulting graph will not be uniform.}.
We consider two species (chemical substances) $A$ and $B$ in a  $K \times K$ regular grid, obtaining the system:
 \begin{equation}
\begin{cases} 
\frac {d x^{A}_{i,j}}{d t} =  R_{1} x^{A}_{i,j}  x^{B}_{i,j} - x^{A}_{i,j}  + R_{2} + D_{1} (\mu^{A}_{i,j} - x^{A}_{i,j}) &  i=1..,K, \mbox{ } j=1,..,K,\\  
\frac {d x^{B}_{i,j}}{d t} =  R_{3} x^{A}_{i,j}  x^{B}_{i,j}  + R_{4}  + D_{2} (\mu^{B}_{i,j} -  x^{B}_{i,j}) & i=1..,K, \mbox{ }  j=1,..,K,\\
\end{cases}
\label{model}
 \end{equation}
where: $x^{A}_{i,j}$ and $ x^{B}_{i,j}$ are the concentrations of the two species in the cell $(i,j)$;
$R_{i}$, $i = 1,...,4$ are the parameters that define the reaction between the two species;
 $D_{1}$ and $D_{2}$ are the diffusion constants; $\mu^{A}_{i,j}$ and $\mu^{B}_{i,j}$ are the inputs for the $(i,j)$ cell, that is
 \begin{equation}
 \mu^{n}_{i,j}= \frac{1}{|\nu_{i,j} |} \sum_{\nu \in \nu_{i,j}} x^{n}_{\nu}   \qquad n \in \{ A, B\},
 \end{equation}
 where $\nu_{i,j}$ is the set of indices of cells adjacent to  $(i,j)$.
The spatio-temporal trace of the system is the function $\vec x = (x^{A}, x^{B}) : [0, T] \times L\rightarrow \mathbb{R}^{K \times K} \times \mathbb{R}^{K \times K}$ where each $x^{A}$ and $x^{B}$ are the projection on the first and second variable, respectively.
 In Figure \ref{sim} the concentration of species A is shown for a number of time points, generated by the numerical integration of System \ref{model}. The initial conditions have been set randomly to concentrations of both species in a range between values 0 and 16. 
 It can be observed that at times $t=20$ and $t=50$ the shape of the pattern appears and then remains stable. Clearly, some regions (in blue) have a low concentration of A surrounded by regions with a high concentration of A. We consider the regions with low concentration of species A as the `spots' in the pattern. The opposite happens for the value of B (high density regions surrounded by low density regions, not shown).

\begin{figure}[tbp]
\centering
\includegraphics[width=1. \textwidth]{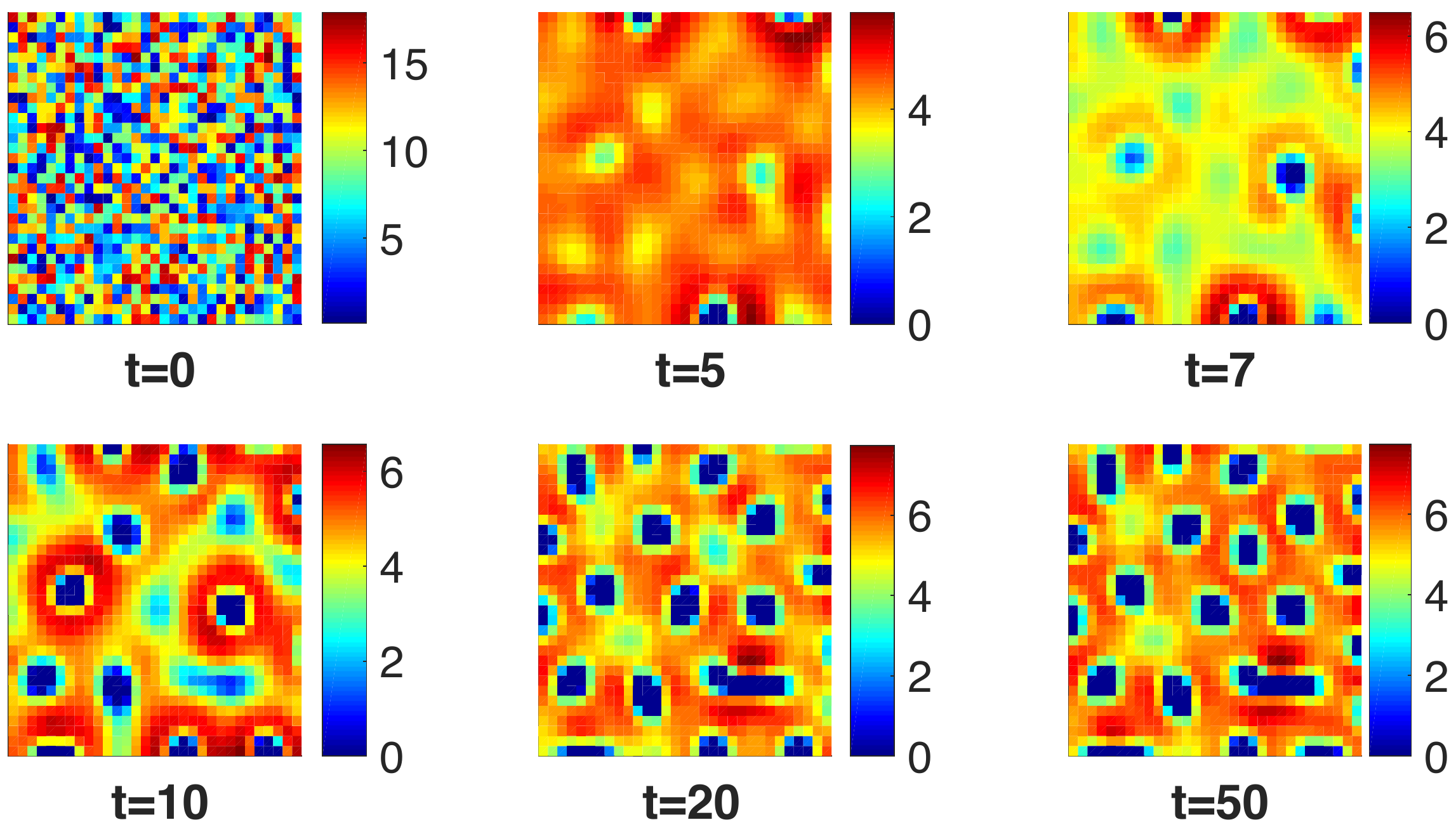}
\caption{Value of $x^{A}$ for the system (\ref{model}) for $t=0, 5, 7, 12, 20,50$ time units with parameters $K=32, R_{1}=1, R_{2}=-12, R_{3}=-1, R_{4}= 16, D_{1} = 5.6$ and $D_{2}= 25.5.$  The initial condition has been set randomly in a range between values 0 and 16. The colour map for the concentration is specified in the legend on the right of each subplot.}
\label{sim}
\end{figure}

%
%

%

\subsection{SSTL Boolean Semantics.}
We first define the Boolean semantics for SSTL.
This semantics, as customary, returns true/false 
depending on whether the observed trace satisfies the SSTL specification.

\begin{defi}[{\bf SSTL Boolean semantics}]
The Boolean satisfaction relation for an SSTL formula $\varphi$ over a  spatio-temporal  trace $\vec x$ is given by:
{\small
\begin{align*}  
 &(\vec x,t, \ell)  \models \mu &\Leftrightarrow&\mbox{ } \phantom{a}   \mu(\vec x(t, \ell))=1
\\ &(\vec x,t, \ell) \models \neg \varphi  &\Leftrightarrow &\mbox{ } \phantom{a}  (\vec x,t, \ell) \not\models \varphi 
\\& (\vec x,t, \ell) \models \varphi_{1} \wedge  \varphi_{2}  &\Leftrightarrow&\mbox{ } \phantom{a}  (\vec x,t, \ell) \models  \varphi_{1}  \wedge  (\vec x,t, \ell) \models  \varphi_{2} 
\\ &(\vec x,t, \ell) \models  \varphi_{1} \: \mathcal{U}_{[t_{1},t_{2}]} \varphi_{2} &\Leftrightarrow &  \mbox{ } \phantom{a}   \exists t' \in t + [t_{1}, t_{2}]  
 :(\vec x, t', \ell) \models \ \varphi_{2} \wedge
\forall t'' \in [t, t'], (\vec x,t'', \ell) \models \varphi_{1} 
\\ &(\vec x,t, \ell) \models  \somewhere{[d_{1},d_{2}]} \varphi &\Leftrightarrow &  \mbox{ } \phantom{a}    \exists \ell' \in{L} : (\ell',\ell)\in E^{*} \wedge d_{1} \leqslant d(\ell',\ell) \leqslant d_{2} \wedge (\vec x, t, \ell') \models  \varphi 
 \\&  (\vec x,t, \ell) \models \varphi_{1} \: \surround{[ d_{1}, d_{2}]}  \varphi_{2}&  \Leftrightarrow    &  \mbox{ } \phantom{a}   \exists A \subseteq L^{\ell}_{[ 0,d_{2} ]} : \ell \in A    \wedge \forall \ell' \in A, (\vec x, t, \ell') \models  \varphi_{1}   
  \\ & & &    \quad \wedge  B^{+}(A)  \subseteq  L^{\ell}_{[ d_{1},d_{2} ]}   \wedge \forall \ell''  \in B^{+}(A),  (x,t, \ell'') \models \varphi_{2}.
 \end{align*}
 }
A trace $\vec x$ satisfies $\varphi$ in location $\ell$, denoted by  $(\vec x, \ell ) \models \varphi $, if and only if $ (\vec x,0, \ell) \models \varphi $. 
\label{def:boolean_sem}
\end{defi}

\begin{exa}[{\bf Spot Formation properties}]
We illustrate the Boolean semantics of SSTL and, in particular, the use of the surrounded operator to characterise spot and pattern formation of the running example introduced in Section~\ref{rex:turing}.
 In order to classify spots, sub-regions of the grid have to be identified that present a high (or low) concentration of one of the species, surrounded by a low (high, respectively) concentration of the same species.  
 For example, one can capture the spots of the A species using the spatial formula
 \begin{equation}
 \phi_{ \mathrm{spot}} := ( x^{A} \leq h   ) \surround{[d_{1},d_{2}]} ( x^{A} > h  ). 
\end{equation}
A trace $\vec{x}$ satisfies $\phi_{ \mathrm{spot}}$ at time $t$, in the location $(i,j)$, $(\vec{x},t, (i,j)) \models \phi_{ \mathrm{spot}}$, if and only if there is 
a subset $L'\subset L$, that contains $(i,j)$, such that all elements in $L'$ have a distance less than $d_{2}$ from $(i,j)$, and $x^{A}$, at time $t$, is less or equal to a constant value $h$. Furthermore, in each element in the boundary of $L'$ we have that $x^A>h$ at time $t$, and its distance from $(i,j)$ is in the interval $[d_{1}, d_{2}]$. Note that the use of distance bounds in the surrounded operator allows one to constrain the size (diameter) of the spot. 
%
%
Recall that we are considering a spatio-temporal system, so this spatial property alone is not enough to describe the formation of a pattern {\em over time}; to identify the insurgence time of the pattern and whether it remains stable over time we combine the spatial property with  temporal operators in the following way:
\begin{equation}
 \phi_{ \mathrm{spotFormation}} := \mathcal{F}_{[T_{ \mathrm{pattern}}, T_{ \mathrm{pattern}} + \delta]} \mathcal{G}_{[0,T_{ \mathrm{end}}]} ( \phi_{ \mathrm{spot}});
 \label{phi}
\end{equation}
$\phi_{ \mathrm{spotFormation}}$ states that eventually, at a time between $T_{ \mathrm{pattern}}$ and  $T_{ \mathrm{pattern}} + \delta $, the property $\phi_{ \mathrm{spot}}$ becomes true and remains true for at least $T_{ \mathrm{end}}$ time units.
%
Figure \ref{result}(b) shows the satisfaction of the property $\phi_{\mathrm{spotFormation}}$ in each cell $(i,j) \in L$ for a simulated trace, 
for the Boolean semantics with formula parameters set to $d_{1}=1$ and $d_{2}=6$. In Figure~\ref{result}(d) we similarly show the satisfaction of the same property for a smaller value of the maximal dimension of a spot $d_{2}=4$ (the other parameters are as indicated in the caption of Figure~\ref{result}). The plot shows the satisfaction at time $t = 0$ because by default we have that $(\vec x, \ell ) \models \varphi $, if and only if $ (\vec x,0, \ell) \models \varphi $.
%
%
%
It is clear that the satisfaction of the property is capable of  identifying which locations belong to the spots and which not. 
If we make the distance constraint stricter, by reducing the width of the interval $[d_{1}, d_{2}]$, we are able to identify only the ``centre'' of the spot, as shown in Figure \ref{result} (d). However, in this case we may fail to identify spots that have an irregular shape (i.e., that deviate too much from a circular shape).
\end{exa} 
%

\begin{exa}[{\bf Pattern Formation properties}]
The next formula, $\phi_{\mathrm{pattern}}$, describes the global spatial pattern in which every location is part of a spot or has a nearby spot. This is expressed by the following SSTL formula:
\begin{equation}
\phi_{\mathrm{pattern}} := \everywhere{[0,d_{max}]} \somewhere{[0,d_{spot}]}  \phi_{\mathrm{spot}}, 
\label{phi:pattern}
\end{equation}
 where $\diamonddiamond$ and $\boxbox$ are the everywhere and somewhere operators, $d_{max}$ is chosen to cover the whole space, and $d_{spot}$ specifies the maximal distance between spots. Checking this formula in a random location of our space is enough to verify the presence of the pattern; this is so because the first part of the formula, $\everywhere{[0,d_{max}]}$, covers all the locations of the grid. This is an example of how one can describe a global property also with a semantics that verifies properties in single locations. For $d_{max}=45$ and $d_{spot}=15$ (and the other parameters are as indicated in Figure~\ref{result}) for a solution of the system (\ref{model}) we obtain the result {\em true} for the Boolean semantics.
 %
%
\begin{figure}[tbp]
\begin{center}
\subfigure[]{
\label{fig:check_a}
\includegraphics[height=4.1cm]{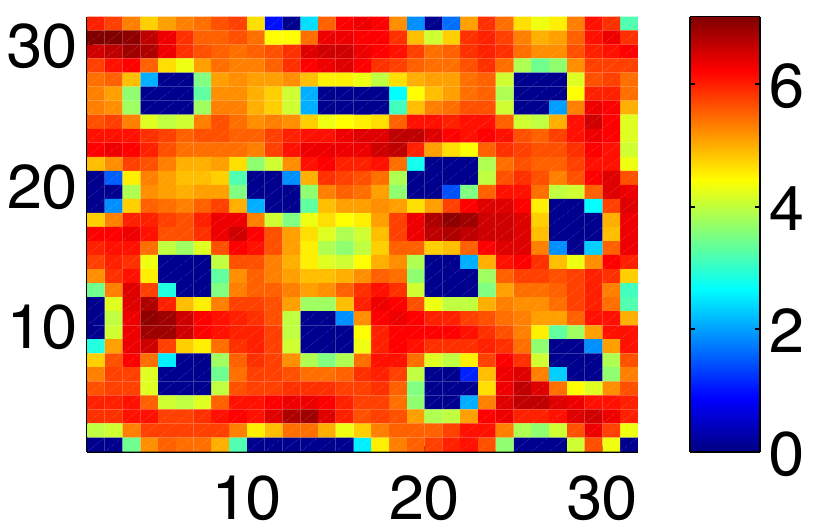} 
}
\subfigure[]{
\label{fig:check_b}
\includegraphics[height=4.1cm]{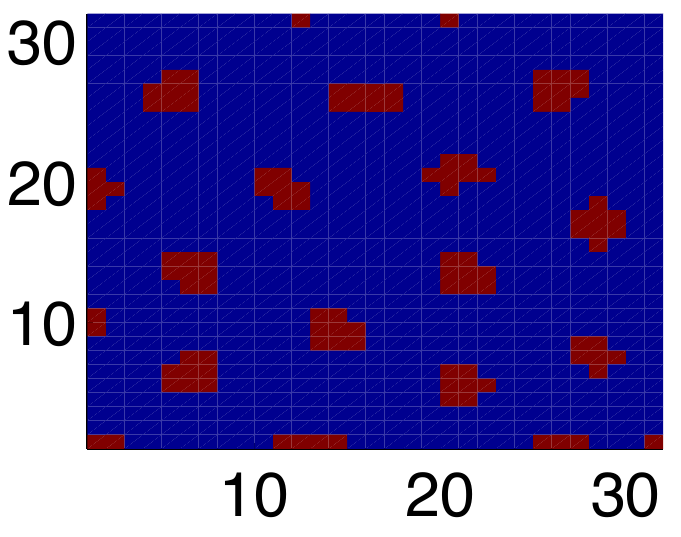} 
}

\hspace{5mm}

\subfigure[]{
\label{fig:check_c}
\includegraphics[height=4.1cm]{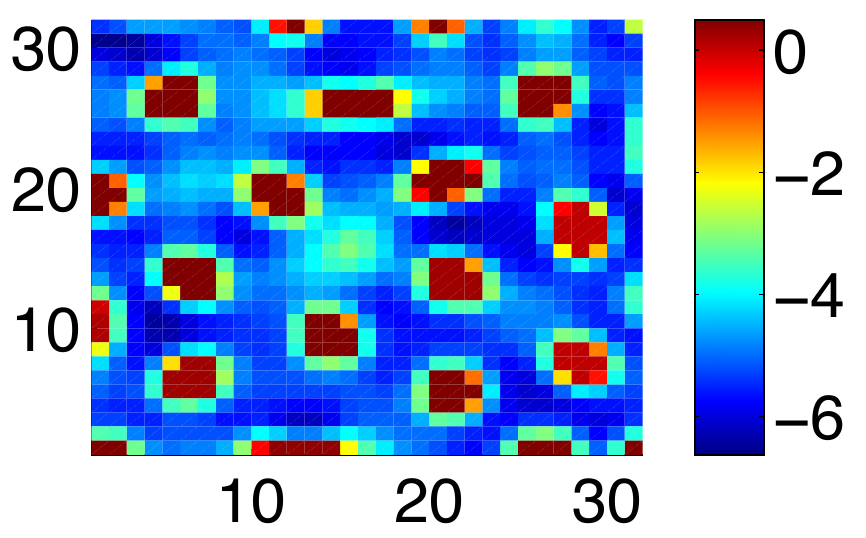} 
}
\subfigure[]{
\label{fig:check_d}
\includegraphics[height=4.1cm]{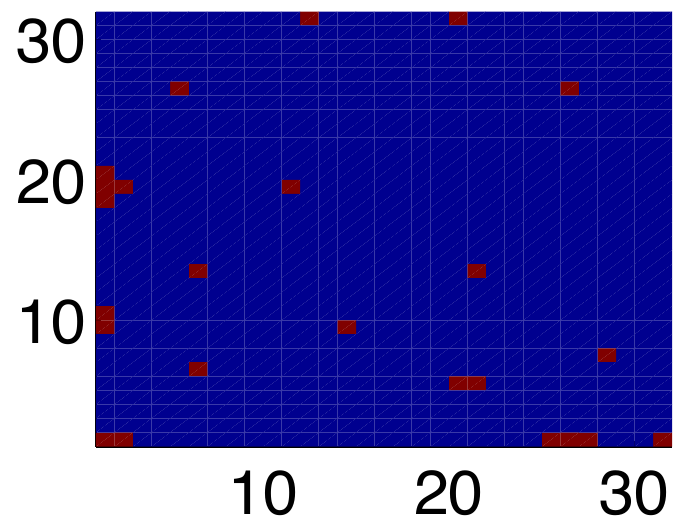} 
}
\end{center}
%
%
%
\caption{Satisfaction of formula (\ref{phi}) with $h=0.5, T_{\mathrm{pattern}}= 19, \delta = 1, T_{\mathrm{end}} =30,  d_{1}=1, d_{2}=6$ for (b), (c) and $d_{2}=4$ for (d). (a) Concentration of $A$ at time t = 50; (b) (d) Boolean semantics of the property $\phi_{ \mathrm{spotFormation}}$; the cells (locations) that satisfy the formula are in red, the others are in blue;
(c) Quantitative semantics of the property $\phi_{ \mathrm{spotFormation}}$; The value of the robustness is given by a colour map as specified in the legend on the right of the figure.}
\label{result}
\end{figure}
Changing the diffusion constants $D_{1}$ and $D_{2}$ affects the shape and size of the spots or disrupts them, as we can see in Figure~\ref{fig:NoPattern}. We evaluate the pattern formula (\ref{phi:pattern}) with parameters as in Figure~\ref{result}, for the patterns in Figure~\ref{fig:NoPattern}(a) and (b),
where $D=[1.5, 23.6]$ and $D=[8.5, 40.7]$, respectively, and the other parameters equal to the previous model. This gives the result {\em false} in both cases. 
Formula
 (\ref{phi}),  though,  is still true in some locations.  This is due to the irregularity of the spots (where, as in Figure~\ref{fig:nopattern1}, some spots can have a shape similar to the model in Figure~\ref{result} (a)), or due to particular boundary effects on the border of the grid (where fractions of spots still remain, as in Figure \ref{fig:nopattern1}). 
\begin{figure}[!t]
\begin{center}
\subfigure[]{
\label{fig:nopattern1}
\includegraphics[width=.39\textwidth]{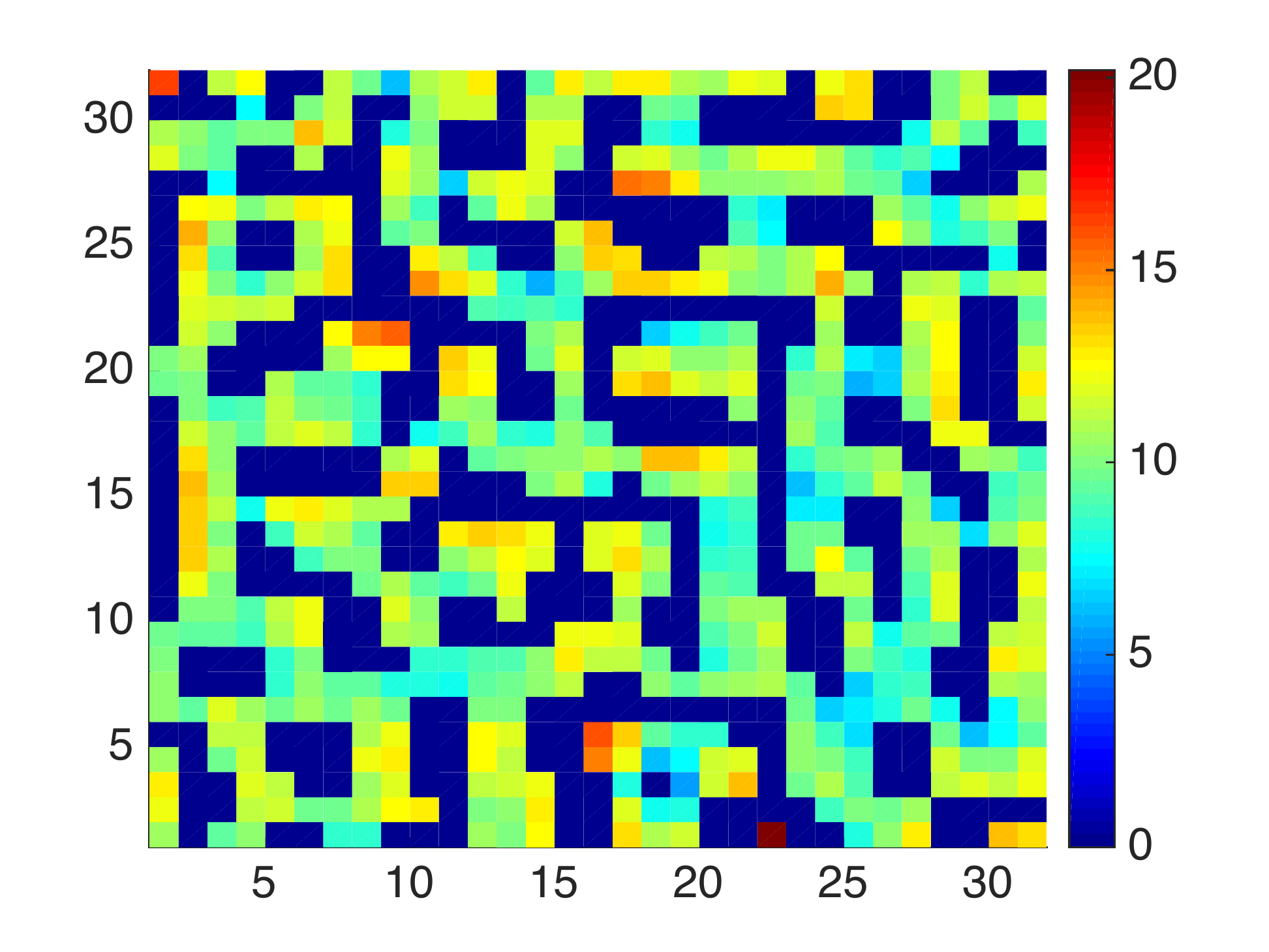} 
}
\subfigure[]{
\label{fig:nopattern2}
\includegraphics[width=.39\textwidth]{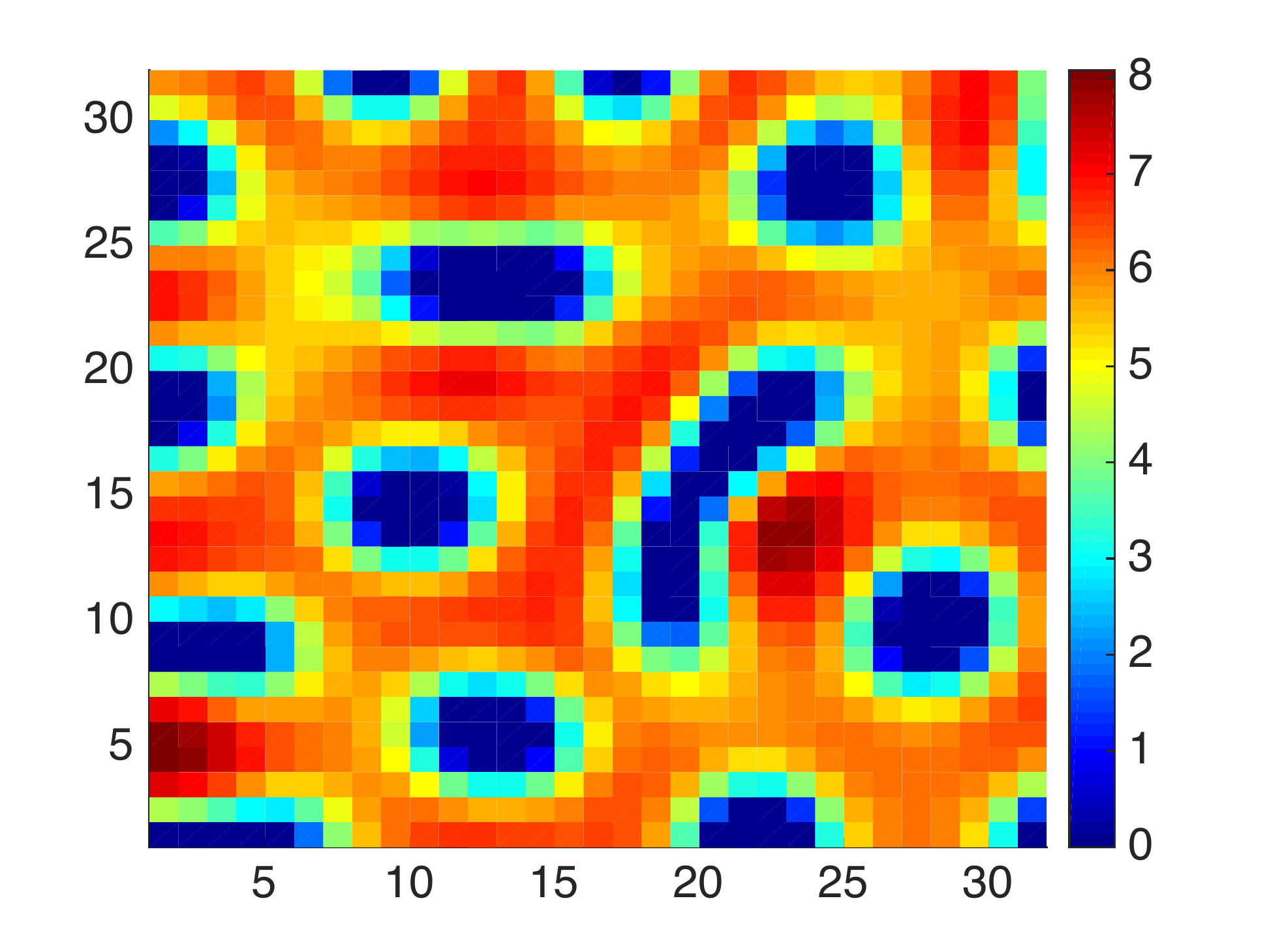} 
}
\end{center}
\caption[Snapshots of the model with 
``bad'' diffusion rate D]{Snapshots at time $t=50$ of $x^A$ for the model (\ref{model}) with parameters with parameters $K=32, R_{1}=1, R_{2}=-12, R_{3}=-1, R_{4}= 16$, and $D=[1.5, 23.6]$ in (a) and $D=[8.5, 40.7]$ in (b).}
\label{fig:NoPattern}
\end{figure}
\end{exa}

\subsection{Quantitative Semantics.} 
The quantitative semantics returns  a {\em real value} that can be interpreted as a measure of the strength with which the specification is satisfied or falsified by an observed trajectory. More specifically, the sign of such a satisfaction score is related to the truth of the formula (positive stands for true), while the absolute value of the score is a measure of the robustness of the satisfaction or dissatisfaction. This definition of quantitative measure is based on \cite{Donze2010,Donze2013}, and it is a reformulation of the  robustness degree of \cite{fainekos_robustness_2009}.

\begin{defi} [{\bf SSTL Quantitative Semantics} ] 
\label{quantitative}
The quantitative satisfaction function $\rho(\varphi, \vec{x},t, \ell)$ for an SSTL formula $\varphi$ over a spatio-temporal trace $\vec x$ is given by:
{\small
\begin{align*}
&\rho(\mu, \vec{x},t, \ell)  & = &\mbox{ } \phantom{a}  f( \vec x(t, \ell)) \quad \mbox{ where } \mu \equiv (f \geq 0)\\ 
& \rho (\neg \varphi,\vec x,t, \ell)  & =  & \mbox{ } \phantom{a}  - \rho (\varphi,\vec x,t, \ell)\\
&\rho( \varphi_{1} \wedge  \varphi_{2}, \vec x,t, \ell)  & = &\mbox{ } \phantom{a} \min ( \rho( \varphi_{1},\vec x,t, \ell),\rho( \varphi_{2},\vec x,t, \ell) )\\
& \rho(  \varphi_{1} \: \mathcal{U}_{[t_{1},t_{2}]}  \varphi_{2}, \vec x,t, \ell) & = &\mbox{ } \phantom{a}  \sup_{t'\in t+[t_{1},t_{2}]}(\min ( \rho( \varphi_{2},\vec x,t', \ell), \inf_{t'' \in [t,t']}(\rho( \varphi_{1},\vec x,t'', \ell))))\\
& \rho(\somewhere{[d_{1},d_{2}]} \varphi , \vec x,t, \ell) & = &\mbox{ } \phantom{a}  \max\{\rho( \varphi ,\vec x,t, \ell')\mid\ell' \in L, (\ell',\ell)\in E^{*},
d_{1}\leqslant d(\ell',\ell) \leqslant d_{2}\}\\
& \rho(  \varphi_{1} \: \mathcal{S}_{[  d_{1},  d_{2}]}  \varphi_{2} , \vec x,t, \ell) & = &\mbox{ } \phantom{a}  \max_{A \subseteq L^{\ell}_{[0, d_{2}]}, \ell \in A, B^{+}(A)\subseteq L^{\ell}_{[d_{1}, d_{2}]}}( \min (\min_{\ell' \in A}\rho(\phi_{1}, \vec x, t, \ell'),
\min_{ \ell'' \in B^{+}(A)}\rho(\phi_{2}, \vec x, t, \ell'')))
\end{align*}
}
where $\rho$ is the quantitative satisfaction function, returning a real number $\rho(\varphi, \vec{x},t)$ quantifying the degree of satisfaction of the property $\varphi$ by the trace $\vec{x}$ at time $t$. Moreover, $\rho(\varphi, \vec{x}, \ell):=\rho(\varphi, \vec{x},0, \ell)$.
\end{defi}
%
The definition for the surrounded operator is essentially obtained from the Boolean semantics by replacing conjunctions and universal quantifications with the minimum and disjunctions and existential quantifications with the maximum, as done in  \cite{Donze2010,Donze2013} for STL.  


\begin{rem}[Soundness Property] As for STL, the quantitative semantics of SSTL is sound with respect to the Boolean semantics. It means the  $\rho$ is positive whenever the property holds in the Boolean semantics, and that it has a negative value if the property does not hold. This can be proved by induction on the structure of the formula, whereas it is already shown for STL operators (atomic propositions, Boolean and temporal operators, for fixed locations). The new spatial operators are defined as compositions of maximum and minimum functions (on finite sets and for a fixed time). Therefore, it is easy to prove that the property holds also for them.
Indeed, considering for example $\min\{\rho( \varphi ,\vec x,t, \ell_1), \rho (\varphi ,\vec x,t, \ell_2)\}$, by induction $\rho( \varphi ,\vec x,t, \ell_1)$ and $\rho( \varphi ,\vec x,t, \ell_2)$ satisfy the soundness property. If they are both positive, they both satisfy $\phi$ and also the minimum is positive. If one $\rho( \varphi ,\vec x,t, \ell_i)$ is negative, it does not satisfy $\phi$ and the minimum is negative. A similar consideration can be done for the maximum.
\end{rem}

\begin{rem}[Correctness Property] The quantitative semantics satisfies the correctness property with respect to the Boolean semantics. It means that for each formula $\phi$ and $\forall \vec x_1, \vec x_2$, it holds: $$ (\varphi ,\vec x_1,t, \ell) \models \phi \mbox{ and } || \vec x_1 - \vec x_2||_\infty < \rho(\varphi ,\vec x_1,t, \ell) \Rightarrow (\varphi ,\vec x_2,t, \ell) \models \phi$$
Recalling that the new spatial operators are defined as compositions of maximum and minimum functions (on finite set and for fixed time), also this property can be proved by induction on the structure of the formula.
 It holds for atomic predicates, Boolean and temporal operators by the correctness proof of the quantitative semantics for MITL~\cite{Fainekos2009}. We need just to prove that it holds for maximum and minimum functions of the robustness function with fixed time. Let $\rho( \psi ,\vec x_1, \ell) = \min\{\rho( \varphi ,\vec x_1, \ell_1), \rho (\varphi ,\vec x_1, \ell_2)\} $ with $\varphi$ satisfying correctness by induction. If  $|| \vec x_1 - \vec x_2||_\infty <\rho( \psi ,\vec x_1, \ell) $ and $\rho(\psi ,\vec x_1,\ell) > 0 $ (soundness property), i.e.  $(\vec x_1, \ell_i) \models \phi$ for $i= 1,2$, by induction on $\varphi$ and the minimum function we have that $(\vec x_2, \ell_i) \models \phi$ for $i= 1,2$, then $(\vec x_2, \ell) \models \psi$; similarly it can be proved for the maximum. 
\end{rem} 


\begin{rem}
The quantitative semantics is a measure of how robust is the truth value of a formula $\phi$ for a given trace $\vec{x}$. More specifically, it tells us the maximum size of a perturbation of the secondary signal to preserve the satisfaction of $\phi$ \cite{Donze2010} (a variant of our definition, \cite{fainekos_robustness_2009} returns a robustness measure for the primary signal, i.e. the trace $\vec{x}$). This perturbation can be applied to the quantitative value of the secondary signal at any point in time and space. As such, the value of the robustness defines a tube in the space of secondary signals, uniformly with respect to space-time, containing equisatisfiable trajectories (with respect to the Boolean semantics): any trace in this tube will have the same truth value of $\phi$ as  $\vec{x}$.
We stress that perturbations are applied pointwise in space and time: this version of the quantitative semantics tells us nothing about the effect of time and space warps. This would require a different notion of quantitative semantics, as done e.g. in \cite{Donze2010, avstl} for time only, coupled with a meaningful notion of perturbation of a discrete space (e.g. on edge weights).  
\end{rem}



\begin{exa}[{\bf Robustness of Spot Formation}]
As a first example, consider again the property $\phi_{ \mathrm{spotFormation}}$. 
Figure \ref{result}(b) shows the Boolean satisfaction of the property $\phi_{\mathrm{spotFormation}}$ in each cell $(i,j) \in L$, 
for $d_{1}=1$ and $d_{2}=6$ and Figure \ref{result}(c) shows the result of the {\em quantitative} semantics for the same formula, showing for each cell $(i,j)$ the value of the robustness with which the formula is satisfied. Positive values indicate cells where the formula holds, negative values where it does not. Where the value is positive its value amounts to 0.3. This low value is due to the specific choice of the threshold value $h=0.5$.
\end{exa}

\begin{exa}[{\bf Robustness of Pattern Formation}]
The second example considers the global formula $\phi_{\mathrm{pattern}}$. This formula resulted in {\em false} applying the Boolean semantics for both patterns in Figure~\ref{fig:NoPattern}. The quantitative semantics gives -0.05 as a result in both cases, indicating a weak robustness. 
\end{exa}

\begin{exa}[{\bf Perturbation}]
A strength of spatio-temporal logics is the possibility to nest the temporal and spatial operators. We illustrate this in the following scenario as a variant of the running example. We set as  initial conditions for the dynamical system (\ref{model}) its stable state, i.e. the concentrations of substances $A$ and $B$ at time $50$ (see Figure \ref{result}(a)). We introduce a small perturbation by changing a single value in a specific location in the centre of a spot. The idea is to study the effect of this perturbation over time by checking whether it will disrupt the system or not. Specifically, we perturb the cell $(6,6)$ in Figure~\ref{result}(a), by setting  $x_{6,6}^{A}(0)=10$ while its original value was 0. Dynamically, the perturbation is quickly absorbed and the system returns to the previous steady state. Formally, we consider the following property:
{
\begin{equation}
 \phi_{\mathrm{pert}} :=(x^{A}\geq h_{\mathrm{pert}}) \wedge(\phi_{\mathrm{absorb}} \surround{[d_{m},d_{M}]} \phi_{\mathrm{no\_effect}}); 
 \label{phi_pert}
\end{equation}
}
The meaning of $\phi_{\mathrm{pert}}$ is that the induced perturbation remains confined inside the original spot when the property is satisfied. More in detail, a trace $\vec{x}$ satisfies $\phi_{\mathrm{pert}}$ in the location $(i,j)$, i.e. $ (\vec x, (i,j)) \models \phi_{\mathrm{pert}}$,  if and only if $x_{i,j}^{A}(0) \geq h_{\mathrm{pert}} $, with $h_{\mathrm{pert}}=10$, (the location is perturbed) and if there is a subset $L' \subseteq L$ that contains location $(i , j)$ such that all its elements have a distance less than $d_{M}$ from $(i,j)$ and satisfy $\phi_{\mathrm{absorb}}=  \mathcal{F}_{[0, T_{p}]} \mathcal{G}_{[0, T_{d}]} (  x^{A} < h')$; $\phi_{\mathrm{absorb}}$ states that the perturbation of $x^A$ is absorbed within $T_p$ units of time, stabilising back to a value $ x^{A} < h'$ for an additional period of $T_{d}$ time units. Here $h'$ is a suitable threshold capturing the fact that concentrations have returned close to their  value before the perturbation, and it is set to $h'=3$ in our case.
Furthermore, within distance $[d_{m}, d_{M}]$ from the original perturbation, where $d_M$ is chosen such that we are within the spot at cell $(6,6)$ of the non-perturbed system, $\phi_{\mathrm{no\_effect}} := \mathcal{G}_{[0, T]} (  x^{A} < h' )$ is satisfied; i.e., no relevant effect is observed, the value of $x^A$ stably remains below $h'$.  
In Figure \ref{perturb}, we report the evaluation of the quantitative semantics for $\phi_{\mathrm{pert}}$, zooming in on the $15\times 15$ lower left corner of  the original grid. As shown in the figure, the perturbed location $(6,6)$ satisfies the property.
\end{exa}

It is interesting to observe that this property requires a genuine nesting of space and time modalities, and as such it cannot be expressed in logics like SpaTeL, in which the spatial properties are limited to the level of atomic propositions.

%

%
%
%
\begin{figure}[!t]
\begin{center}
\subfigure[]{
\label{perturbA}
\includegraphics[height=10em]{./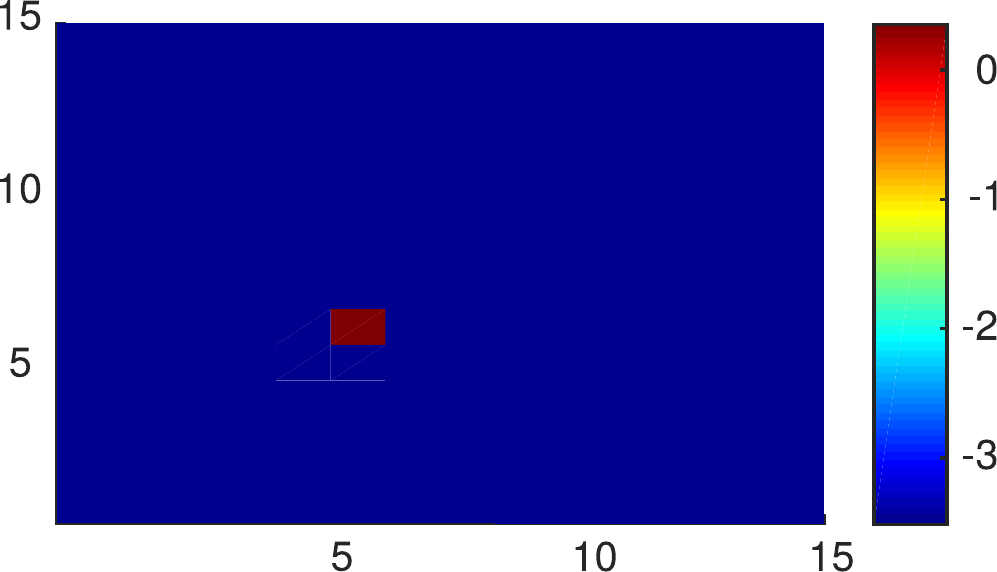} 
}
\hskip 20pt
\subfigure[]{
\label{perturbB}
\includegraphics[height=10em]{./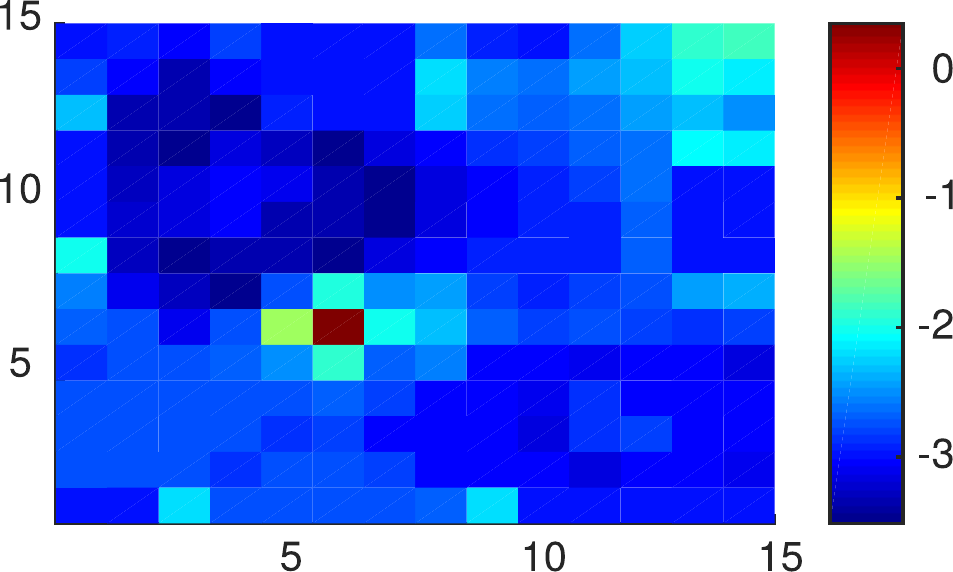} 
}
\end{center} 
\caption {Boolean (a) and quantitative (b) semantics for formula~\ref{phi_pert}  with parameters 
$h_{\mathrm{pert}} = 10$, $d_m =1$, $d_M=2$, $T_p=1$, $T_d = 10$, $h'=3$, and $T=20$.} 
%
\label{perturb}
\vspace{-3ex}
\end{figure}




\section{Monitoring Algorithms }
\label{sec:alg}

In this section, we present the offline  monitoring algorithms to check the satisfaction of a formula $\varphi$ on a trace $\vec{x}(t, \ell)$. 
The monitoring procedures spatially extend the property monitors introduced in \cite{Maler2004} for the Boolean and in  \cite{Donze2013} for the quantitative semantics of STL.

As for STL formulae, our algorithms work with a bottom-up approach on the syntax tree of  $\varphi$, iteratively computing the temporal signals of each subformula. Each node of the tree represents a subformula, the leafs are the atomic propositions and the root represents the complete formula.
Given a spatio-temporal trace $\vec x (t,\ell) $, the algorithm starts computing the spatio-temporal Boolean/quantitative signals of all the atomic propositions, then it moves upwards in the tree computing the spatio-temporal Boolean/quantitative signals of a node using the signals of its child and a specific algorithm for each operator of the logic. Finally, the spatial Boolean/quantitative satisfaction function corresponds to the value of the signal at time zero $\rho_{\varphi}( 0, \ell)$. An example of the procedure is shown in Figure~\ref{treeSSTL}.

In the case of the Boolean semantics, for each subformula $\psi$, the algorithm constructs a spatio-temporal signal $s_{\psi}$ s.t. $s_{\psi}(\ell,t)=1$ iff the subformula is true in position $\ell$ at time $t$. In the case of the quantitative semantics, for each subformula $\psi$, the signal $s_{\psi}$ corresponds to the value of the quantitative satisfaction function $\rho$, for any time $t$ and location $\ell$. Here, we discuss in detail the algorithms to check the new spatial operators: the somewhere and surrounded operators; the procedures for the other Boolean and temporal operators are similar to STL and will be just briefly recalled. 

The processing of the somewhere operator is a simple extension of the disjunction operator. 
The treatment of the bounded surrounded modality $\psi= \varphi_{1} \surround{[w_{1},w_{2}]}   \varphi_{2}$, instead, deviates substantially from all these procedures and will be discussed in more detail. In particular, in the following, we will present two recursive algorithms to compute the Boolean and the quantitative satisfaction, assuming the Boolean/quantitative signals of $\varphi_1$ and $\varphi_2$ being known.

We remark that this surround operator requires very different algorithms from those developed for timed modalities, as space is bi-directional, thus it makes sense to consider both \emph{reaching} and \emph{being reached}; classical path-based model checking does not coincide with spatial model checking also because loops in space are not relevant in the definition of \emph{surrounded} operators.


\begin{figure}[ht]
\center
\begin{tikzpicture}
  [level 1/.style={sibling distance=60mm}, level 2/.style={sibling distance=60mm}, level 3/.style={sibling distance=40mm}, level distance=17mm,
   every node/.style={fill=white!20,rounded corners},
   edge from parent/.style={black,<-,thick,draw}]
  \node[fill = white!50!white] (RR)  
{ ${\color{red}{s_{\varphi} (0,\ell)}}, {\color{blue}{\rho_{\varphi} (0,\ell)}}$ }  
 child {node[rectangle,draw,text width=2.8cm,text centered] (FM) {$\phi: \Psi_{1} \: \until{[t_{1},t_{2}]} \: \Psi_{2}$
 ${\color{red}{s_{\varphi} (t,\ell)}}$, ${\color{blue}{\rho_{\varphi} (t,\ell)}}$}
    child {node[rectangle,draw,text width=3.1cm,text centered] (Fl) {
$\Psi_1:  \somewhere{[d_1,d_2]}  \: \mu_{1}$
${\color{red}{s_{\Psi_{1}} (t,\ell)}}$, ${\color{blue}{\rho_{\Psi_{1}} (t,\ell)}}$}
    child {node[rectangle,draw,text width=3.1cm,text centered] (Ml) {
$\mu_{1} :  f_{1}(\vec x (t,\ell))  > 0$
${\color{red}{s_{\mu_{1}} (t,\ell)}}$, ${\color{blue}{\rho_{\mu_{1}} (t,\ell)}}$}
		child {node (Sl) { $f_{1}(\vec x (t,\ell)) $}
		child {node [](Pl) { $x_{1}(t,\ell),...,x_{n}(t,\ell)$}}
		}}
  }
    child {node[rectangle,draw,text width=3.1cm,text centered] (Fr) {
$\Psi_2:  \mu_{2} \wedge \mu_{3}$ 
${\color{red}{s_{\Psi_{2}} (t,\ell)}}$, ${\color{blue}{\rho_{\Psi_{2}} (t,\ell)}}$}
child {node[rectangle,draw,text width=3.1cm,text centered]  (Mr) {
		$\mu_{2} :  f_{2}(\vec x (t,\ell)) > 0$
		${\color{red}{s_{\mu_{2}} (t, \ell)}}$, ${\color{blue}{\rho_{\mu_{2}} (t,\ell)}}$}
	        child {node (Sr) { $f_{2}(\vec x (t,\ell)) $}
	        child {node [](Pr) { $x_{1}(t,\ell),...,x_{n}(t,\ell)$}}
	        }}
child {node[rectangle,draw,text width=3.1cm,text centered]  (Mrr) {
		$\mu_{3} :  f_{3}(\vec x (t,\ell)) > 0$
		${\color{red}{s_{\mu_{3}} (t, \ell)}}$, ${\color{blue}{\rho_{\mu_{3}} (t,\ell)}}$}
	        child {node (Sr) { $f_{3}(\vec x (t,\ell)) $}
	        child {node [](Pr) { $x_{1}(t,\ell),...,x_{n}(t,\ell)$}}
	        }}}};        
\node[fill = white!50!white,text width=4.1cm] (S) [right=of Mrr]  {\color{red}{Boolean signals} \\ \color{blue}{Quant. signals}};
    \node[fill = white!50!white] (S) [right=of Sr]  {\qquad Secondary signal};
  
     \node[fill = white!50!white,text width=7cm] (S) [right=of RR]  {\qquad \qquad \color{red}{Spatial Boolean satisfaction} \\ 
     \qquad \qquad \color{blue}{Spatial Quant. satisfaction}};
    \node[fill = white!50!white] (S) [right=of Sr]  {\qquad Secondary signals};
        \node[fill = white!50!white] (S) [right=of Pr]  {$ \mbox{ }$Primary signals};
\end{tikzpicture}
\caption{The monitoring procedure of the SSTL formula $\phi : (\somewhere{[d_1,d_2]}  \: \mu_{1}) \until{[t_{1},t_{2}]} ( \mu_{2} \wedge \mu_{3}) $.}
\label{treeSSTL}
\end{figure}


\subsection{Boolean Monitoring}

The algorithm proceeds inductively bottom-up on the parse tree of the formula.
Given a formula $\varphi$, to determine if $(\vec{x}, \ell) \models \varphi$, we construct for all  $\psi$ that are subformulas of $\varphi$, a Boolean signal $s_{\psi} : [0,T] \times L \rightarrow \bb{B}$  s.t. $s_{\psi}(t, \ell)=1 \mbox { iff } (\vec{x} ,t, \ell) \models \psi$ and 0 otherwise.  At the termination of the algorithm, we obtain the signal $s_{\phi}(t, \ell)$ whose value at $t=0$ determines whether the trace $\vec x$ satisfies  $\varphi$ in location $\ell$ (at time 0). 
The properties can be verified pointwise for each location and each time independently, indeed $(\vec{x}, \ell, t) \models \varphi$ means ``the trace $\vec{x}$ in location $\ell$ at time $t$ satisfies the property $\phi$''.
 
To optimise the monitoring procedure, we split the time domain according to the
{\it minimal interval covering $\mathcal{I}_{s_1,\ldots, s_n}$ consistent with a set of temporal Boolean signals $s_1, \ldots s_n$} (as in~\cite{Maler2004}) that we describe below.
A temporal Boolean signal is a function $s: [0,T] \rightarrow \bb{B}$. Note that,  we can represent the signal $s_{\psi}: [0, T] \times L \rightarrow \bb{B}$ as a finite collection of temporal signals $\{s_{\psi,\ell} \}_{\ell \in L}$ where $s_{\psi,\ell}(t) := s_{\psi}( \ell, t)$.

\begin{defi}
Given a time interval $I$, and a set of temporal signals $s_1, \ldots s_n$ with $s_i:I \rightarrow \mathbb{B}$, the \emph{minimal interval covering $\mathcal{I}_{s_1,\ldots, s_n}$ of $I$ consistent with the set of signals $s_1,\ldots, s_n$} is the shortest finite sequence of left-closed right-open intervals $I_{1}, ..., I_{h}$ such that $\bigcup_{j}I_{j}=I$, $I_{i} \bigcap I_{j}=\emptyset$, $ \forall i \neq j$, and 
for $k\in \{1,\ldots,n\}$, $s_k(t) = s_k(t')$ for all $t$, $t'$ belonging to the same interval \footnote{The fact that we can always obtain a finite interval covering is a consequence of the restriction to closed intervals $[t_1,t_2]$, $t_1<t_2$, in STL. Further details about signals and interval covering are provided in~\cite{Maler2004}.}. 
Restricting to a single signal $s$,  the \emph{positive minimal interval covering} of $s$ is $\mathcal{I}^{+}_{s}=\{ I \in \mathcal{I}_{s} \big{|} \forall t \in I : s(t)=1\}$. The \emph{negative minimal interval covering} of $s$ is $\mathcal{I}^{-}_{s}=\{ I \in \mathcal{I}_{s} \big{|} \forall t \in I : s(t)=0\}$, and it holds  $\mathcal{I}_{s}= \mathcal{I}^{+}_{s} \bigcup \mathcal{I}^{-}_{s}$ and $\mathcal{I}^{+}_{s} \bigcap \mathcal{I}^{-}_{s}= \emptyset$.

\label{def:intcov}
\end{defi}


The positive interval covering $\mathcal{I}^{+}_{s_{\psi, \ell}}$ corresponds to the {\it satisfaction set} of the formula over the signal $s_{\psi, \ell}$. Futhermore, any signal can be written as $s=s_{1} \vee s_{2} \vee \cdots \vee s_{k} $ where each $s_{i}$ is a {\it unitary signal}, meaning that it has a singleton positive interval, i.e.,  $\mathcal{I}^{+}_{s_{i}}=\{[t_1,t_2)\} $ for some $t_1<t_2 \in \mathbb{R}_{\geqslant 0} $. 
Then , we have that $\mathcal{I}_{s}=\mathcal{I}_{s_1,\ldots, s_k}$ and $\mathcal{I}^{+}_{s} = \bigcup_{i} \mathcal{I}^{+}_{s_{i}}$. The idea is that the satisfaction set of a formula over a signal can be seen as the union of disjoint intervals. 

Using these definitions of signals, interval coverings, and satisfaction set, the procedure for the classic operators of STL is similar to the one described in~\cite{Maler2004}.  We briefly recall these procedures in the following and then we describe the algorithms for the new spatial operators.

\subsection*{Atomic Predicates} $\psi= \mu$. The computation of the Boolean signal associated with an atomic predicate is a direct application of Definition \ref{def:boolean_sem}: $s_{\mu,\ell}(t) = \mu(\vec{x}(t,\ell))$.
\subsection*{Negation} $\psi=\neg \phi$, then  $\mathcal{I}^{+}_{s_{\neg \phi}, \ell}= \mathcal{I}^{-}_{s_{\phi}, \ell}.$
\subsection*{Disjunction}  $\psi=\phi_{1} \vee \phi_{2}$,  then, given $s_{ \phi_{1}, \ell}$, $s_{ \phi_{2}, \ell}$, let $\mathcal{I}$ be the minimal interval covering consistent with \emph{both} signals. For each $I_{i} \in \mathcal{I}$, we construct the signal $s_{ \psi, \ell}(I_{i})=s_{ \phi_{1}, \ell}(I_{i}) \vee s_{ \phi_{2},\ell}(I_{i})$ and  we merge adjacent positive intervals to obtain $\mathcal{I_{s_{\psi,\ell}}^{+}}$.
\subsection*{Until} $\psi= \phi_{1} \mathcal{U}_{[a,b]} \phi_{2}$. As we are working with future temporal modalities, we need to shift intervals \emph{backwards}. This  has to be done independently for each unitary signal,  then taking the union of the so obtained satisfaction sets. Given two unitary signals $p$ and $q$, the signal $\psi= p \mathcal{U}_{[a,b]} q$ is the unitary signal such that $\mathcal{I}^{+}_{\psi}=\{((I_{p} \cap I_{q}) \ominus [a,b]) \bigcap I_{p} \}$, where $[m,n) \ominus [a,b]=[m-b, n-a) \bigcap[0,T]$ is the Minkowski sum.  In the general case, let $s_{\phi_{1}, \ell}= p_{1} \vee \cdots \vee p_{n}$ and $s_{\phi_{2}, \ell}= q_{1} \vee \cdots \vee q_{m}$ be signals written as union of unitary signals, then $\psi= s_{\phi_{1}, \ell} \mathcal{U}_{[a,b]} s_{\phi_{2},\ell}= \bigvee^{n}_{i=1} \bigvee^{m}_{j=1} p_{i}\mathcal{U}_{[a,b]}q_{j}$. The proof of this result can be found in \cite{Maler2004}.

\subsection*{Somewhere} $\psi= \diamonddiamond_{[d_{1},d_{2}]}   \varphi$. As remarked at the beginning of this section, and relying on the fact that we have a finite number of locations, we can process each location in the signal independently. Given the signal $s_{\psi,\ell}$,  for a \emph{fixed location $\ell$}, we can rewrite the somewhere operator as a disjunction between all signals in locations $\ell'$ s.t.  $d_{1} \leqslant d(\ell',\ell) \leqslant d_{2}$. This allows us to use the monitoring procedure for disjunction,  constructing  the minimal interval covering $\mathcal{I}$ consistent with all $s_{{\phi},\ell'}$ signals s.t. $d_{1} \leqslant d(\ell',\ell) \leqslant d_{2}$, and defining, for each $I_{i} \in \mathcal{I}:$
$$ s_{\psi, \ell} (I_{i})=\bigvee_{d_{1} \leqslant d(\ell',\ell) \leqslant d_{2}}  s_{\varphi, \ell'}(I_{i}).$$
The satisfaction set $\mathcal{I}^{+}_{s_{\psi,\ell}}$ is then the union of the positive $I_{i}$ (i.e., $I_{i}$ s.t. $ s_{\psi, \ell} (I_{i})=1$), merging adjacent positive intervals.

We stress here that the rewriting of the somewhere operator as a finite disjunction is possible only at monitoring time, when the space structure is known. In fact,  one cannot express the somewhere operator in terms of the or-operator in general, as this requires the knowledge of the spatial model to be verified, both in terms of locations and of distances among them.  Therefore, the spatial somewhere operator is not merely syntactic sugar. Additionally, it can be applied so to countably infinite discrete spaces, and it can be generalised to continuous spaces~\cite{Ci+16}. Furthermore, even if we assume that the space is finite and if we encode the space structure of the model in the formula, expanding the operator as a disjunction would produce a  blowup of the size of the formula which is \emph{exponential} in the nesting level of spatial operators, and hence it would result in an exponential increase in the  complexity of the monitoring procedure. 

%

\subsection*{Surrounded} $\psi= \varphi_{1} \surround{[d_{1},d_{2}]}   \varphi_{2}$.
Algorithm \ref{algo:Boolean} presents the procedure to monitor the Boolean semantics of  $\psi$ at location $\ell$, returning the temporal Boolean signal $s_{\psi, \ell}$ of $\psi$ at location $\ell$.
The algorithm first computes the set of locations $L^{\ell}_{[0,d_{2}]}$ that are at distance $d_2$ or less from $\ell$, and then, recursively, the temporal Boolean signals $s_{\varphi_1,\ell'}$ and $s_{\varphi_2,\ell'}$, for $\ell' \in L^{\ell}_{[0,d_{2}]}$. 
These signals provide the satisfaction of the sub-formulas $\varphi_1$ and $\varphi_{2}$ at each point in time, and for each location of interest. 
Then, a minimal interval covering consistent with all the signals $s_{\varphi_{1},\ell'}$ and $s_{\varphi_{2},\ell'}$ is computed, and to each such interval, a core procedure similar to that of \cite{ciancia2014} is applied. 
More specifically, we first compute the set of locations $W$ in which both $\varphi_1$ and $\varphi_2$ are false, and that belong to the external boundary of the set of locations that satisfy $\varphi_1$ ($V$) or $\varphi_2$ ($Q$).
The locations in $W$ are ``bad'' locations, that cannot be part of the external boundary of the set $A$ of $\varphi_1$-locations which has to be surrounded exclusively by $\varphi_2$-locations. Hence, the main loop of the algorithm removes iteratively from $V$ all those locations that have a neighbour in $W$ (set $N$, line 13), constructing a new set $T$ containing only those locations in $N$ that do not satisfy $\varphi_2$, until a fixed point is reached. 
As each location can be added to $W$ and is processed only once, the complexity of the algorithm is linear in the number of locations and linear in the size of the interval covering. A correctness theorem, stated below, can be proven in a similar way as in~\cite{ciancia2014}. The proof is reported in Appendix~\ref{app:Booleanproof}.


\begin{restatable}{thm}{restBooleansurr}
Given a graph $G=(L,w,E)$, two properties $\phi_{1}$ and $\phi_{2}$, a trace $\vec x$ and a location $\ell$, let $s_{\psi, \ell}$=BoolSurround$(G, \vec x, \phi_{1}, \phi_{2}, \ell)$ and ${\mathcal{I}}_{s_{\psi,\ell}}$ be the minimal interval covering consistent with $\{s_{\phi_{1},\ell'}$, $s_{\phi_{2},\ell'}\}_{\ell' \in L^{\ell}_{[0,w_{2}]}}$, then, for all $I_{i}\in {\mathcal{I}}_{s_{\psi,\ell}}$
$$s_{\psi, \ell}(I_{i}) = 1 \iff (\vec x, t, \ell) \models \phi_{1}\surround{[d_{1},d_{2}]}\phi_{2} \quad \forall t \in I_{i}$$
\label{thm:Booleansurr}
\end{restatable}

%
%

\begin{algorithm}[tbp]
\caption {Boolean monitoring for the surrounded operator}
\label {bspuntil}
\begin{algorithmic}[1]
\STATE {\bf input} $\ell, \psi= \varphi_{1} \mathcal{S}_{[d_{1},d_{2}]} \varphi_{2}$
\STATE{$\forall \ell' \in L^{\ell}_{[0,d_{2}]}$ compute $s_{\varphi_{1},\ell'},s_{\varphi_{2},\ell'}$}

\STATE compute $\mathcal{I}_{s_{\psi, \ell}}$
\COMMENT {{\small the minimal interval covering consistent with $s_{\varphi_{1},\ell'},s_{\varphi_{2},\ell'},$\quad$\ell' \in L^{\ell}_{[0,d_{2}]}$}}
	\FORALL{$ I_{i} \in \mathcal{I}_{s_{\psi, \ell}}$}
	\STATE{ $V= \{ \ell' \in  L^{\ell}_{[0,d_{2}]} | s_{\varphi_{1},\ell'}(I_{i})=1 \}$} 
	\STATE{ $Q= \{ \ell' \in  L^{\ell}_{[d_{1},d_{2}]} | s_{\varphi_{2},\ell'}(I_{i})=1 \}$} 
	\STATE{ $W= B^{+} (Q \bigcup V) $} 
		\WHILE{$W \not= \emptyset$} 
			\STATE{$W'= \emptyset$}
			\FORALL{$\ell \in W$} 
				\STATE {$N= pre(\ell) \bigcap V =\{  \ell' \in V | \ell E \ell' \} $} 
				\STATE{$V=V \backslash N$}	
				\STATE{$W'=W' \bigcup (N \backslash Q)$}	
			\ENDFOR
			\STATE{$W= W'$}  
	 	\ENDWHILE 
	\STATE{$s_{\psi,\ell} (I_{i})= \begin{cases}
      1 & \text{ if } \ell \in V, \\
      0 & \text{otherwise}.
\end{cases}$}	   
\ENDFOR
\STATE merge adjacent positive intervals $I_{i}$, i.e., $I_{i}$ s.t. $s_{\psi,\ell} (I_{i})=1$
\RETURN{$s_{\psi, \ell}$}
\end{algorithmic}
\label{algo:Boolean}
\end{algorithm}

\subsection{Quantitative Monitoring}


We now turn to the monitoring algorithm for the quantitative semantics. As the input signals are functions of continuous time, we need to make some assumptions to represent them in a finite way. Contrary to other approaches, like \cite{Donze2010}, which assume signals to be piecewise linear, we make a simpler assumption, namely that they are piecewise constant functions. More specifically, we discretise time with a step $h$, and consider a piecewise constant representation of a signal $\vec{x}$, assuming its value between time $t_k = kh$ and $t_{k+1} = (k+1)h$ being equal to $\vec{x}(kh)$. We will show how to compute the quantitative semantics for such an approximation, and discuss the error we introduce in case the input signal is Lipschitz continuous. 

Monitoring Boolean operators is straightforward, we just need to apply the definition of the quantitative semantics pointwise in the discretisation, i.e. at each time step $kh$. The time bounded until operator can also be easily computed by replacing the min and max over dense real intervals in its definition by the corresponding min and max over the corresponding finite grid of time points. In this case, however, an error is introduced, due to the discrete approximation of the Lipschitz continuous signal in intermediate points. The error accumulates at a rate proportional to $M h$, where $M$ is the Lipschitz constant of the signal $\vec{x}$, formally defined in Proposition \ref{prop:error}.

Monitoring the somewhere operator $\somewhere{[d_1,d_2]}\varphi$ is also immediate: once the location $ \ell$ of interest is fixed, similarly to the Boolean semantics, we can just turn it into a disjunction of the signals $s_{\varphi,\ell'}$ for each location $\ell' \in L^{\ell}_{[d_{1},d_{2}]}$.
 
The only non-trivial monitoring algorithm  is the one for the spatial surrounded operator, which we discuss below. However, as the satisfaction score is  computed at each time point of the discretisation and depends on the values of the signals at that time point only, this algorithm introduces no further error w.r.t. the time discretisation. 
Hence, we can globally bound the error introduced by the time discretisation (see the Appendix for the proof):
\begin{restatable}{prop}{restError}
Let the primary signal $\vec x$ be Lipschitz continuous, as well as the functions defining the atomic predicates. Let $M$ be a Lipschitz constant for all secondary signals, and $h$ be the discretisation step.  Given a SSTL formula $\varphi$, let $u(\varphi)$ count the number of temporal until operators in $\varphi$, and denote by $\rho(\varphi,\vec x)$ its satisfaction score over the  trace $\vec x$ and by $\rho(\varphi,\vec{\hat x})$ the satisfaction score over the discretised version $\vec{\hat{x}}$ of $\vec x$ with time step $h$. Then 
$ \| \rho(\varphi,\vec x) - \rho(\varphi,\vec{\hat x}) \| \leq u(\varphi) M h $.
\label{prop:error}
\end{restatable}

The quantitative monitoring procedure for the bounded surrounded operator is shown in Algorithm \ref{algo:quantitative}.
Similarly to the Boolean case, the algorithm for the surrounded formula $\psi=\varphi_{1} \surround{[d_{1},d_{2}]} \varphi_{2}$ takes as input a 
location $\ell$ and returns the quantitative signal $s_{\psi,\ell}$, or better its piecewise constant approximation with time-step $h$ (an additional input, together with the signal duration $T=mh$). 
As a first step, it computes recursively the quantitative satisfaction signals of subformula $\varphi_1$,  for all locations $\ell' \in L^{\ell}_{[0,d_{2}]}$, and of subformula $\varphi_2$,  for all locations $\ell' \in L^{\ell}_{[d_{1}, d_{2}]}$. Furthermore, it sets all the quantitative signals for $\varphi_1$ and $\varphi_2$ for the other locations to the constant signal equal to minus infinity.
The algorithm runs a fixpoint computation for each time instant in the discrete time set $\{0,h,2h,\ldots,mh\}$. The procedure is based on computing a function $\mathcal{X}$, with values in the extended reals $\mathbb{R}^*$, which is executed on the whole set of locations $L$, but for the modified signals equal to $-\infty$ for locations not satisfying the metric bounds for $\ell$. The function $\mathcal{X}$ is defined below.

\begin{defi}
\label{defX}
Given a finite set of locations $L$ and two functions $s_{1}: L \rightarrow \mathbb{R}^{*}, s_{2}: L \rightarrow \mathbb{R}^{*}$. The function $\mathcal{X}: \mathbb{N} \times L \rightarrow \mathbb{R}$ is inductively defined as: 
\newpage
\begin{enumerate}
\item $\mathcal{X}(0,\ell)=s_{1}(\ell)$
\item $\mathcal{X} (i+1, \ell)=\min(\mathcal{X}(i, \ell), \min_{\ell'|\ell E \ell'}(\max(\mathcal{X}(i,\ell'),s_{2}(\ell'))))$
\end{enumerate}
\end{defi}
The algorithm then computes the function $\mathcal{X}$ iteratively, until a fixed-point is reached. 
\begin{restatable}{thm}{restFixedpointQuant}
\label{thm:fixedpoint_quant}
Let $s_{1}$ and $s_{2}$ be as in Definition \ref{defX}, and 

$$s(\ell)=\max_{A \subseteq L, \ell \in A}{( \min (\min_{\ell' \in A}s_{1}( \ell'),\min_{ \ell' \in B^{+}(A)}s_{2}( \ell')))},$$
then $$\lim_{i \rightarrow \infty} \mathcal{X}(i, \ell) = s(\ell), \qquad \forall \ell \in L.$$
 Moreover, $\exists K>0$ s. t. $\mathcal{X}(j, \ell) = s(\ell), \forall j\geq K$.
\end{restatable}

\begin{algorithm}[bp] 
\caption{Quantitative monitoring for the surrounded operator}
\label {sur_quant_alg}
\vspace{1mm}
\begin{algorithmic}[1]
\STATE inputs: $\ell, \psi=\varphi_{1} \mathcal{S}_{[d_{1},d_{2}]} \varphi_{2}$ , $h$, $T$ 
\FORALL {$\ell' \in L$}
	\STATE \qquad \textbf{if} $ 0\leq d(\ell,\ell')\leq d_{2}$ \textbf{then}
	\STATE \qquad \qquad compute $s_{\varphi_{1},\ell'}$
	\STATE \qquad \qquad  \textbf{if} $ d(\ell,\ell') \geq d_{1}$ \textbf{then} 
	\STATE \qquad \qquad \qquad  compute $s_{\varphi_{2},\ell'}$ 
	\STATE \qquad \qquad \textbf{else} $s_{\varphi_{2},\ell'} = -\infty$
	\STATE \qquad  \textbf{else} $s_{\varphi_{1},\ell'} = -\infty, s_{\varphi_{2},\ell'} = -\infty$
\ENDFOR
	\FORALL { $t \in \{ 0, h, 2h, \ldots,T\}$}
	 	\FORALL {$\ell' \in L$}
			\STATE \qquad $\mathcal{X}_{prec}(\ell')= +\infty$
			\STATE \qquad $\mathcal{X}(\ell')= s_{\varphi_{1},\ell} (t)$ 
		\ENDFOR
		\WHILE {$\exists \ell' \in L, \mbox{ s.t. } \mathcal{X}_{prec}(\ell') \not= \mathcal{X}(\ell')$} 
		     \STATE $\mathcal{X}_{prec} =\mathcal{X}$
			\FORALL {$\ell' \in L$}
				\STATE $\mathcal{X}(\ell')=\min(\mathcal{X}_{prec}(\ell'), \min_{\ell''|\ell' E \ell''}(\max(s_{\varphi_{2},\ell''}(t),\mathcal{X}_{prec}(\ell''))))$
			\ENDFOR
							
		\ENDWHILE
		\STATE  $s_{\psi,\ell}(t)=\mathcal{X}(\ell)$  
 	\ENDFOR
\RETURN $s_{\psi,\ell}$

\end{algorithmic}
\label{algo:quantitative}
\end{algorithm} 

The following corollary provides the correctness of the method. It shows that, when $\mathcal{X}$ is computed for the modified signals constructed by the algorithm, it returns effectively the quantitative satisfaction score of the spatial surrounded operator. 

\begin{restatable}{cor}{restRhomin}
Given an $\hat{\ell} \in L$, let $\psi=\varphi_{1} \mathcal{S}_{[d_{1},d _{2}]} \varphi_{2}$ and\\
 \begin{align*}
 & s_1(\ell)= \begin{cases}
       \rho(\phi_{1}, \vec x, t, \ell) & \text{ if }  \mbox{ } 0\leq d(\hat{\ell},\ell)\leq d_{2} \\
        -\infty &   \text{ otherwise}.
       \end{cases}  \\
 & s_2(\ell)= \begin{cases}
       \rho(\phi_{2}, \vec x, t, \ell) & \text{ if }  \mbox{ } d_{1} \leq  d(\hat{\ell},\ell) \leq d_{2} \\
       -\infty & \text{ otherwise}.
      \end{cases}
\end{align*}
 Then $\rho(\psi,  \vec x, t  ,\hat{\ell}) = s(\hat{\ell})=\max_{A \subseteq L, \hat{\ell} \in A}{( \min (\min_{\ell \in A}s_{1}( \ell),\min_{ \ell \in B^{+}(A)}s_{2}( \ell))).}$
 \label{rhomin}
\end{restatable}

In order to discuss the complexity of the monitoring procedure, we need an upper bound on the number of iterations of the algorithm. This is given by the following.
\begin{restatable}{prop}{restDiameterconv}
\label{diameterconv}
Let   $d_G$ be the diameter of the graph $G$ and $\mathcal{X} (\ell)$ the fixed point of $\mathcal{X}(i,\ell)$, then $\mathcal{X} (\ell)=\mathcal{X} (d_G+1, \ell)$ for all $\ell \in L$.
\label{prop:complexity}
\end{restatable}

It follows that the computational cost for each {\em single} location is $O(d_G |L| m)$, where $m$ is the number of sampled time-points.   The cost for {\em all} locations is therefore $O(d_G |L|^2 m)$.

The proofs of Theorem~\ref{thm:fixedpoint_quant}, Corollary~\ref{rhomin}  and Proposition~\ref{diameterconv} are reported in Appendix~\ref{app:quantproof}.



\subsection{SSTL Monitoring Implementation}
\label{sec:inmplementation}

To support qualitative and quantitative monitoring of SSTL properties, a prototype tool has been developed. This tool, developed in \textsf{Java}, consists of a \textsf{Java} library (\textsf{jSSTL} API) and a front-end, integrated in ECLIPSE. Both the library and the ECLIPSE plugin can be downloaded from \url{https://github.com/Quanticol/jsstl}. All the scenarios considered in this paper are available at \url{https://github.com/Quanticol/jsstl-examples} to permit the replication of experiments.

The library can be used  to integrate \textsf{jSSTL} within other applications and tools, whereas the ECLIPSE plugin provides a user friendly interface to the tool. Furthermore, the modular approach of the implementation allows us to develop different front-ends for \textsf{jSSTL}. The tool has been described in more detail in \cite{jsstl16}.
   
The ECLIPSE plugin provides a simple user interface to specify and verify SSTL properties of spatio-temporal trajectories generated from the simulation of a system or from real observations.  We can specify the properties, describe a model of the space (i.e., its  graph structure), import the trajectories and then verify whether such trajectories satisfy the specified properties.

In Figure~\ref{fig:pluginEclipse}, a snapshot is shown of the ECLIPSE plugin.
It provides an {\it editor} for \textsf{jSSTL}, containing the script with the SSTL properties that we want to analyse in our scenario (on the left)
and a {\it view} to visualise the space model, the data and the results of the analyses (on the right). 

\begin{figure} [htbp] 
\hspace{-2mm}
\includegraphics[width=0.95\textwidth]{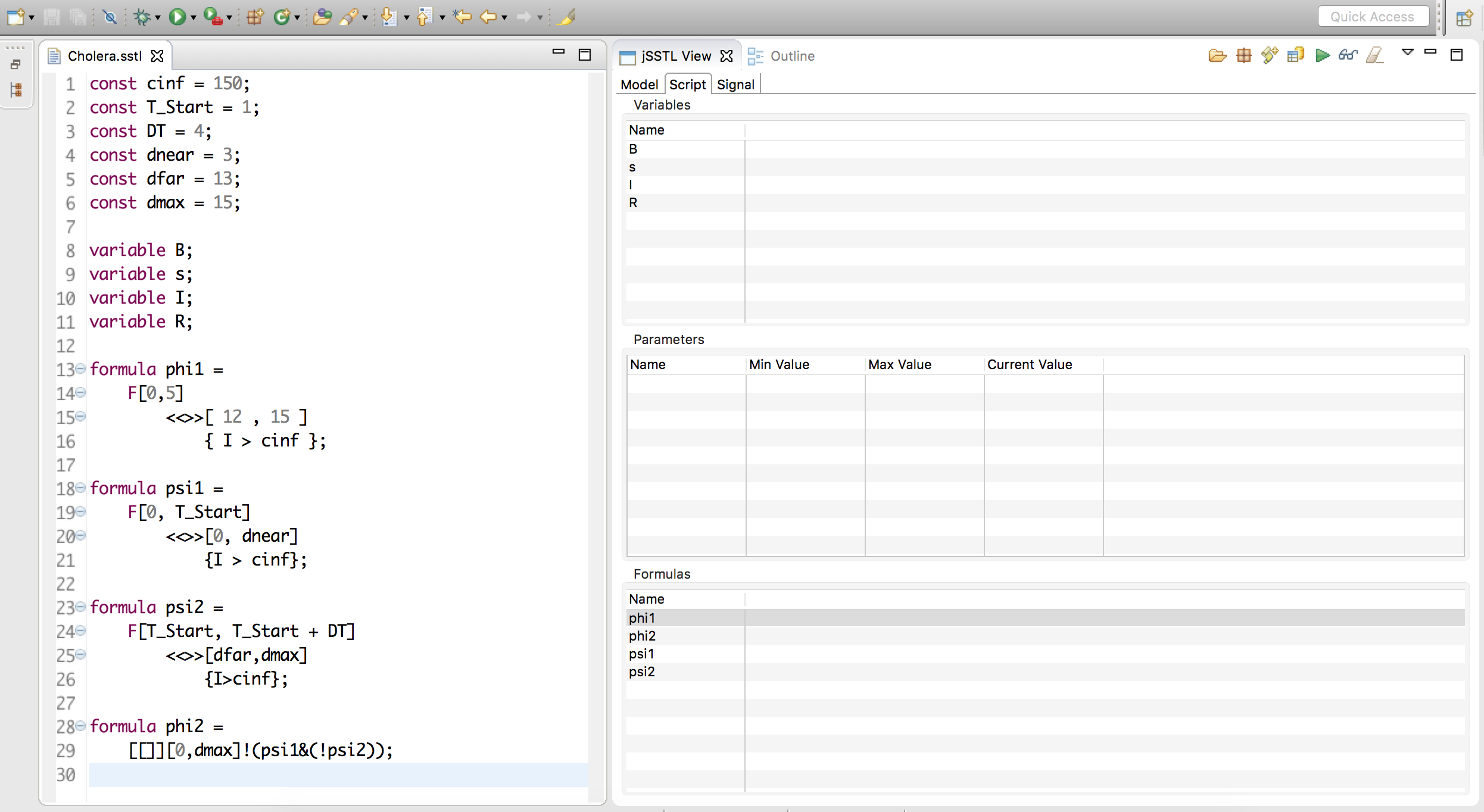}
\caption{The \textsf{jSSTL} ECLIPSE plugin}
\label{fig:pluginEclipse}
\end{figure}

\vspace{1mm}

\subsection*{Implementation performance.}
Model (\ref{model}) has been coded in Matlab/Octave, and the monitoring has been performed by our Java implementation. 
As time performance, 
the verification of property $\phi_{\mathrm{spot}}$ took $1.04s$ (Boolean) and $69.39s$ (quantitative) for all locations and 100 time points, while property $\phi_{\mathrm{pattern}}$ took $1.81s$ and $70.06s$, and property $\phi_{\mathrm{pert}}$ took $28,19s$ and $55,31s$, respectively. The computation of the distance matrix can be done just once because it remains always the same for a given system, this takes about $23s$.
 All the experiments were run on an 
 Intel Core i5  2.6 GHz CPU, with 8GB 1600 MHz RAM.


\section{SSTL Analysis of Stochastic Spatio-temporal Systems}
\label{sec:stochAnalysis}
 

SSTL can also be applied to describe properties of stochastic spatio-temporal systems. Stochastic models describe the evolution of systems in space and time that show noisy behaviour, due to internal random mechanisms (like in epidemic spreading models), or environmental effects (typically captured with Stochastic Differential Equations, SDEs). Independently from the mathematical device to describe them, stochastic models induce a probability measure on the space of all possible traces (i.e. on the so-called Skorokhod space, the space of c\`adl\`ag functions, which are piecewise continuous functions from time taking real values). Each SSTL formula $\phi$ describes a subset of these traces, those satisfying it, which is measurable (with respect to the topology of the Skorokhod space), as can be proved by a simple adaptation of the proof in \cite{TCS2015}, owing to the discrete nature of space considered here. Measurability implies that we can calculate the probability $p(\phi)$ of this set, which is known as the satisfaction probability of $\phi$ for the stochastic model considered.  The goal for analysing these systems is thus to compute such satisfaction probability.  Due to computational unfeasibility of exact methods \cite{barbot2011}, the  mainstream approach in verification is to rely on Statistical Model Checking \cite{Younes2004}, which combines simulation of the stochastic model (i.e. an algorithm that samples traces according to the probability distribution of the model in the Skorokhod space) with a monitoring routine for the property $\phi$. More specifically, every time a trace is generated by the simulator, it is passed to the Boolean monitor, which returns either 0 (false) or 1 (true). Probabilistically, this can be seen as a sample of a Bernoulli random variable, having probability $p(\phi)$ of observing 1. From a finite sample of such values, we can rely on standard statistical tools to estimate $p(\phi)$ and to compute the confidence level of such an estimate. A scheme of Statistical Model Checking (SMC) is shown in Figure~\ref{fig:SMC}. 
Examples of spatio-temporal model checking to compute the approximated probabilistic satisfaction can be found in~\cite{bartocci2015, CLMPV16}

%


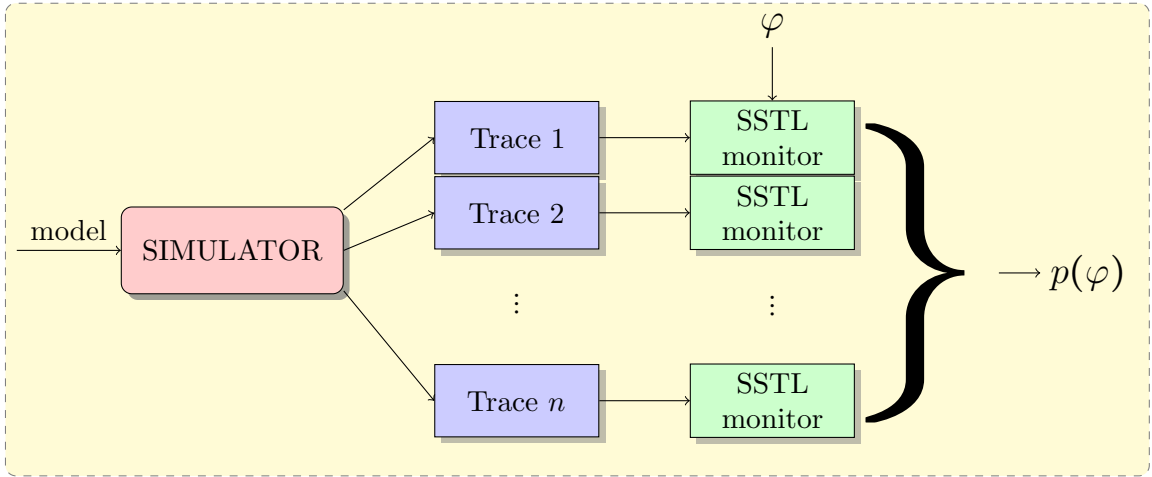
\begin{figure}[h]
\centering
\pgfdeclarelayer{background}
\pgfdeclarelayer{foreground}
\pgfsetlayers{background,main,foreground}
\tikzstyle{trace}=[draw, fill=blue!20, text width=5em, 
    text centered, minimum height=2.5em,drop shadow]
    \tikzstyle{monitor}=[draw, fill=green!20, text width=5em, 
    text centered, minimum height=2.5em,drop shadow]
\tikzstyle{ann} = [above, text width=5em, text centered]
\tikzstyle{wa} = [trace, text width=7em, fill=red!20, 
    minimum height=3em, rounded corners, drop shadow]
\tikzstyle{sc} = [trace, text width=13em, fill=red!20, 
    minimum height=10em, rounded corners, drop shadow]
    \tikzstyle{sc} = [monitor, text width=13em, fill=red!20, 
    minimum height=10em, rounded corners, drop shadow]
\def\blockdist{2.3}
\def\edgedist{2.5}
\begin{tikzpicture}
    \node (wa) [wa]  {SIMULATOR};
    \path (wa.east)+(\blockdist,1.5) node (asr1) [trace] {Trace $1$};
    \path (asr1.east)+(\blockdist,0) node (mon1) [monitor] {SSTL monitor};
       \path (mon1.north)+(0,1) node (phi) {{\fontsize{0.5cm}{0.5cm}\selectfont $\phi$}};
    \path (wa.east)+(\blockdist,0.5) node (asr2)[trace] {Trace $2$};
    \path (asr2.east)+(\blockdist,0) node (mon2) [monitor] {SSTL monitor};
    \path (wa.east)+(\blockdist,-1.0) node (dots)[ann] {$\vdots$}; 
      \path (dots.east)+(\blockdist,-0.3) node (dots2)[ann] {$\vdots$}; 
    \path (wa.east)+(\blockdist,-2.0) node (asr3)[trace] {Trace $n$};    
    \path (asr3.east)+(\blockdist,0) node (mon3) [monitor] {SSTL monitor};
       \path (mon2.east)+(0.8,-0.8) node (vote)  { {\fontsize{4cm}{4cm}\selectfont $\}$}};
    \path (vote)+(\blockdist,0) node (result)  {{\fontsize{0.5cm}{0.5cm}\selectfont $p(\phi)$}};

    \path [draw, ->] (asr1)-- node [above] {}   (mon1)  ;
        \path [draw, ->] (asr2)-- node [above] {}   (mon2)  ;
            \path [draw, ->] (asr3)-- node [above] {}   (mon3)  ;
            
    \path [draw, ->] (wa.0) -- node [above] {}  (asr2.west);
    \path [draw, ->] (wa.-20) -- node [above] {}  (asr3.west) ;
    \path [draw, ->]   (vote)-- node [above] {} (result.west) ;   
      \path [draw, ->] (phi)-- node [above] {}   (mon1)  ;
          \path [draw, ->] (wa.20)-- node [above] {}   (asr1.west)  ;
      
    \begin{pgfonlayer}{background}
        \path (asr1.west |- asr1.north)+(-5.7,1.3) node (a) {};
        \path (vote.east |- wa.east)+(2,-3) node (c) {};
          
        \path[fill=yellow!20,rounded corners, draw=black!50, dashed]
            (a) rectangle (c);           
        \path (asr1.north west)+(-0.2,0.2) node (a) {};
            
              \path (-3,0) node (xx) {};
            \path [draw, ->] (xx) -- node [above] {model} (wa) ;
    \end{pgfonlayer}

\end{tikzpicture}
\caption{Scheme of Statistical Model Checking (SMC).}
\label{fig:SMC}
\end{figure}

%

%
 
The integration of SSTL monitoring with statistical model checking opens the possibility of checking spatio-temporal properties of stochastic models, providing a powerful tool to explore complex stochastic behaviours in space and time. Moreover, a similar mechanism can be put in place for the quantitative semantics. In this case, SMC enables one to estimate the  average robustness of a formula (and other moments, like the variance). The combination with SMC and the quantitative semantics has been explored earlier for STL  in~\cite{TCS2015} and applied to tasks like system design and parameter synthesis \cite{TCS2015,silvetti16,silvetti18}. The spatio-temporal logics SpaTeL and STLCS have also been used in conjunction with statistical spatio-temporal model checking for the Boolean semantics; see~\cite{bartocci2015} and in~\cite{CLMPV16}, respectively, the latter focusing on usability issues in bike-sharing systems.


In the following, we give a taste of the use of SSTL to analyse stochastic models, considering a variant of the  running example, in which the pattern formation mechanism is subject to external random perturbations, which is a more realistic scenario that the one described by a deterministic ODE model.    An additional more complex example can be found in the next section.

\begin{exa}[{\bf Persistence of External Perturbations}]
We consider the effects of external perturbations of the Turing pattern formation system, adding a white Gaussian noise to the set of equations (\ref{model}).  In particular, we add a random fluctuation $\eta( \vec z, t)$, with zero-mean and covariance matrix
$<\eta(\vec z,t),\eta(\vec{z}',t)> = \epsilon^2 \delta z^2 \delta t$, where $\epsilon$ is the noise intensity and $\vec z = (i,j) \in [0, 32]^2$ corresponds to the position.
The methodology follows \cite{lesmes_noise-controlled_2003,leppanen_effect_2003}, and results in a set of Stochastic Differential Equations. In the discretisation, we set $\Delta z = 1$ and $\Delta t =0.01$.
We study how the noise affects the dynamics for fixed deterministic 
parameters, analysing in detail the capability of the system to maintain a specific pattern with respect to the external perturbation. 
To this aim, we fix the parameters of the system:  $K=32, R_{1}=1, R_{2}=-12, R_{3}=-1, R_{4}= 16, D_{1} = 5.6$ and $D_{2}= 25.5$ and we 
evaluate formula \ref{phi:pattern} (the pattern formula) with parameters
 $h=0.5, T_{\mathrm{pattern}}= 19, \delta = 1, T_{\mathrm{end}} =30,  w_{1}=1, w_{2}=6$ for different values of the noise intensity $\epsilon$.

In Figure~\ref{fig:prob},
we show how the satisfaction probability decreases as a function of the noise intensity $\epsilon$; $\epsilon$ was varied between $0$ and $0.9$ in steps of 0.1 units.  We estimate the probability statistically from $10,000$ runs for each parameter value. This result shows that an increasing intensity of noise prevents the system from establishing a regular pattern of spots of the form we have seen in a non-perturbed variant of the system.
%
In Figure~\ref{fig:probvsRob}, 
 we plot the satisfaction probability versus the average 
robustness degree, estimating them statistically from $10,000$ runs for each parameter value. The satisfaction probability varies from  1 to 0 while the average robustness score varies from $-80.01  \pm 10.7$ to $0.008  \pm 0$.
As we can see, these two quantities seem to be correlated. In other words, the higher is the number of runs in which we find that the pattern formation property holds, the higher is the average robustness with which this happens.
By varying the threshold, the Pearson's correlation coefficient 
between satisfaction probability and robustness degree is $0.784$.
%
  \begin{figure}[!t]
\begin{center}
\subfigure[]{
\label{fig:prob}
\includegraphics[width=.465 \textwidth]{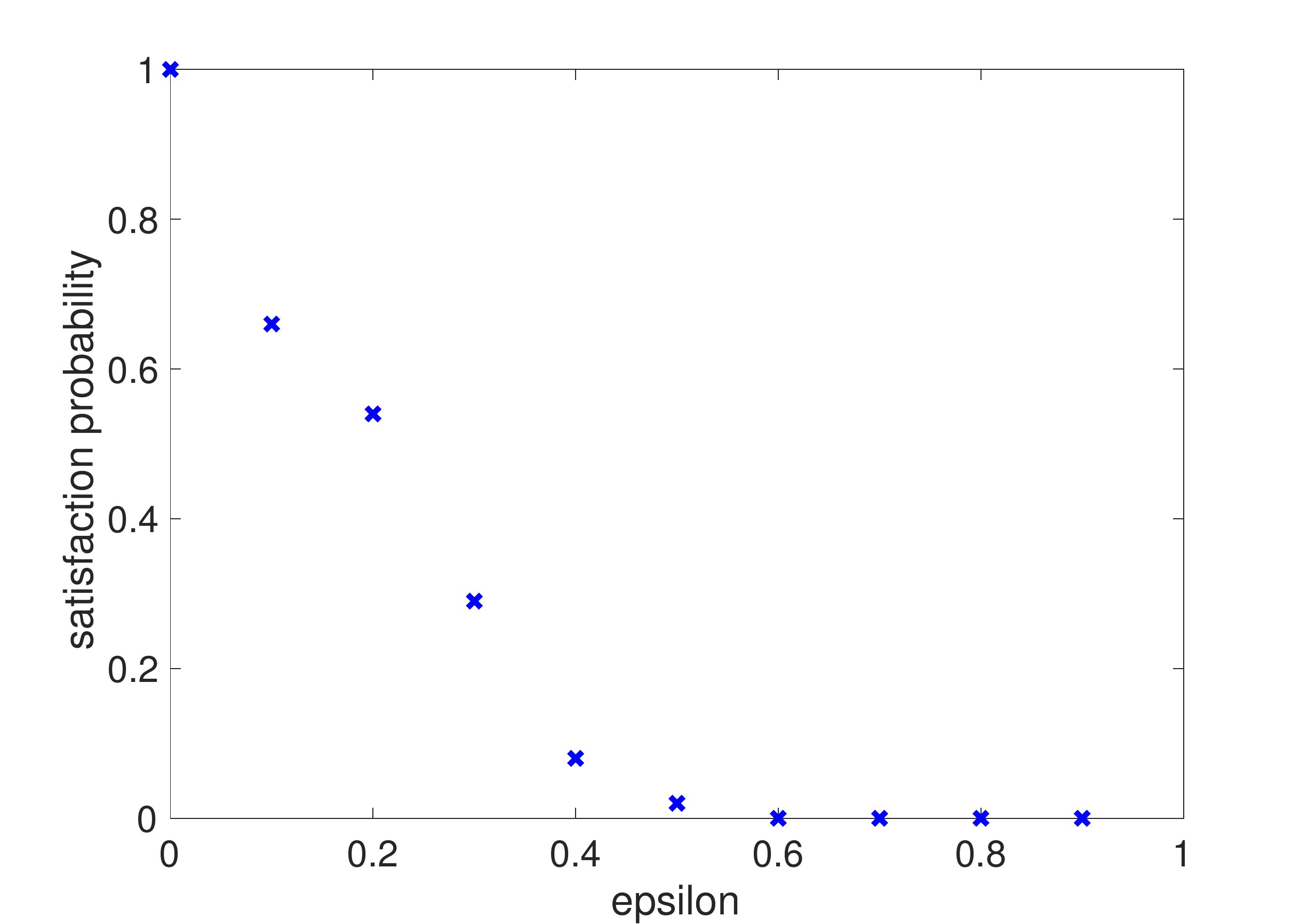}
}
\subfigure[]{
\label{fig:probvsRob}
\includegraphics[width=.44 \textwidth]{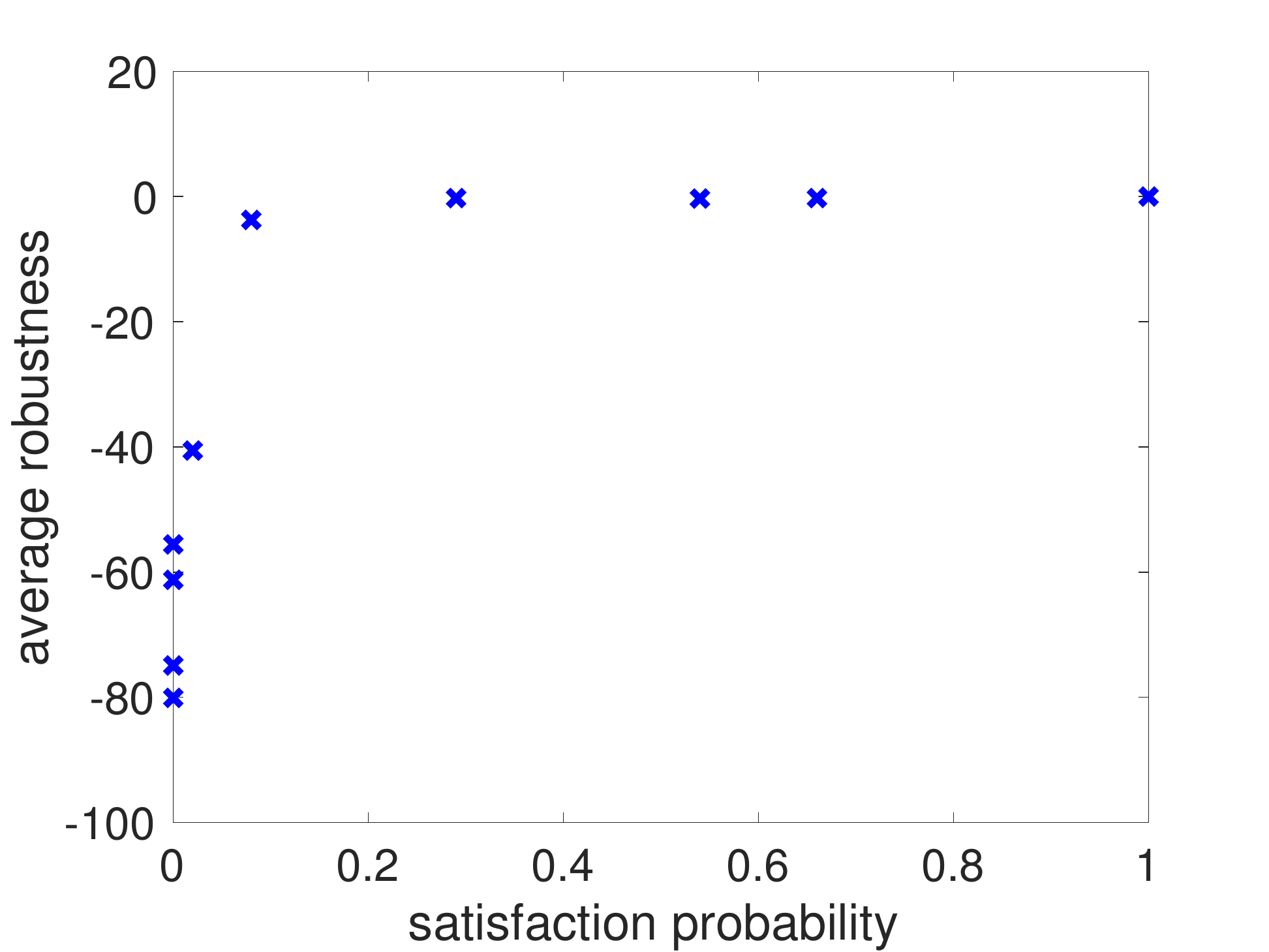}
}
\end{center}
\caption{ (a) Satisfaction probability vs. noise intensity; (b) Satisfaction probability vs. average robustness degree.}
\label{fig:}
\end{figure}
 %
\end{exa}



\section{Case study: bike sharing system}
\label{sec:bikeSharing}

In this section, we present an analysis of the London Santander Cycles Hire scheme, a bike sharing system, modelled  as a {\it Population Continuous Time Markov Chain} (PCTMC)\cite{tutorial} with time-dependent rates. We use SSTL to study a number of spatio-temporal properties of the system and to explore their robustness considering a set of parameter values for the formulas. 

\subsection{Model} \label{sec:model}
The Bike-Sharing System (BSS) is composed of a number of bike stations, distributed over a geographic area. Each station has a fixed number of bike slots. The users can pick up a bike, use it for a while, and then return it to another station in the area. Following~\cite{Feng:2016}, we model the BSS as a Population Continuous Time Markov Chain (PCTMC) with time-dependent rates, leading to a hybrid system. The model, given the bike availability in a station at time t, predicts the probability distribution of the number of available bikes in that station at time $t + h$ with $h \in [0, 40]$ minutes. The parameters of the model have been set using the historic journey data and bike availability data from January 2015 to March 2015 from the London Santander Cycles Hire scheme. In detail, we can describe the model through the following transitions:
\begin{align*}
\label{}
&B_i \rightarrow  \mbox{ } S_i &\mbox{ at rate } out_i(t) \cdot p^i_{out}, &\qquad \forall i \in (1,N)\\
&S_i \rightarrow  B_i  & \mbox{ at rate } in_i(t) \cdot \lambda^i_{in}, &\qquad \forall j \in (1,N) \\
&B_i \rightarrow  \mbox{ } S_i + T^i_j &\mbox{ at rate } out_i(t)\cdot p^i_j(t), &\qquad \forall i, j \in (1,N)\\
&S_j + T^i_j   \rightarrow B_j  & \mbox{ at rate } in^i_j( \# T^i_j ), &\qquad \forall i, j \in (1,N).
\end{align*}
Here,  $B_i$ (respectively $S_i$) represents the bike agent (respectively the slot agent) in the $i^{th}$ station, $T^i_j$ is the bike agent travelling from pick-up station $i$ to return station $j$. To these different agents we associate counts, e.g. $\# T^i_j$ denotes the population size of an agent type $T^i_j$, while $N$ is the total number of stations.
Each transition in the list describes a possible event changing the state of the system. In general,  a transition rule like $B_i \rightarrow  S_i + T^i_j$ models that an agent $B_i$ is removed from the system (a bike leaves station $i$), while new agents  $S_i$ and $T^i_j$ are added to the system (a free slot is added to station $i$, while the bike is set to travel towards station $j$). This is reflected also in the population counts: upon the firing of such a transition, $\# B_i$ will be decreased by 1, while $\# S_i$ and $\# T^i_j$  are increased by 1.

In the list of transitions above, the first transition represents the pick-up of a bike from a station $i$ and its ``removal'' from the system. A bike is considered removed from the system in two cases: if its destination is a station outside the model (in case we model a subset of stations) or if the bike is taken for a time which is much longer than the usual travel time between two stations, which is about 15 minutes. This can happen, for example, if the bike is taken for a day trip. Similarly, the second transition models the return to station $i$ of a previously removed bike. The last two transitions model the pick-up and return of a bike by users of the system that are travelling between the bike stations modelled in the system.
Each transition has a rate, encoding the average frequency with which it happens, based on historic journey data. These rates, together with the updates in population counts, define a stochastic model in terms of a Continuous Time Markov Chain, which can be simulated with standard stochastic simulation algorithms.  We refer the interested reader e.g. to \cite{tutorial} for further details.  
More specifically, $out_i(t)$  (respectively $in_i(t)$) is the bike pick-up (respectively return) rate in station $i$ at time $t$,  $\lambda^i_{out}(t)$ (respectively $\lambda^i_{in}(t$) is the probability that a bike leaves (respectively enters from) outside the system, $p^i_j(t)$ is the probability that a journey will end at station $j$ given that it started from station $i$ at time $t$, $in^i_j$, instead, is the return rate of a bike pick-up in station $i$ that will be returned in station $j$.
The model considered is very large comprising 733 bike stations (each with 20-40 slots) and a total population of 57,713 agents (users) picking up and returning bikes. 
In our model the first two transitions have very small rates. They are more significant in the analysis of submodels where just a subsystem of BSS stations are considered, i.e. where there are considered stations outside the system.

 \subsection{Spatio-temporal analysis of the Bike Sharing Systems using SSTL} 
%
We simulate the model using Simhya~\cite{Bort_Qapl2012}, a Java tool for the simulation of stochastic and hybrid systems. In particular we exploit its Gibson-Bruck (GB) algorithm. Following~\cite{Feng:2016}, the time-dependent rates change 2 times, all at the same time unit: the first interval is 14 minutes, the second interval is 20 minutes, the third interval is 6 minutes; we simulate the model for 40 minutes. Then, we consider the trajectories only of the bike (B) and slot (S) agents, in each station. Our spatio-temporal trace is then $X_B (t,\ell) = (X_B(t,\ell), X_S(t,\ell))$, i.e.  the number of bikes and free slots at each time, in each station.
 The space is represented by a weighted graph, where the nodes are the stations and the edges describe the connection between each station. Two nodes are connected is they are at a distance less or equal to $1$ Kilometer.
The weighted function $w: E \rightarrow \mathbb{R}$ returns the distance in kilometres between any two stations, where $E=L \times L$ is the set of edges and $L$ is the set of stations.
 
One of the main problems of these systems consists in the availability of bikes or free slots in each station. Two important questions related to this issue from a user's point of view are: 
 \begin{itemize}
 \item if I don't find a bike (free slot, resp.) is there another station at the distance less than a certain value where I can find a bike (free slot, resp.) slot? 
 \item if I don't find a bike (free slot, resp.), how long should I wait before another user returns a bike or picks one up, respectively?
 \end{itemize}
 These concerns can be expressed by the SSTL properties described below. The first one is given by:
 \begin{equation}
 \phi_{1}= \glob{[0,T_{end}]}\{\somewhere{[0,d]} (B>0) \wedge \somewhere{[0,d]} (S > 0)\}
\label{formuladistance}
\end{equation}
A station $\ell$ satisfies $\phi_1$ if and only if it is always true that, between 0 and $T_{end}$ minutes, there exists a station at a distance less than or equal to $d$, that has at least one bike and a station at a distance less or equal to $d$ that has at least one free slot. 

In the analysis, we explore the value of parameter $d \in [0, 1]$ kilometres to see how the satisfaction of the property changes in each location.
Figure~\ref{phi1all} shows the approximate satisfaction probability $p_{\phi_1}$ for 1000 runs for all the stations, for  (a) $d = 0$, (b) $d = 0.2$ , (c) $d = 0.3$ and (d) $d = 0.5$ km. We can see that for $d=0$ many stations have a high probability to be full or empty (indicated by red points), with standard deviation of all the locations in the range [0,   0.0158] and mean standard deviation  0.0053.
 However, increasing $d$ to $d=0.2$ km, i.e. allowing a search area of up to 200 metres from the station that currently has no bikes, or no slots resp., we greatly increase the satisfaction probability of $\phi_1$, with a standard deviation that remains in the same range and mean standard deviation of 0.0039. For $d=0.5$, the probability of $p_{\phi_1}$ is greater than 0.5 for all the stations; standard deviation is in the range [0, 0.0142] and mean stdv is 0.0002. Figure~\ref{singleloc} (a) shows the satisfaction probability of some BBS stations vs distance d=[0,1.0].
\begin{figure}[!t]
\begin{center}
\subfigure[]{
\label{phi1alld0}
\includegraphics[width=.465 \textwidth]{./figs/phi1alld0}
}
\subfigure[]{
\label{phi1alld200}
\includegraphics[width=.465 \textwidth]{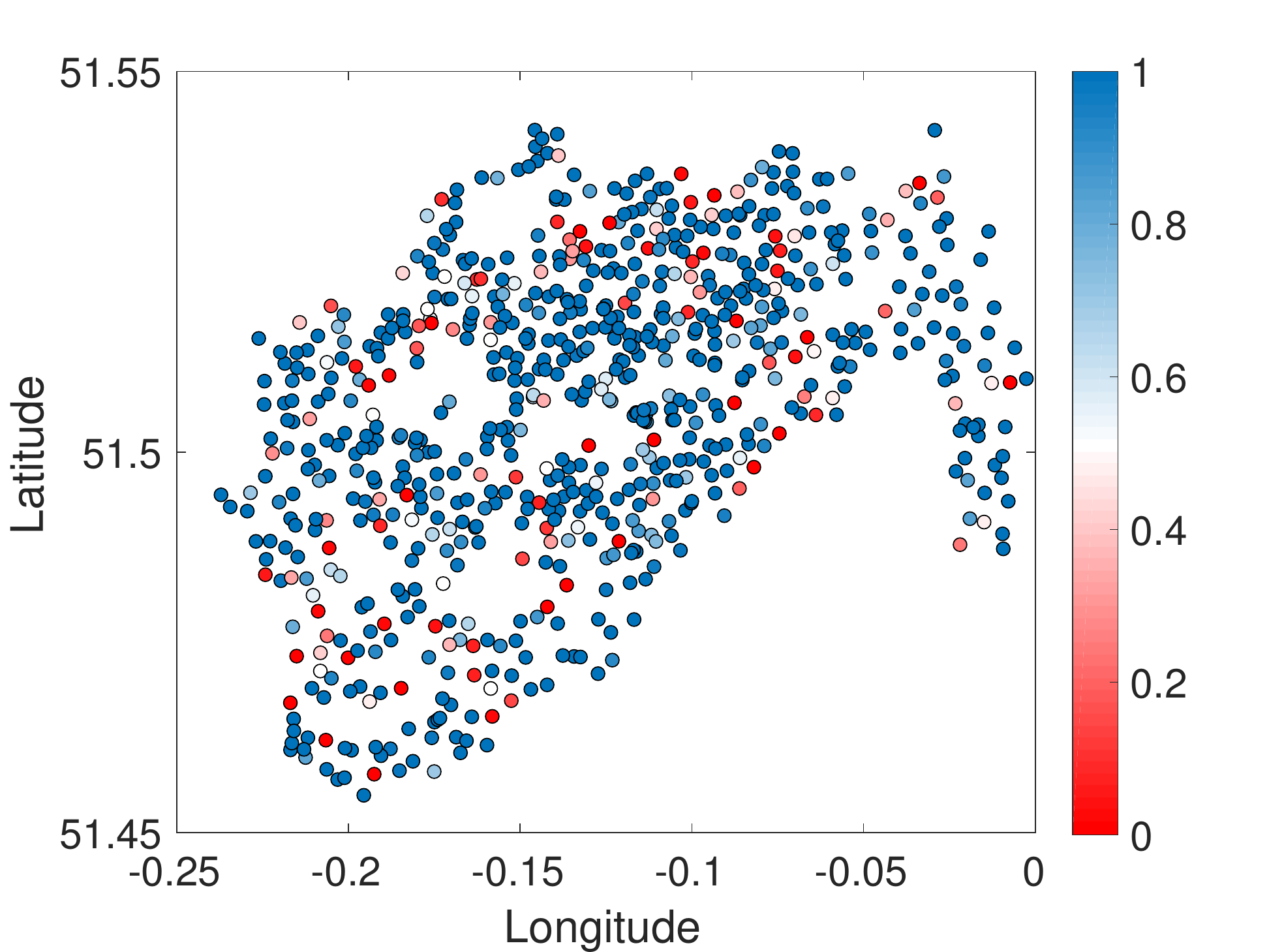}
}
\subfigure[]{
\label{phi1alld300}
\includegraphics[width=.465 \textwidth]{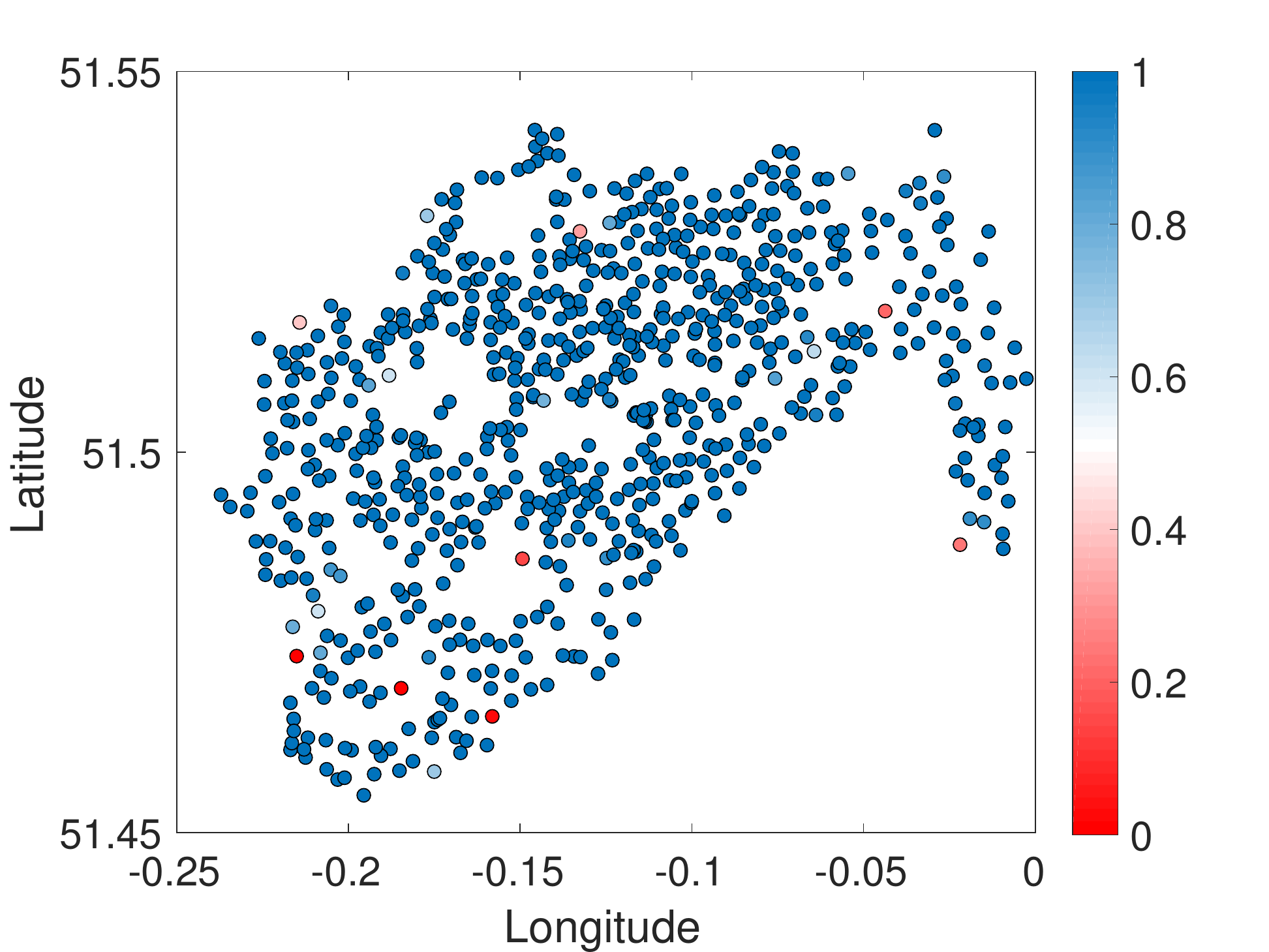}
}
\subfigure[]{
\label{phi1alld500}
\includegraphics[width=.465 \textwidth]{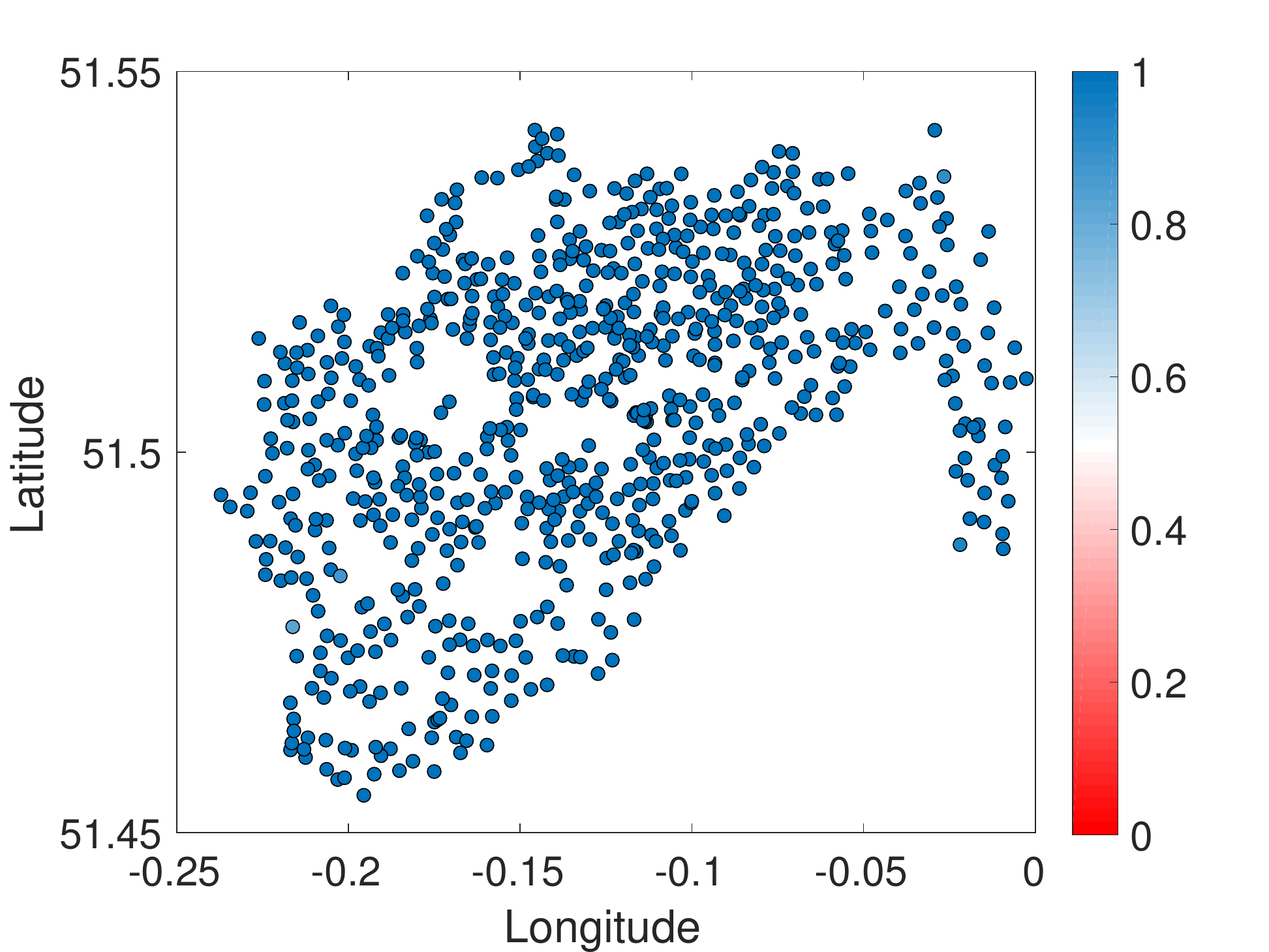}
}
\end{center}
\caption{ Approximate satisfaction probability of formula $\phi_1$ for $1000$ runs for each BSS station for (a) $d=0$, (b) $d=0.2$, (c) $d=0.3$ and (d) $d=0.5$. The value of the probability is given by the color legend.}
\label{phi1all}
\end{figure}

%

The second property ``if I wait $t$ minutes, I will always find a bike and a free slot'' can instead be formalised by the following property: 
\begin{equation}
 \phi_{2}= \glob{[0,T_{end}]}\{\ev{[0,t]} ((B>0)  \wedge(S > 0)) \}
\label{formulatime}
\end{equation}
A station $\ell$ satisfies $\phi_2$ if and only if it is always true that, for each time $h \in [0, T_{end}]$, eventually, in a time between $h$ and $h+t$, there will be a bike and a free slot available in $\ell$. 

In the analysis, we explore parameter $t \in [0, 10]$ minutes to see how the satisfaction of the property changes in each location with respect to the waiting time $t$ for a bike or a free slot. Figure~\ref{phi2all} shows the approximate satisfaction probability $p_{\phi_2}$ for 1000 runs for all the stations, for (a) $t = 0$ and (b) $t = 10$.  In this case, we can see that waiting a certain amount of time does not greatly increase  the probability to always find a bike or a free slot in the stations; even after waiting for 10 minutes Fig.~\ref{phi2all} (b) shows that there are still many stations that satisfy $\phi_2$ with a nearly zero probability. This means that there are stations that remain full or empty for at least 10 minutes.
Considering that the average human walking speed is about 5.0 kilometres per hour (km/h), we can conclude that if users do not find a bike or a free slot in a station, they usually have more probability to find the bike/free slot in another station rather than when waiting in the same station. The standard deviation remains in the interval [0,0.0159]. Figure~\ref{singleloc} (b) shows the satisfaction probability of some BBS stations vs the time t=[0,10].

\begin{figure}[!t]
\begin{center}
\subfigure[]{
\label{phi2t0}
\includegraphics[width=.465 \textwidth]{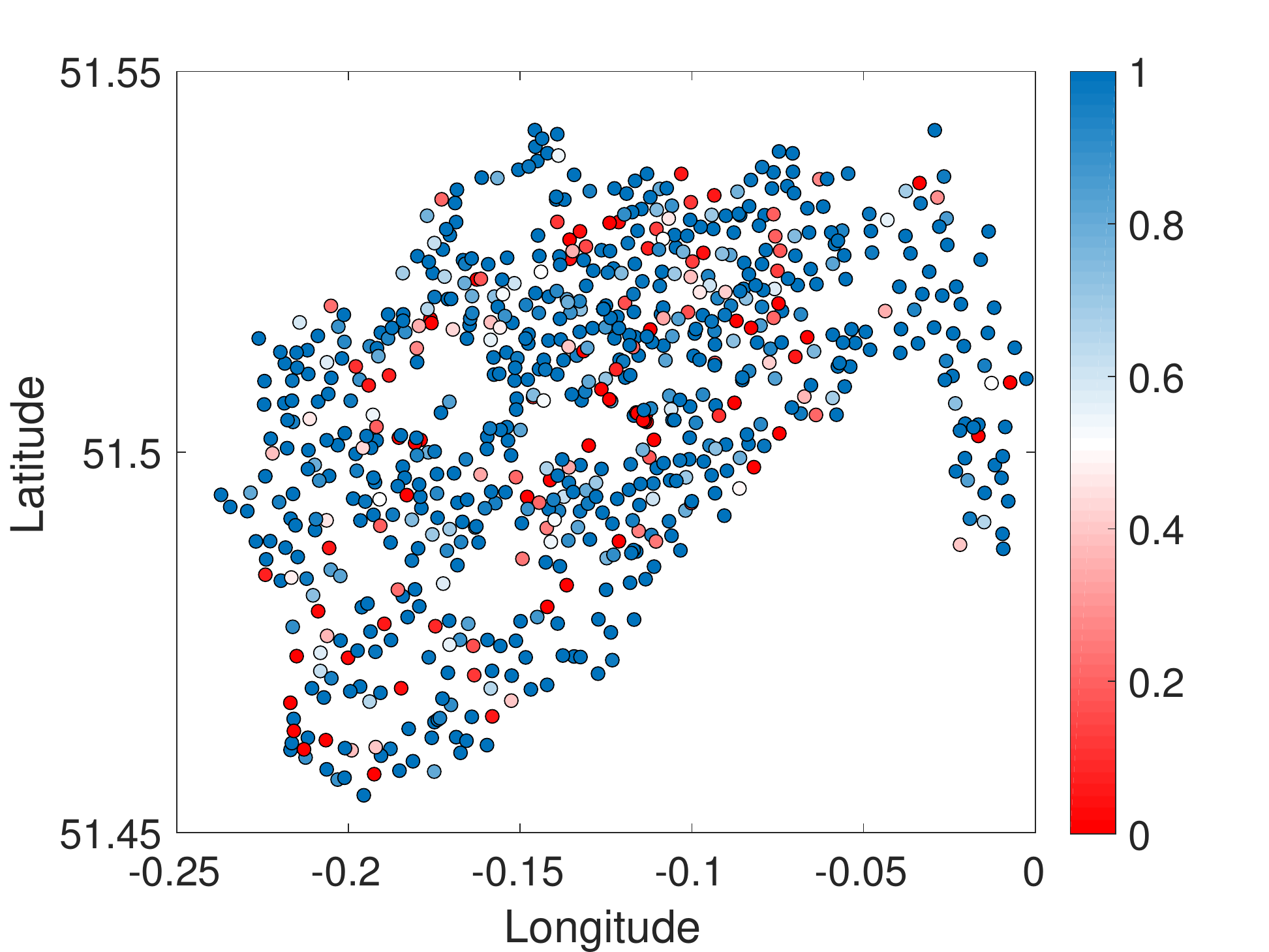}
}
\subfigure[]{
\label{phi2tend}
\includegraphics[width=.465 \textwidth]{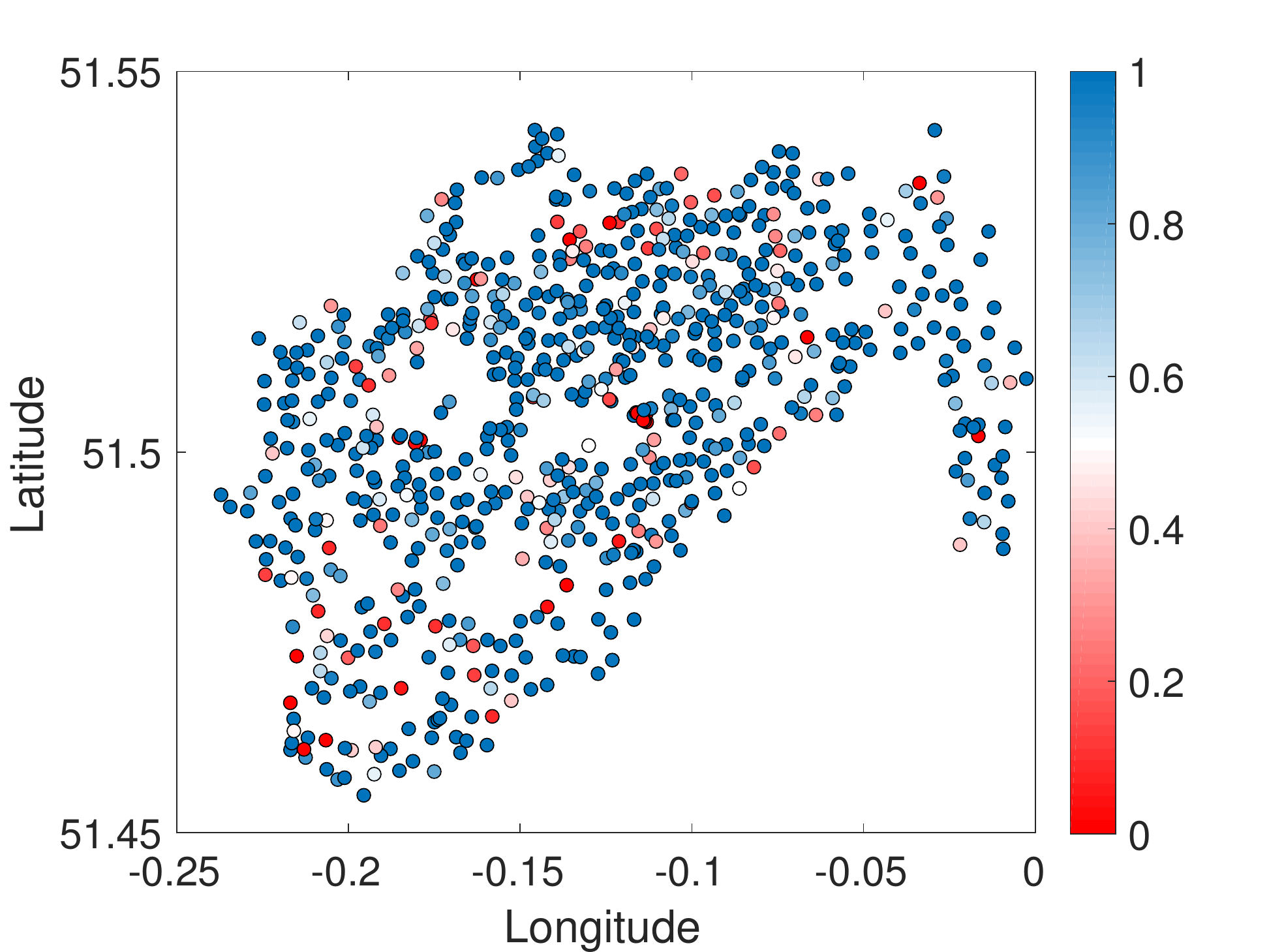}
}
\end{center}
\caption{Approximate satisfaction probability degree of formula $\phi_2$ for $1000$ runs for each BSS station for (a) $t=0$ and (b) $t=10$. The value of the degree is given by the color legend.}
\label{phi2all}
\end{figure}

\begin{figure}[!t]
\begin{center}
\subfigure[]{
\label{phi1single}
\includegraphics[width=.465 \textwidth]{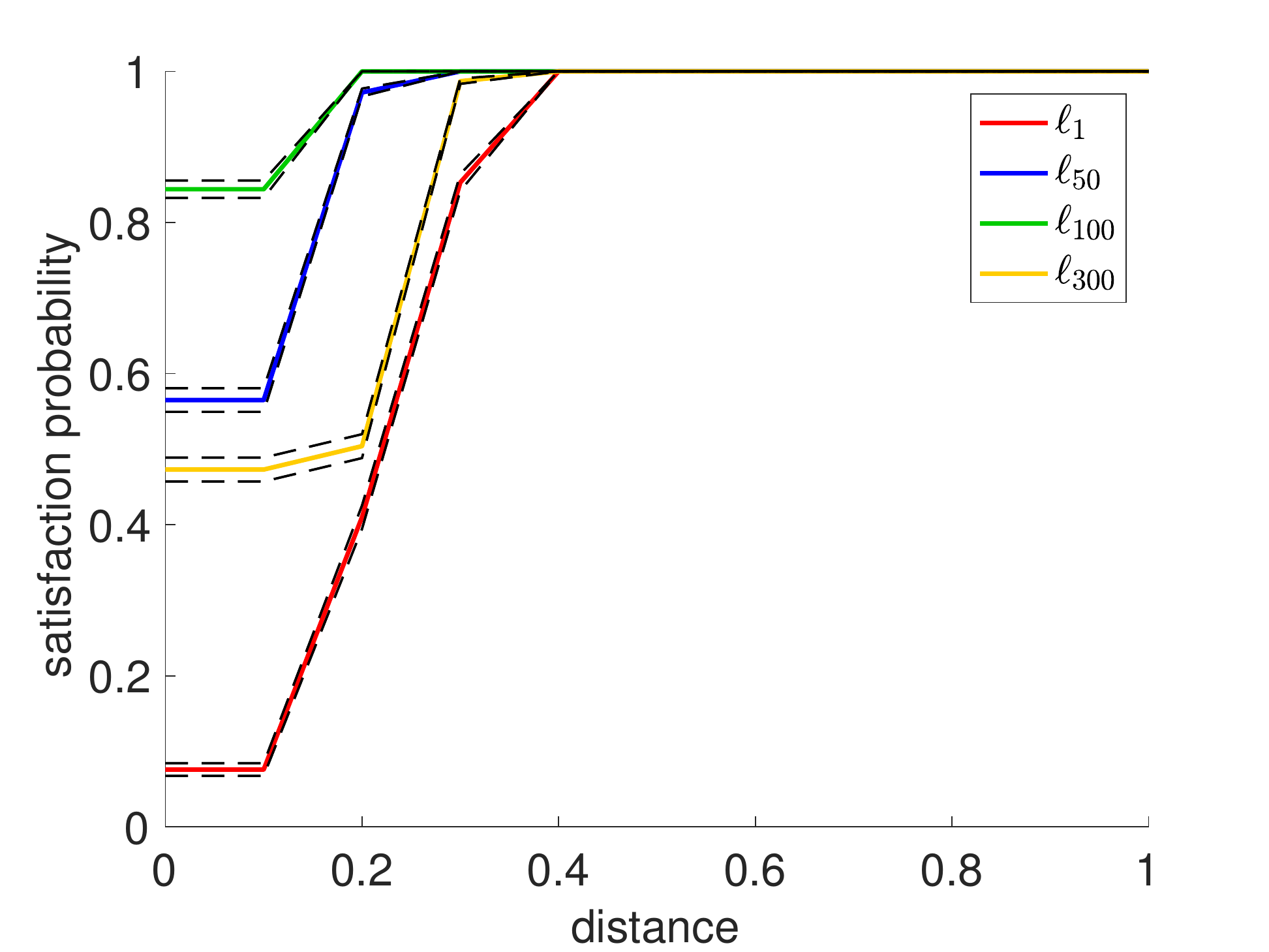}
}
\subfigure[]{
\label{phi2single}
\includegraphics[width=.465 \textwidth]{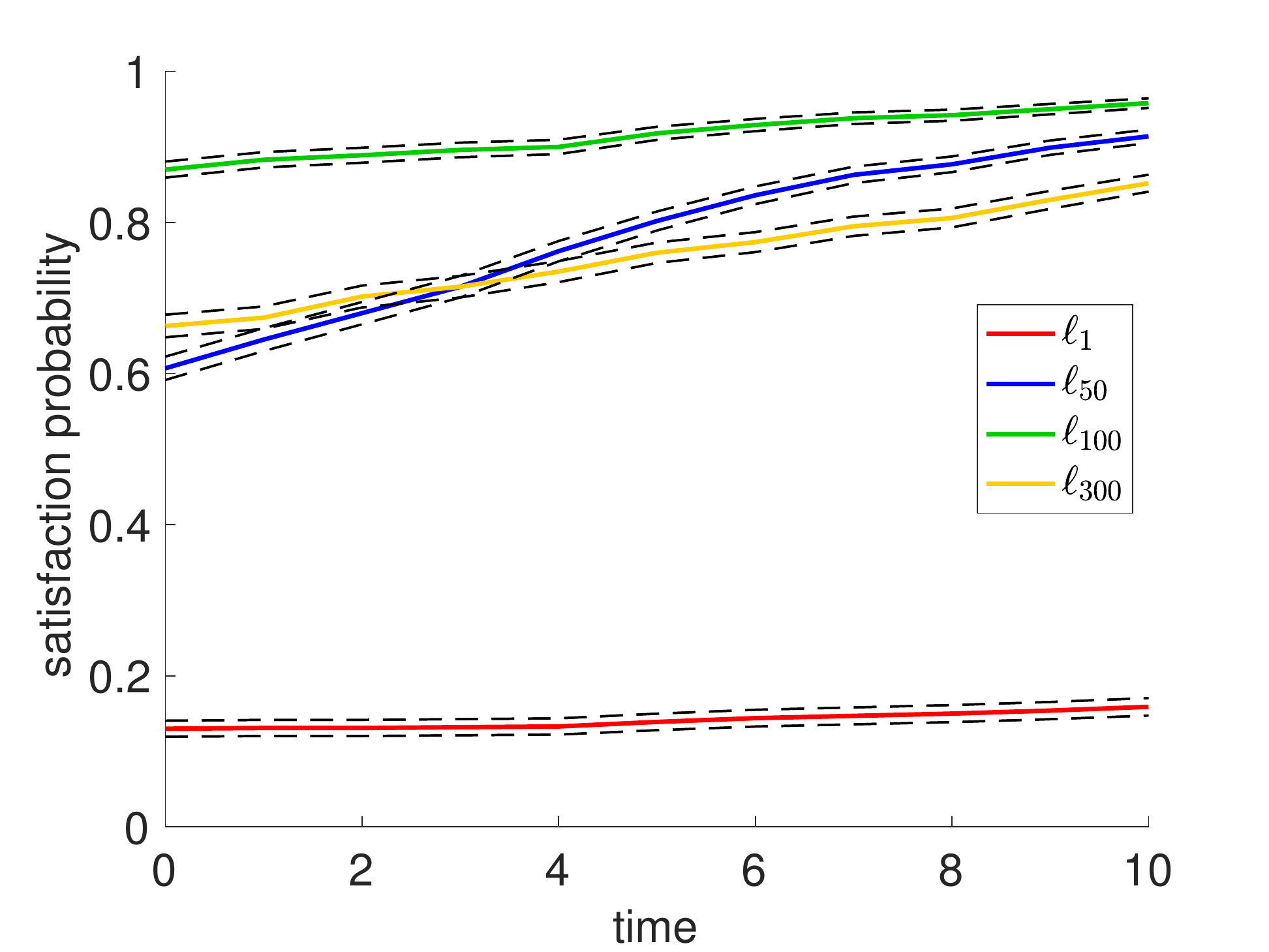}
}
\subfigure[]{
\label{phi3single}
\includegraphics[width=.465 \textwidth]{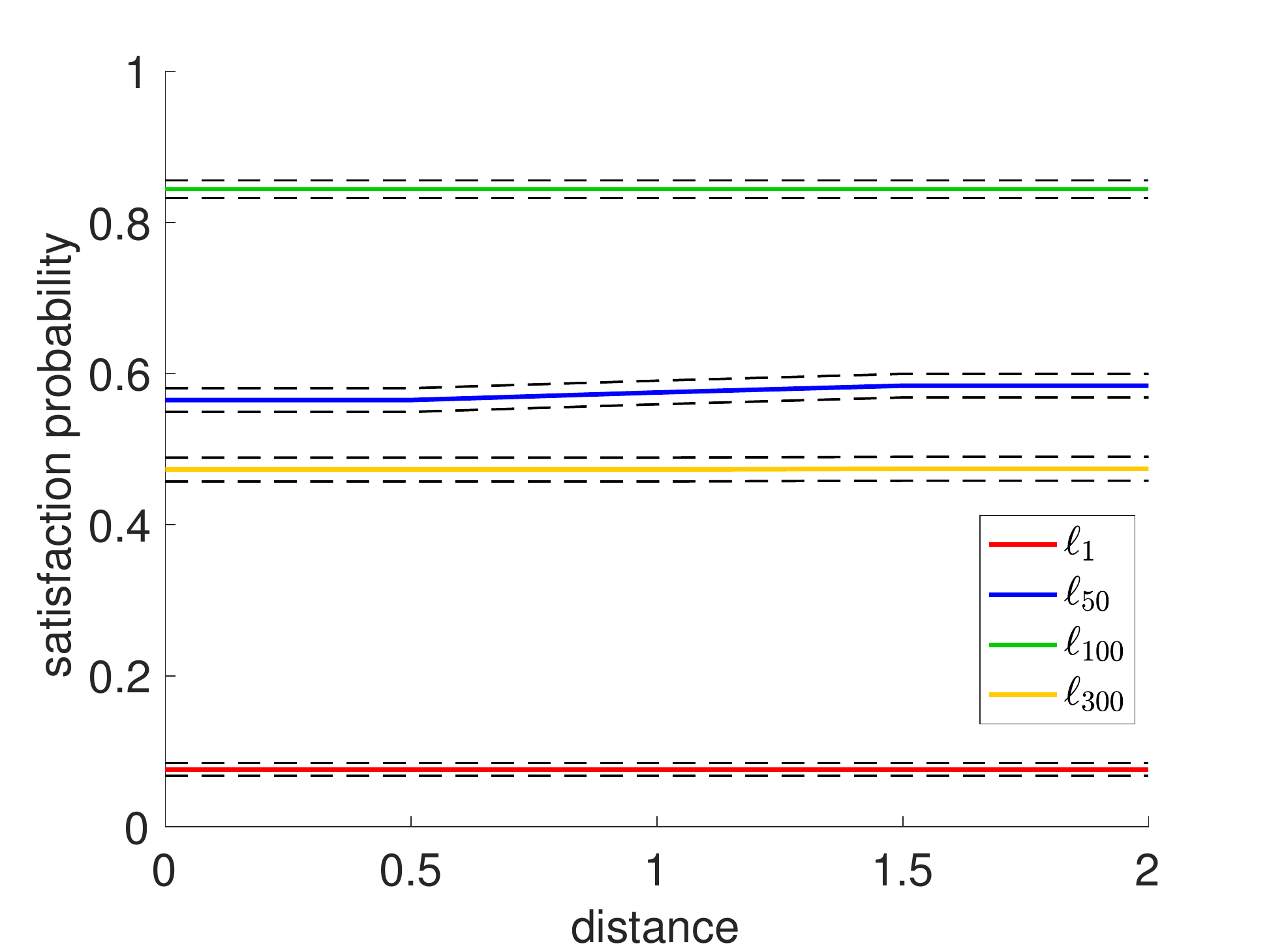}
}
\subfigure[]{
\label{psi3single}
\includegraphics[width=.465 \textwidth]{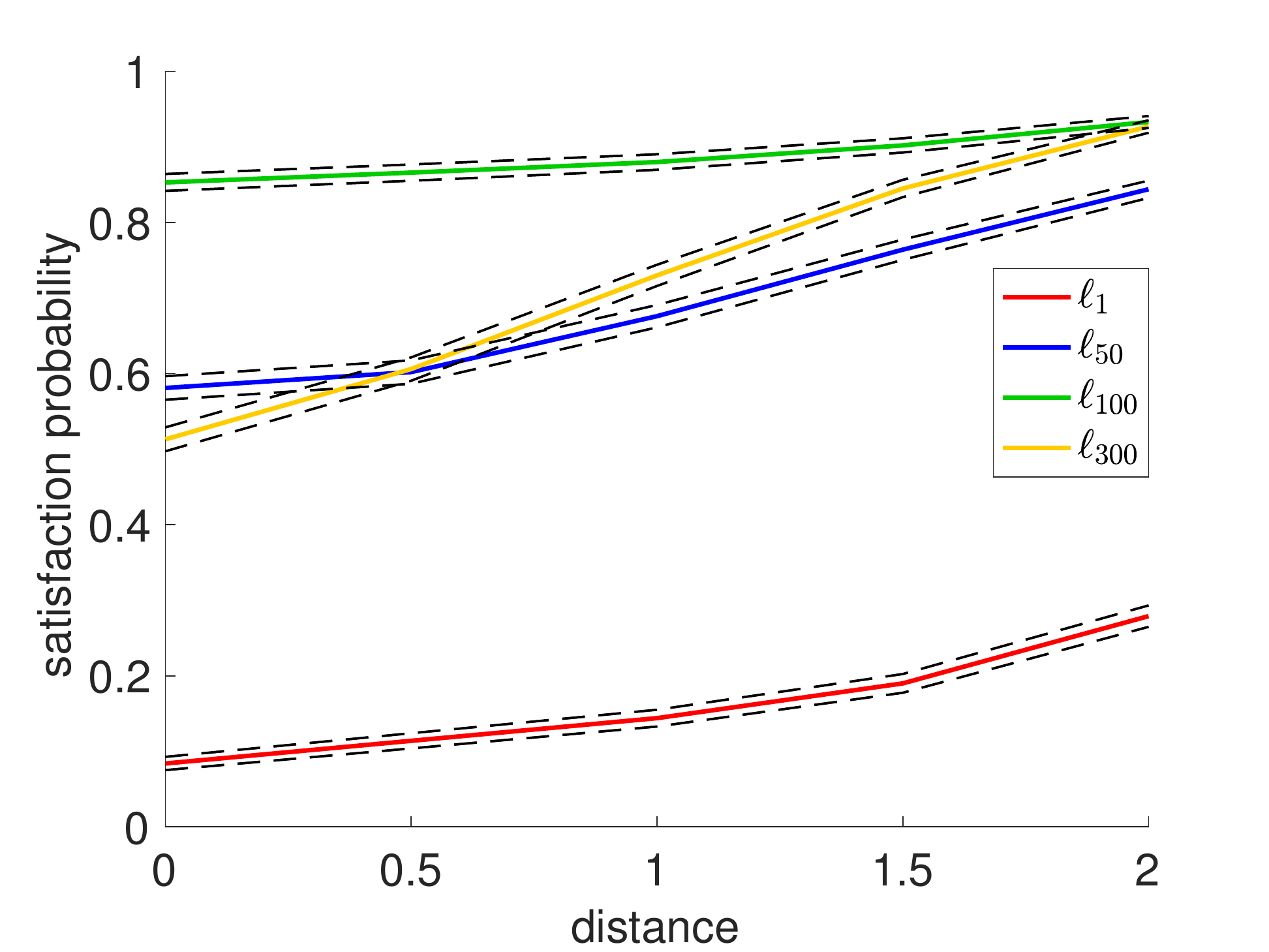}
}
\end{center}
\caption{ Approximate satisfaction probability for 1000 runs of BSS stations $\ell_{1}$,$\ell_{50}$, $\ell_{100}$, and $\ell_{300}$ for property (a) $\phi_1$  and d=[0,1.0],  (b) $\phi_2$ and t=[0,10], (c) $\phi_3$  and d=[0,2.0]  (d), $\psi_3$  and d=[0,2.0].}
\label{singleloc}
\end{figure}

Using the somewhere operator, one can only check if there is a bike in a station within a certain distance. It may be the case that  the station with a bike is located in the opposite direction the user is headed to, costing her a large deviation and more time to pick up the bike. A more precise check is given by the property below:
\begin{equation}
 \phi_{3}= \glob{[0,T_{end}]}\{B=0 \Rightarrow (B=0)\surround{[0,d]} (B > 0) \}
\label{surBike}
\end{equation}
A station $\ell$ satisfies $\phi_3$ if and only if it is always true, for each time $h \in [0, T_{end}]$, that a station with no bikes is surrounded by stations with at least one bike at a distance less than or equal to $d$. A similar property can be defined for free slots. 
 In other words, this would guarantee that a user who finds a station without bikes will find a station with at least one bike within distance $d$, no matter in which direction the user looks for another station. 
 In the analysis, we explore parameter $d \in [0, 0.2]$ kilometers to see how the satisfaction of the property changes in each location.
The results show that increasing the distance does not significantly increase the probability of $\phi_3$; that is, the probability to be surrounded by stations with at least one bike does not increase. Figure~\ref{singleloc} (c) shows the satisfaction probability of some BBS stations vs the distance d=[0,2]. The standard deviation remains in the interval [0,0.0158].  In other words, there are always some stations with  almost a constant number of bikes/free slots in some directions.
This analysis suggests that considering spatial operators that include a notion of direction might be interesting for future work.

Finally, in all the previous properties we did not consider that a user will need some time to reach a nearby station. Both properties $\varphi_1$ and $\varphi_2$ can be refined to take this aspect into consideration by considering a nested spatio-temporal property: 
\begin{equation}
 \psi_{1}= \glob{[0,T_{end}]}\{\somewhere{[0,d]} (\ev{[t_w, t_w]} B>0) \wedge \somewhere{[0,d]} (\ev{[t_w, t_w]}  S > 0)\}
\end{equation}
A station $\ell$ satisfies $\psi_1$ if and only if it is always true between 0 and $T_{end}$ minutes that there exists a station at a distance less than or equal to $d$, that, eventually in a time equal to $t_{w}$ (the walking time), has at least one bike and a station at a distance less than or equal to $d$, that, eventually in a time  equal to $t_w$ has at least one free slot. 

Similarly, 
\begin{equation}
 \psi_{3}= \glob{[0,T_{end}]}\{B=0 \Rightarrow (B=0)\surround{[0,d]} (\ev{[t_w,t_w]} B > 0) \}
\end{equation}
which expresses that a station $\ell$ satisfies $\psi_3$ if and only if it is always true between 0 and $T_{end}$ minutes that if it is empty, it is surrounded by stations within a distance of at most $d$ that have at least one bike, eventually in a walking time equal to $t_{w}$.

We consider an average walking speed of $6.0$~km/h, this means for example that if we evaluate a distance $d = 0.5$ kilometers, we consider a walking time $t_w = 6$ minutes. The results of $\psi_1$ are very similar to the results of $\phi_1$. This means that there is not much difference between looking at $t=0$ or after the walking time.
This is in accordance with the result of property $\phi_2$ where increasing the time does not significantly increase the satisfaction probability of $\phi_2$. This result becomes even more evident evaluating property $\psi_3$. Figure~\ref{singleloc} (d) shows the satisfaction probability of  $\psi_3$ of some BBS stations vs distance d=[0,2.0]. In this case, the probability increases slightly with respect to property $\phi_3$, obtaining a similar result for property $\phi_2$ and  meaning that in the selected stations ($l_0$, $l_{50}$, $l_{100}$ and 
$l_{300}$) the probability that a user finds a bike while walking to another station is increasing with the distance.
Figure~\ref{diff} shows the difference between the satisfaction probability of properties $\psi_1$, $\phi_1$  and properties $\psi_3$, $\phi_3$ for the same locations as in Figure~\ref{singleloc}. We can see that, in the second case, Figure~\ref{diff} (b), there is a clear improvement of the satisfaction probability vs the distance parameter. 

\begin{figure}[!t]
\begin{center}
\subfigure[]{
\label{diff1single}
\includegraphics[width=.465 \textwidth]{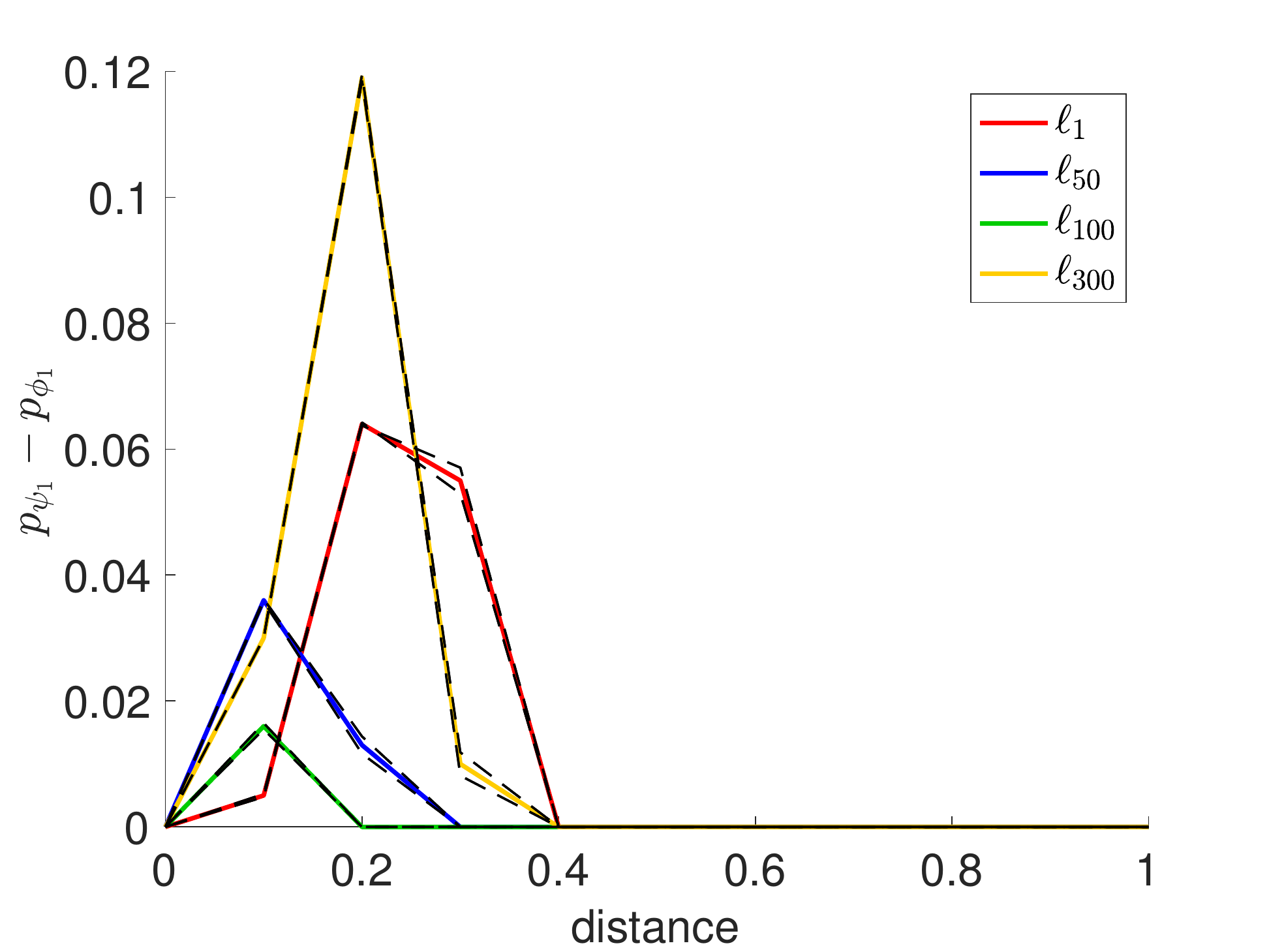}
}
\subfigure[]{
\label{diff3single}
\includegraphics[width=.465 \textwidth]{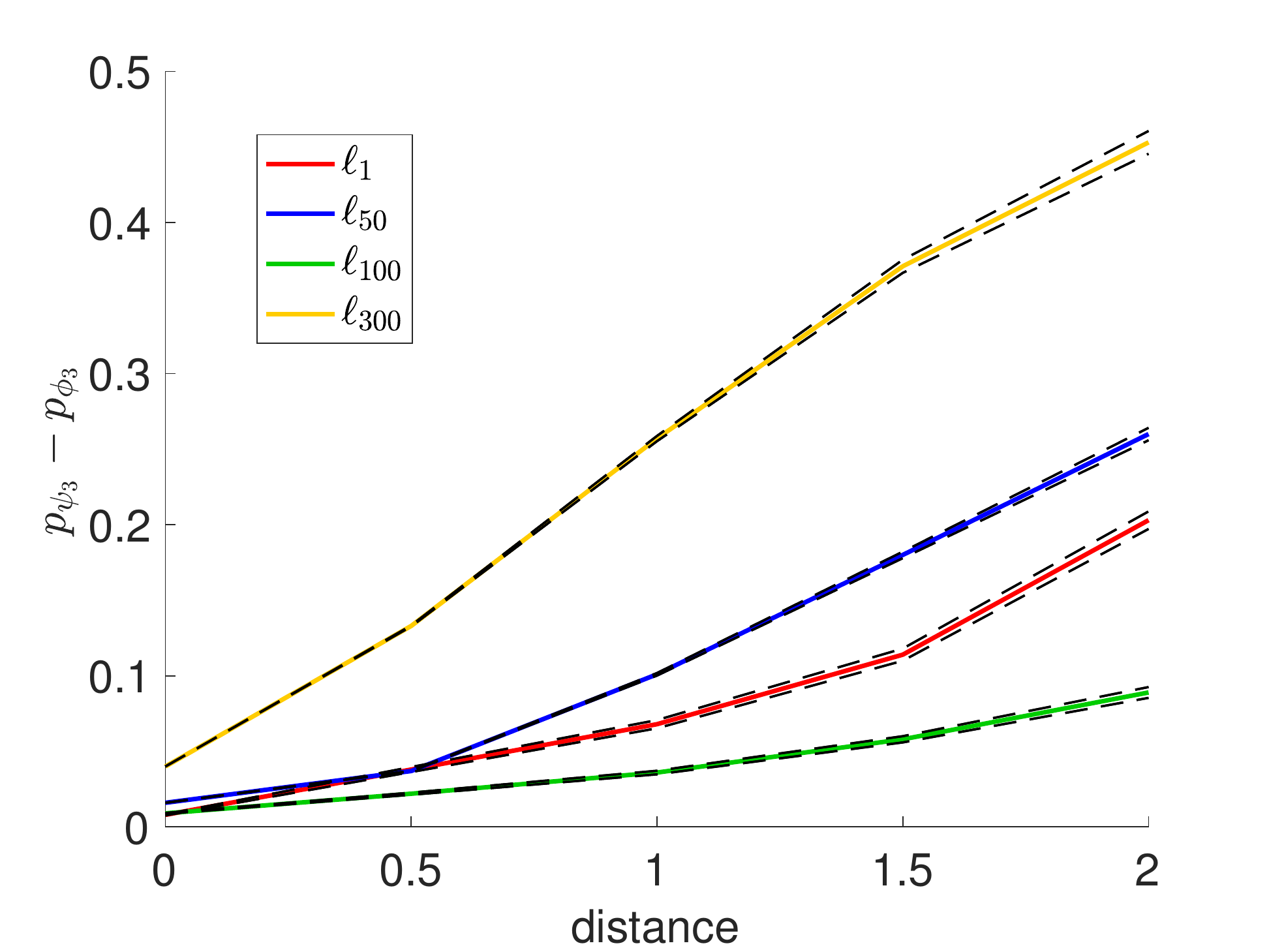}
}
\end{center}
\caption{  (a) $p_{\psi_1} -p_{\phi_1}$ vs the distance d = [0,1.0]  (b)$p_{\psi_3} -p_{\phi_3}$ vs the distance d = [0,2.0]  .}
\label{diff}
\end{figure}

The analysis is performed using the Java implementation. 
As time performance, the monitoring of a single traces takes $0.207s$ for property $\phi_1$, $0.011s$ for property $\phi_2$, $0.971s$ for property $\phi_3$, $0.178s$ for property $\psi_1$, and $0.667s$ for property $\psi_3$.  The computation of the distance matrix takes about $5.574s$.  All the experiments were run on an 
 Intel Core i7  3.5 GHz CPU, with 16GB 2133 MHz RAM.


\section{Discussion}
\label{sec:disc}

\vspace{-1ex}

We defined the Signal Spatio-Temporal Logic, a spatio-temporal extension of STL \cite{Donze2010},
with two spatial operators: the somewhere operator and 
the spatial (bounded) surrounded operator. In SSTL spatial and temporal operators can be arbitrarily nested. We provided the logic with a Boolean and a quantitative semantics in the style of STL~\cite{Donze2010}, and defined novel monitoring algorithms to evaluate such semantics on spatio-temporal trajectories. The monitoring procedures, implemented in Java, have been applied on two case studies. In the first one, we study a Turing reaction-diffusion system, modelling a process of morphogenesis~\cite{turing_chemical_1952} in which spots are formed over time. The second one is a bike sharing system modelled as a CTMC where we analyse the likelihood for users of the system to find bikes and free parking slots when they need them.

The work in this paper can be extended in several directions. First, we plan to perform a more thorough investigation of the expressivity of the logic, and to apply it on further case studies. 
Secondly, we plan to extend our logic to more general quasi-discrete metric spatial structures, exploiting the topological notion of closure spaces~\cite{ciancia2014,Ci+16} and extending it to the metric case. Note that the current monitoring algorithms work already for more general spatial structures, like finite directed weighted graphs, but we plan to provide a more precise characterisation of the class of discrete spatial structures on which they can be applied. We also plan to further optimise the implementation to improve performance, and additionally investigate if and how directionality can be expressed in SSTL. Finally, we plan to exploit the quantitative semantics for the robust design of spatio-temporal systems, along the lines of~\cite{TCS2015}.





\bibliographystyle{splncs03}
\bibliography{biblio}

\clearpage

\appendix
%


\section{Proofs}
\label{sec:appendix}

In this appendix, we present the proofs of Proposition \ref{prop:error} and 
the correctness of the Boolean (Algorithm~\ref{bspuntil}) and the quantitative (Algorithm \ref{algo:quantitative}) monitoring algorithms for the surrounded operator. 
%

\noindent\textbf{Proposition \ref{prop:error}.}\emph{
Let the primary signal $\vec x$ be Lipschitz continuous, as well as the functions defining the atomic predicates. Let $M$ be a Lipschitz constant for all secondary signals, and $h$ be the discretisation step.  Given a SSTL formula $\varphi$, let $u(\varphi)$ count the number of temporal until operators in $\varphi$, and denote by $\rho(\varphi,\vec x)$ its satisfaction score over the  trace $\vec x$ and by $\rho(\varphi,\vec{\hat x})$ the satisfaction score over the discretised version $\vec{\hat{x}}$ of $\vec x$ with time step $h$. Then 
$ \| \rho(\varphi,\vec x) - \rho(\varphi,\vec{\hat x}) \| \leq u(\varphi) M h $.
}

\begin{proof} We first observe that the monitoring algorithm for Boolean and spatial operators preserve the error of the input quantitative signals. This means that if $\|s_{\varphi_j,\ell} - \hat{s}_{\varphi_j,\ell}\| \leq \varepsilon$, then $\|s_{\psi,\ell} - \hat{s}_{\psi,\ell}\| \leq \varepsilon$, for $\psi$  one of $\neg\varphi_1$, $\varphi_1 \wedge \varphi_2$, $\varphi_1 \surround{[w_1,w_2]} \varphi_2$, $\somewhere{[w_1,w_2]} \varphi_1$. Hence, temporal discretisation introduces errors only for temporal operators. 

Now, let $I=[t_1,t_2]$ be such that $t_j = k_jh$, and denote the Minkowski sum by $\oplus$, so that $t\oplus I = [t+t_1,t+t_2]$. Denote by $\hat{I}$ the discretised version of $I$, with step $h$, $\hat{I} = \{ k_1 h, (k_1+1)h,\ldots, k_2 h\}$. 
We observe two facts for the maximum, with identical statements holding for the minimum. 
\begin{itemize}
\item Let $f(t)$ be Lipschitz with constant $K$. Let $g(t) = \max_{t'\in t\oplus I} f(t)$ and $\hat{g}(t) = \max_{t'\in t\oplus \hat{I}} f(t)$. Then $\| g(t) - \hat{g}(t) \| \leq Kh/2$. This holds by applying the Lipschitz property between a generic point in $t\oplus I$ and the closest point in $t\oplus \hat{I}$, and noting that the maximum distance between such points is $h/2$. 
\item If $\tilde f$ is such that $\|\tilde{f}(t) - f(t)\| \leq \varepsilon$ uniformly in $t$, and we let $g$, $\hat{g}$ as above, and $\tilde{g}(t) = \max_{t'\in t\oplus \hat{I}} \tilde{f}(t)$, then 
\[ \|g(t) - \tilde g(t)\| \leq \|g(t) - \hat g(t)\| + \|\hat{g}(t) - \tilde g (t)\| \leq Kh/2+\varepsilon.  \] 
\end{itemize}
Hence, the second property implies that the additional error we introduce by evaluating a time bounded until is an additive term no larger than  $Kh$, as in the definition of the quantitative semantics of the until, there are a nested minimum and a maximum over dense time intervals. Hence the total error will be bounded by $Kh$ times the number of temporal operators. 
\end{proof}

\subsection{Correctness of the Boolean Monitoring Algorithm for the Surrounded Operator}
\label{app:Booleanproof}
In this section we prove the correctness of Algorithm~\ref{bspuntil};  let us call the algorithm BoolSurround. 


\restBooleansurr*

\begin{proof}
First we note that $s_{\psi, \ell}(I_{i}) = 1 \iff \ell \in V_{I_{i}} $, where $V_{I_{i}}$ is the set $V$ at the end of the iteration of the $I_{i}$ interval. Then, it is enough to prove that, for all $I_{i}\in {\mathcal{I}}_{s_{\psi,\ell}}$
$$ \ell \in V_{I_{i}} \iff (\vec x, t, \ell) \models \phi_{1}\surround{[d_{1},d_{2}]}\phi_{2} \quad \forall t \in I_{i}.$$
Furthermore, for the definition of the minimal interval covering (Definition~\ref{def:intcov}), $s_{\varphi_{1},\ell'},s_{\varphi_{2},\ell'}, \ell' \in L^{\ell}_{[0,w_{2}]}$ have the same value in each $I_{i} \in {\mathcal{I}}_{s_{\psi,\ell}}$. This implies, for the Boolean semantics of the surrounded operator, that, $(\vec x, \hat t, \ell) \models \phi_{1}\surround{[d_{1},d_{2}]}\phi_{2}$ for a specific $\hat t \in I_{i}$ if and only if it satisfies the property for all $t \in I_{i}$. 

Let's consider now the distance constraints of the formula. 
We redefine the property $\phi_{1}$ and $\phi_{2}$ in this way: 
$$ (\vec x, t, \ell) \models \hat{\phi_{1}} \iff (\vec x, t, \ell')\models \phi_{1} \wedge d(\ell, \ell') \leq d_{2},$$ and  $$ (\vec x, t, \ell) \models \hat{\phi_{2}} \iff (\vec x, t, \ell')\models \phi_{1} \wedge d(\ell, \ell') \in [d_{1}, d_{2}].$$
Hence, we have that
 $V= \{ \ell' | s_{\hat \phi_{1},\ell'}(I_{i})=1 \}$ and  $Q= \{ \ell' | s_{\hat \phi_{2},\ell'}(I_{i})=1 \}$.

Furthermore, a location $\ell$ is a bad location if it can reach a point satisfying $\neg \hat \phi_{1}$ passing for a node $\neg \hat \phi_{2}$.
Let's consider the set 
{\small
$${\mathcal{C}}_{\ell} = \{ i \in \bb{N} | \exists p: \ell \leadsto  \infty.G, (\vec x, t, p(i)) \models \neg \hat \phi_{1}, \mbox{ and } \forall j \in \{1, \cdots, i \} (\vec x, t, p(j)) \models \neg \hat \phi_{2} \},$$ 
}
where $p: \ell \leadsto  \infty.G$ is a path of the graph $G$, starting from $\ell$, then 
$$(\vec x, t, \ell) \models \phi_{1} \surround{[d_{1},d_{2}]}\phi_{2} \iff (\vec x, t, \ell) \models \hat \phi_{1} \wedge {\mathcal{C}}_{\ell} = \emptyset. $$
Hence, what we have to prove at the end is that
$$\ell \in V_{I_{i}} \iff (\vec x, t, \ell) \models \hat \phi_{1} \wedge {\mathcal{C}}_{\ell} = \emptyset, \mbox{ for a }t \in I_{i}.$$

We will prove it by induction. From this point on, we fix the trace $\vec x$ and the time $t$ and we will write $\ell \models \phi$ to indicate $(\vec x, t, \ell) \models \phi$ and $V$ for $V_{I_{i}}$.
\begin{description}
\item[($\Rightarrow$)] 
We have to prove that if $(\vec x, t, \ell) \models \hat \phi_{1} \wedge \min{\mathcal{C}}_{\ell}=k$ then $\ell$ is removed at iteration $k$ from $V$.
\begin{description}
\item[(basis step)] $ \ell \models \hat \phi_{1} \wedge  \min{\mathcal{C}}_{\ell}=1$ then $p(1) \models \neg \hat \phi_{1} \wedge p(1) \models \neg \hat \phi_{2}$. This implies that $\exists \ell'  \in T= B^{+} (Q \bigcup V),$ and $(\ell, \ell') \in E$, then $\ell$ is removed from V at the first iteration.  
\item[(inductive step)] Let's suppose that if $\ell \models \hat \phi_{1} \wedge \min{\mathcal{C}}_{\ell}=k$ then $\ell$ is removed at iteration $k$ from $V$. We have to prove that this is true also for $k+1$.  Let's suppose that $\ell \models \hat \phi_{1} \wedge \min{\mathcal{C}}_{\ell}=k + 1$. This implies that $p(k+1) \models \neg \hat \phi_{1}$ and $\forall j \in \{ 1, \cdots, k \}$, $p(j) \models \neg \hat \phi_{2}$. But if $k+1 = \min{\mathcal{C}}_{\ell}$ then $\ell'=p(1) \models \hat \phi_{1}$ and $\min{\mathcal{C}}_{\ell'}=k$, i.e., $\ell'$ is removed at iteration $k$ from $V_{I_{i}}$, then $\ell$ is removed at iteration $k+1$ because $(\ell, \ell') \in E$.
\end{description}
\item[($\Leftarrow$)]
We have to prove that if $\ell$ is removed at iteration $k$ from $V_{I_{i}}$ then $\ell \models \hat \phi_{1} \wedge \min{\mathcal{C}}_{\ell}=k$.
\begin{description}
\item[(basis step)] If $\ell$ is removed from V at the first iteration then $\exists \ell' \in T$ s.t. $(\ell,\ell') \in E$ and $(\vec x, t, \ell') \models \neg \hat \phi_{1} \wedge \neg \hat \phi_{2}$, this implies $\min{\mathcal{C}}_{\ell}=1$.
\item[(inductive step)] 
Let's suppose that if $\ell$ is removed at iteration $k$ from $V$ then
$\ell \models \hat \phi_{1} \wedge \min{\mathcal{C}}_{\ell}=k$. We have to prove that this is true also for $k+1$.  
Let's suppose that $\ell$ is removed at iteration $k + 1$ from $V$.
This implies that $\exists \ell' \in L$ s.t. $(\ell, \ell') \in E$ and $\ell' \in T$ but this means that $\ell'$ was removed from $V$ at the previous iteration $k$ and from the inductive step we have $\min{\mathcal{C}}_{\ell'} = k$.
If  $\min{\mathcal{C}}_{\ell'} = k$ then $\exists p: \ell' \leadsto  \infty.G$ s.t. $p(k) \models \neg \hat \phi_{1}$ and, $\forall i \in \{1, \cdots, k \}, p(i) \models \neg \hat \phi_{2}$. But $(\ell,\ell') \in E$ and $(\vec x, t, \ell') \models \neg \hat \phi_{2}$ (because $\ell' \in T$) then $\exists p': \ell \leadsto  \infty.G$ s.t. $p(k+1) \models \neg \hat \phi_{1}$ and, $\forall i \in \{1, \cdots, k+1 \}, p(i) \models \neg \hat \phi_{2}$. This implies that $\min{\mathcal{C}}_{\ell} \leq k+1$, but it can be less than $k+1$ because in that case it has to be removed before. Hence, we can conclude that $\min{\mathcal{C}}_{\ell} = k+1.$\qedhere
\end{description}
\end{description}
\end{proof}

\subsection {Correctness of the Quantitative Monitoring Algorithm for the Surrounded Operator}
\label{app:quantproof}

In this section, we present the proofs of Theorem~\ref{thm:fixedpoint_quant}, Corollary~\ref{rhomin}  and Proposition~\ref{diameterconv}. For simplicity, we report again the statements.

\restFixedpointQuant*
%

Note that $s$ is equivalent to the quantitative semantics of the surrounded operator $\varphi_{1}\mathcal{S}\varphi_{2}$, with $s_{i}$ denoting the robustness of $\varphi_{i}$, without the distance constraints. We first present two lemmas, followed by the proof of Theorem~\ref{thm:fixedpoint_quant}.
\begin{lem}
\label{stepequal}
If $\mathcal{X} (k+1, \ell)=\mathcal{X} (k, \ell)$ for all $\ell \in L$ then, $\forall i > k,$ $\mathcal{X} (i, \ell)=\mathcal{X} (k, \ell).$

\begin{proof}
By induction.
\begin{description}
\item[(basis step)] i=k +1 is true by hypothesis,
\item[(inductive step)] suppose the assert holds for $i>k$, i.e., $\mathcal{X} (i, \ell)=\mathcal{X} (k, \ell)$ (I.H.),  then we have to prove that it holds for $i+1$. 
\begin{align*}
\mathcal{X} (i+1, \ell) &= \min(\mathcal{X}(i, \ell), \min_{\ell'|\ell E \ell'}(\max(\mathcal{X}(i,\ell'),s_{2}(\ell')))) &\{\mbox{by Def. of\;} \mathcal{X}\}
\\  &=  \min(\mathcal{X} (k, \ell), \min_{\ell'|\ell E \ell'}(\max(\mathcal{X}(k,\ell'),s_{2}(\ell'))))   &\{\mbox{by I.H.}\}
\\  &=  \mathcal{X} (k+1, \ell)=\mathcal{X} (k, \ell). &\{\mbox{by Def. of\;} \mathcal{X}\tag*{\qedhere}\}
\end{align*}
\end{description}
\end{proof}
\end{lem}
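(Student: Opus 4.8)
The plan is to prove the statement by induction on $i$, ranging over integers $i > k$, keeping in mind that the hypothesis and the conclusion are both \emph{uniform} in the location variable, i.e. they assert equality for every $\ell \in L$ at once. Making the induction hypothesis quantify over all locations simultaneously is the one point that requires care, since the recursion in Definition~\ref{defX} updates the value at $\ell$ using the values at its neighbours $\ell'$; I will return to this below.

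For the \textbf{base case} $i = k+1$, the claim $\mathcal{X}(k+1,\ell) = \mathcal{X}(k,\ell)$ for all $\ell \in L$ is exactly the hypothesis of the lemma, so there is nothing to prove. For the \textbf{inductive step}, I assume as induction hypothesis that $\mathcal{X}(i,\ell) = \mathcal{X}(k,\ell)$ holds for every $\ell \in L$, and I must establish $\mathcal{X}(i+1,\ell) = \mathcal{X}(k,\ell)$ for every $\ell$. Fixing an arbitrary $\ell$, I expand $\mathcal{X}(i+1,\ell)$ using clause~(2) of Definition~\ref{defX}, which expresses it through $\mathcal{X}(i,\ell)$ and the values $\mathcal{X}(i,\ell')$ at the neighbours $\ell'$ with $\ell E \ell'$. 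Applying the induction hypothesis to $\ell$ \emph{and} to each neighbour $\ell'$, I replace every occurrence of $\mathcal{X}(i,\cdot)$ by $\mathcal{X}(k,\cdot)$; the resulting expression is, again by clause~(2), precisely $\mathcal{X}(k+1,\ell)$, which equals $\mathcal{X}(k,\ell)$ by the hypothesis of the lemma. The chain of equalities is
\[
\mathcal{X}(i+1,\ell) = \min\!\Big(\mathcal{X}(i,\ell),\ \min_{\ell'|\ell E \ell'}\big(\max(\mathcal{X}(i,\ell'),s_{2}(\ell'))\big)\Big)
= \min\!\Big(\mathcal{X}(k,\ell),\ \min_{\ell'|\ell E \ell'}\big(\max(\mathcal{X}(k,\ell'),s_{2}(\ell'))\big)\Big)
= \mathcal{X}(k+1,\ell) = \mathcal{X}(k,\ell),
\]
where the first step is the definition of $\mathcal{X}$, the second is the induction hypothesis applied pointwise, and the last two are again the definition of $\mathcal{X}$ together with the lemma's hypothesis.

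The only genuine subtlety, which I would flag explicitly, is that the substitution in the middle equality uses the induction hypothesis not merely at $\ell$ but also at all of its neighbours $\ell'$; this is legitimate precisely because the hypothesis is universally quantified over $L$, so it is available at every location appearing in the unfolded recursion. No further estimates or case analysis are needed, and finiteness of $L$ guarantees that the inner minimum over neighbours is well defined. Hence the induction closes and the equality $\mathcal{X}(i,\ell) = \mathcal{X}(k,\ell)$ propagates to all $i > k$ and all $\ell \in L$.
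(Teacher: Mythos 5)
Your proof is correct and follows essentially the same route as the paper's: induction on $i$ with base case $i=k+1$ given by the hypothesis, and an inductive step that unfolds clause~(2) of Definition~\ref{defX}, substitutes $\mathcal{X}(k,\cdot)$ for $\mathcal{X}(i,\cdot)$ at $\ell$ and all its neighbours via the induction hypothesis, and closes with $\mathcal{X}(k+1,\ell)=\mathcal{X}(k,\ell)$. Your explicit remark that the induction hypothesis must be quantified uniformly over all locations is a welcome clarification of a point the paper leaves implicit, but the argument is the same.
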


\begin{lem}
\label{inA}
Let $A_{\ell}$ be the subregion that maximizes $s(\ell)$, then, $\forall \ell' \in A_{\ell},$ $s(\ell') \geq s(\ell)$.

\begin{proof}
If $A_{\ell}$ is the subregion that maximizes $s(\ell)$ then
$$s(\ell)=\min (\min_{\ell' \in A_{\ell}}s_{1}( \ell'),\min_{ \ell' \in B^{+}(A_{\ell})}s_{2}( \ell')))$$
Suppose by contradiction that $\exists \hat{\ell} \in A_{\ell}$ s.t. $s(\hat{\ell})<s(\ell)$. Let $Q=\{A \subseteq L, \hat{\ell} \in A \}$.
Then
$$s(\hat{\ell}) = \max_{A \in Q} (\min(\min_{\ell' \in A}s_{1}(\ell'),\min_{\ell' \in B^{+}(A)}s_{2}(\ell')))$$
and $s(\hat{\ell})<s(\ell)$ implies
$$
\max_{A \in Q} (\min(\min_{\ell' \in A}s_{1}(\ell'),\min_{\ell' \in B^{+}(A)}s_{2}(\ell')))  <  \min (\min_{\ell' \in A_{\ell}}s_{1}( \ell'),\min_{ \ell' \in B^{+}(A_{\ell})}s_{2}( \ell')))
$$

But $A_{\ell}$ is a subset of L and $\hat{\ell} \in A_{\ell}$ therefore $A_{\ell} \in Q$, thus the inequality can not hold. 
\end{proof}
\end{lem}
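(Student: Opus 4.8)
The plan is to exploit the fact that, once the distance constraints are dropped, the quantity being maximised in the definition of $s$ depends on the chosen base point \emph{only} through the membership requirement that the base point lie in the candidate region. Concretely, for any $A \subseteq L$ I would introduce the point-independent functional
\[
V(A) := \min\Big(\min_{\ell' \in A} s_{1}(\ell'),\ \min_{\ell' \in B^{+}(A)} s_{2}(\ell')\Big),
\]
so that $s(\ell) = \max_{A \subseteq L,\, \ell \in A} V(A)$ for every $\ell$. The key point is that the only coupling between $V$ and the base point is the constraint $\ell \in A$: the value $V(A)$ itself makes no reference to $\ell$.

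With this reformulation the statement becomes almost immediate. First I would record that, since $A_{\ell}$ attains the maximum defining $s(\ell)$, we have $s(\ell) = V(A_{\ell})$. Now fix any $\ell' \in A_{\ell}$. The set $A_{\ell}$ is a subset of $L$ that contains $\ell'$, hence it is an admissible candidate in the maximisation that defines $s(\ell')$. Therefore $s(\ell') = \max_{A \ni \ell'} V(A) \geq V(A_{\ell}) = s(\ell)$, which is exactly the claim. Equivalently, one may phrase this as the contradiction sketched in the statement: assuming that some $\hat\ell \in A_{\ell}$ satisfies $s(\hat\ell) < s(\ell) = V(A_{\ell})$ would contradict the fact that $A_{\ell}$ is itself one of the sets over which the maximum defining $s(\hat\ell)$ is taken, since $A_{\ell} \in \{A \subseteq L \mid \hat\ell \in A\}$.

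There is essentially no hard step here: the lemma is a pure feasibility observation, and its entire content is the point-independence of $V(A)$. The only thing to be careful about is to make this point-independence explicit, i.e.\ to note that the distance bounds are absent from $s$ (as stressed in the remark preceding the lemma), so that a single region $A_{\ell}$ can legitimately serve as a candidate both for $s(\ell)$ and for $s(\ell')$ for every $\ell' \in A_{\ell}$. Once this is spelled out, monotonicity of $\min$ and $\max$ gives the inequality with no further computation.
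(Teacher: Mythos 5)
Your proposal is correct and is essentially the paper's own argument: both rest on the single observation that $A_{\ell}$, containing any $\hat\ell \in A_{\ell}$, is a feasible candidate in the maximisation defining $s(\hat\ell)$, so that maximum is at least $V(A_{\ell}) = s(\ell)$. The paper phrases this as a contradiction while you state it directly (and note the equivalence yourself), which is a purely cosmetic difference.
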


\begin{proof}[Proof of Theorem~\ref{thm:fixedpoint_quant}]
We have to prove that (1) $\mathcal{X}(i,\ell)$ converges  in a finite number of steps, in each location $\ell$, to $\mathcal{X}(\ell) \in \mathbb{R}^{*}$ and that (2) $\forall \ell \in L$, $\mathcal{X}(\ell)= s(\ell)$.
\begin{enumerate}
\item Convergence of $\mathcal{X}.$ \\
First note that $\mathcal{X}(i, \ell) \geq \min(\mathcal{X}(i, \ell), \min_{\ell'|\ell E \ell'}(\max(\mathcal{X}(i,\ell'),s_{2}(\ell'))))=\mathcal{X} (i+1, \ell)$, thus $\mathcal{X}_{|\ell}$ is a monotonic decreasing function. Second, note that $\mathcal{X} ( i, \ell) \in \{s_{j}(\ell) \:\big{|}\:  j \in \{ 1,2 \}, \ell \in L \}$ is a finite set of sortable values. So, in every step,  $\mathcal{X}$ takes a value of a sortable finite set. Finally, if it happens that for a step, for all $\ell \in L$, $\mathcal{X}(i,\ell)$ does not change then, from Lemma \ref{stepequal}, it will remain the same for all the next steps. The convergence of $\mathcal{X}$ to the maximum fixed point follows then from Tarsky's theorem.
\item We have to prove that $ \forall \ell$, $\mathcal{X}(\ell)=s(\ell)$. 

Let $A_{\ell}$ be the subregion that maximizes $s(\ell)$ then 
$$s(\ell)= \min (\min_{\ell' \in A_{\ell}}s_{1}(\ell'),\min_{ \ell' \in B^{+}(A_{\ell})}s_{2}(\ell'))).$$

First we prove that   (2a) $\forall \ell$, $\mathcal{X}(\ell) \geq s(\ell)$ and then that (2b) they are equal.
\begin{enumerate}[label=(2\alph*)]
\item To prove that $\mathcal{X}(\ell) \geq s(\ell)$ it suffices to prove that, for a generic $\ell$, $\forall i \in \mathbb{N}$, $\mathcal{X}(i, \ell) \geq s(\ell)$, and for the convergence of $\mathcal{X}$ that $\exists j \in \mathbb{N}$ s.t. $\mathcal{X}(\ell)=\mathcal{X}(j,\ell), \forall \ell, \forall j \geq i$. The proof is by induction. 
\begin{itemize}
\item (basis step)
%
\begin{align*}
\mathcal{X}(0, \ell) &= s_{1} (\ell) \quad &\{\mbox{by Def. of\;} \mathcal{X}\} \\
	& \geq \min_{\ell' \in A_{\ell}}s_{1}(\ell') \quad &\{\mbox{Because\;} \ell \in A_{\ell} \} \\
	& \geq  \min (\min_{\ell' \in A_{\ell}}s_{1}( \ell'),\min_{ \ell' \in B^{+}(A_{\ell})}s_{2}(\ell'))) \quad &\{ \min \mbox{ property} \} \\
	& =s(\ell) &\{\mbox{by Def. of\;} s(\ell)\} 
\end{align*}
                                    
\item (inductive step)  Assume $\mathcal{X}(i,\ell) \geq s(\ell)$, to prove that $\mathcal{X}(i+1,\ell) \geq s(\ell)$;

{\small \begin{align*}
  \mathcal{X}(i+1,\ell)  &=  \min(\mathcal{X}(i,\ell), \min_{\ell' |\ell E \ell'}(\max(\mathcal{X}(i,\ell'),s_{2}(\ell')))) &\{\mbox{by Def. of } \mathcal{X}\}
\end{align*}
}
We know by I.H. that $\mathcal{X}(i,\ell) \geq s(\ell)$, so it remains to be shown that also:
\begin{align}
\label{xdef}
\min_{\ell' |\ell E \ell'}(\max(\mathcal{X}(i,\ell'),s_{2}(\ell'))) \geq s(\ell)
\end{align}
Note that it is assumed that $\ell \in A_{\ell}$ and that $\ell'$ are direct neighbours of $\ell$. Therefore we can distinguish the following two cases:
\begin{itemize}
\item Suppose $\ell' \in A_{\ell}$. By I.H. we know that $\mathcal{X}(i,\ell') \geq s(\ell')$ and by Lemma~\ref{inA} we also know that $s(\ell') \geq s(\ell)$. For what concerns $s_2(\ell')$, if $s_2(\ell') \leq \mathcal{X}(i,\ell')$ then the max leads to $\mathcal{X}(i,\ell')\geq s(\ell)$. If instead $s_2(\ell') \geq \mathcal{X}(i,\ell') \geq s(\ell)$, then obviously also $s_2(\ell') \geq s(\ell)$. So inequation~(\ref{xdef}) holds in this case.
\item Suppose $\ell' \in B^{+}(A_{\ell})$. Then, by definition of $s(\ell)$ we know that $s_2(\ell') \geq s(\ell)$. So, if $s_2(\ell') \geq \mathcal{X}(i,\ell')$ then the inequation holds. If $\mathcal{X}(i,\ell') \geq s_2(\ell')$ then since $s_2(\ell') \geq s(\ell)$, inequation~(\ref{xdef}) also holds.
\end{itemize}

\end{itemize}

\item Suppose by contradiction that $\exists \hat{\ell} \in L$ s.t. $\mathcal{X}(\hat{\ell}) > s(\hat{\ell})$. 
At the fixed point we have that
$$ \mathcal{X}(\hat{\ell})  =  \min(\mathcal{X}(\hat{\ell}), \min_{\ell |\hat{\ell} E \ell}(\max(\mathcal{X}(\ell),s_{2}(\ell))))$$

This means that the inequality

\begin{equation}
\label{ineq}
 \min_{\ell |\hat{\ell} E \ell}(\max(\mathcal{X}(\ell),s_{2}(\ell)))> s(\hat{\ell})  
\end{equation}

has to be true.

Let $A \subseteq L$, we define:
\begin{itemize}
\item $C(A):=\{\ell \in L \big{|} \exists \ell' \in A \mbox{ s.t. } \ell' E \ell \wedge \mathcal{X}(\ell) \geq s_{2}(\ell) \}$
\item  $ C^{i}(A)=C(C^{i-1}(A))$
\end{itemize}

We can then compute the closure of C, as $C^{*}(A)=A \bigcup_{i=0}^{\infty}C^{i}( A  )$.

Because of the definition of C and the inequality (\ref{ineq}), we have that  $s_{1}(\ell) \geq \mathcal{X}(\ell) >  s(\hat{\ell})$,  $\forall \ell \in C^{*}(\{ \hat{\ell}\} )$,   and that $s_{2}(\ell) >  s(\hat{\ell})  $, $\forall \ell \in B^{+}(C^{*}(\{ \hat{\ell} \}))$; hence

$$ \min (\min_{\ell \in C^{*}(\{ \hat{\ell} \})}s_{1}(\ell),\min_{ \ell \in B^{+}(C^{*}(\{ \hat{\ell} \}))}s_{2}(\ell))) >  s(\hat{\ell})  $$
i.e.
{\small
$$ \min (\min_{\ell \in C^{*}(\{ \hat{\ell} \})}s_{1}(\ell),\min_{ \ell \in B^{+}(C^{*}(\{ \hat{\ell} \}))}s_{2}(\ell))) > \min (\min_{\ell \in A_{ \hat{\ell} }}s_{1}(\ell),\min_{ \ell' \in B^{+}(A_{ \hat{\ell} })}s_{2}(\ell')))$$
}
but this contradicts the assumption of maximality of $A_{\hat{\ell}}$.\qedhere
\end{enumerate}
\end{enumerate}
\end{proof}

\noindent\textbf{Corollary \ref{rhomin}.}\emph{
Given an $\hat{\ell} \in L$, let $\psi=\varphi_{1} \mathcal{S}_{[d_{1},d _{2}]} \varphi_{2}$ and\\
 \begin{align*}
 & s_1(\ell)= \begin{cases}
       \rho(\phi_{1}, \vec x, t, \ell) & \text{ if }  \mbox{ } 0\leq d(\hat{\ell},\ell)\leq d_{2} \\
        -\infty &   \text{ otherwise}.
       \end{cases}  \\
 & s_2(\ell)= \begin{cases}
       \rho(\phi_{2}, \vec x, t, \ell) & \text{ if }  \mbox{ } d_{1} \leq  d(\hat{\ell},\ell) \leq d_{2} \\
       -\infty & \text{ otherwise}.
      \end{cases}
\end{align*}
 Then $\rho(\psi,  \vec x, t  ,\hat{\ell}) = s(\hat{\ell})=\max_{A \subseteq L, \hat{\ell} \in A}{( \min (\min_{\ell \in A}s_{1}( \ell),\min_{ \ell \in B^{+}(A)}s_{2}( \ell))).}$
}
  

\begin{proof}
We recall that
{\small
$$\rho( \psi , \vec x,t, \hat{\ell}) =\max_{A \subseteq L^{\hat{\ell}}_{[0, d_{2}]}, \ell \in A, B^{+}(A)\subseteq L^{\hat{\ell}}_{[d_{1}, d_{2}]}}{( \min (\min_{\ell \in A}\rho(\phi_{1}, \vec x, t, \ell),\min_{ \ell \in B^{+}(A)}\rho(\phi_{2}, \vec x, t, \ell)))},$$
}
where 
$L^{\hat{\ell}}_{[ d_{1},d_{2} ]}:= \{ \ell \in A | d_{1} \leqslant   d(\ell,\hat{\ell}) \leq d_{2} \}$. This means that $\ell \in A$ iff $d(\ell,\hat{\ell}) \leq d_{2} $ and, for all $\ell' E \ell$, $d_{1} \leqslant   d(\ell',\hat{\ell}) \leq d_{2} $. 

So, we consider a restricted number of subsets of $L$ for $\rho$ and all the possible subsets of $L$ for $s$. Furthermore, the value of the locations considered by both are always the same, i.e., the value of $s_{1}$ and $s_{2}$ differ only in the locations considered by $s$ and not by $\rho$. For this reason $s(\ell) \geq \rho(\ell)$.

Let $A_{\rho}$ be the subset that maximizes $\rho$ of $\hat{\ell}$ and $A_{s}$ the subset that maximizes $s$ of $\hat{\ell}$. 
And suppose by contradiction that
$$  \min (\min_{\ell \in A_{s}}s_{1}( \ell),\min_{ \ell' \in B^{+}(A_{s})}s_{2}( \ell))) > \min (\min_{\ell \in A_{\rho}}\rho(\phi_{1}, \vec x, t, \ell),\min_{ \ell \in B^{+}(A_{\rho})}\rho(\phi_{2}, \vec x, t, \ell))),$$
but the values considered by $s$ and not by $\rho$ are all equal to $-\infty $ (see line 8 of Alg.~\ref{sur_quant_alg}), so if $A_{s}$ has a location that cannot be considered by $\rho$ it means that 
$$\min (\min_{\ell \in A_{s}}s_{1}( \ell),\min_{ \ell' \in B^{+}(A_{s})}s_{2}( \ell)))=-\infty$$ but minus infinity cannot be bigger than any number. 
\end{proof}

\noindent\textbf{Proposition \ref{prop:complexity}.}\emph{
Let   $d_G$ be the diameter of the graph $G$ and $\mathcal{X} (\ell)$ the fixed point of $\mathcal{X}(i,\ell)$, then $\mathcal{X} (\ell)=\mathcal{X} (d_G+1, \ell)$ for all $\ell \in L$.
}

\begin{proof}
The graph diameter of G is equal to $d_{g}=\max_{\ell,\ell' \in L}d(\ell,\ell')$. Recall that $\mathcal{X} ( d_{g}, \ell) \in \{s_{j}(\ell) \:\big{|}\:  j \in \{ 1,2 \}, \ell \in L \}$ is a finite set of sortable values. 
 At step zero the value of $\mathcal{X}$ is equal to $s_{1}$ in all the locations.
At each next step, the value of $\mathcal{X}(i,\ell)$ depends only on the value of $\mathcal{X}$ in the same location at the previous step and the value of $s_{2}$ and $\mathcal{X}$ in the previous step in the direct neighbours of $\ell$, $\ell' E \ell$.
This means that, after a number of steps equal to the diameter of the graph, i.e., the longest shortest path of the network, $\mathcal{X}$, for all nodes $\ell$, has taken into account the values $s_{1}$ and $s_{2}$ of all the nodes. 
%
%
%
%
%
\end{proof}


\end{document}